\documentclass[aip,jmp,reprint,superscriptaddress,amsmath,amssymb]{revtex4-2}

\usepackage{amsthm}    
\usepackage{mathtools} 
\usepackage{bm}
\usepackage{graphicx}
\usepackage{booktabs}  
\usepackage{array}
\usepackage{xcolor}
\usepackage{subcaption} 
\makeatletter\DeclareCaptionJustification{flushing}{\flushing}\makeatother 
\usepackage{hyperref}

\newcommand{\sw}[2]{\prescript{}{#1}{#2}}

\newcommand{\ie}{\mbox{\it i.e.}}
\newcommand{\eg}{\mbox{\it e.g.}}

\newcommand{\etal}{\mbox{\it et al.}}

\newcommand{\SOthree}{\mathrm{SO}(3)}
\newcommand{\SOtwo}{\mathrm{SO}(2)}
\newcommand{\sphere}{S^2}
\newcommand{\Lsph}[1]{L^2_{#1}(\sphere)}       
\newcommand{\Lrot}{L^2(\SOthree)}
\newcommand{\Mackey}[1]{\mathcal{M}_{#1}}      
\newcommand{\Ind}{\operatorname{Ind}}

\newcommand{\lift}[1]{\Lambda_{#1}}            
\newcommand{\proj}[1]{\Lambda_{#1}^{\dagger}}        
\newcommand{\conv}{\star}                      
\newcommand{\Cop}[2]{\mathcal{C}_{#1\to #2}}   
\newcommand{\twist}{\hbar}                     
\newcommand{\Rot}{\mathcal{R}}                 
\newcommand{\piG}{\pi_{G}}
\newcommand{\pis}[1]{\pi_{#1}}

\newcommand{\dd}{\,\mathrm{d}}
\newcommand{\dmu}{\dd\mu}
\newcommand{\ii}{\mathrm{i}}
\DeclareMathOperator{\Id}{Id}

\theoremstyle{plain}
\newtheorem{theorem}{Theorem}[section]
\newtheorem{lemma}[theorem]{Lemma}
\newtheorem{proposition}[theorem]{Proposition}
\newtheorem{corollary}[theorem]{Corollary}
\theoremstyle{definition}
\newtheorem{definition}[theorem]{Definition}
\theoremstyle{remark}
\newtheorem{remark}[theorem]{Remark}

\makeatletter
\AtBeginDocument{%
  \bibpunct{[}{]}{,}{n}{}{,}%
  \def\NAT@cmprs{\@ne}\let\NAT@sort\@ne
}
\makeatother

\begin{document}

\title{Unifying spherical convolutions through lifting}

\author{Jason D. McEwen}
\email[Corresponding author: ]{jason.mcewen@ucl.ac.uk}
\affiliation{Mullard Space Science Laboratory, University College London, Surrey RH5 6NT, United Kingdom}
\affiliation{The Alan Turing Institute, Euston Road, London NW1 2DB, United Kingdom}

\date{\today}

\begin{abstract}
We develop a unified description of convolution on the sphere by lifting to the rotation group.  A number of definitions of spherical convolution have been proposed, differing in whether the kernel is axisymmetric or directional, whether the output is on the sphere or the rotation group, whether the signals carry spin, and in their computational cost.  No unified treatment exists.  Regarding the sphere as the homogeneous space $\sphere \cong \SOthree/\SOtwo$, we lift a spin $s$ field to the rotation group, convolve on $\SOthree$, and project back to the sphere through the adjoint of the lift for a potentially different spin $s'$.  We recover every rotation-equivariant convolution considered previously through constraints on the kernel, spin choices and inclusion or omission of the lift and projection.  For maps between fields of fixed input and output spin, we prove that directional kernels, output on the sphere and equivariance cannot be achieved simultaneously.  We make explicit that spin fields are sections of associated line bundles, linking the physics and bundle literatures, and express the spin rotation phase as the character of the twist of the section.  We present harmonic-space algorithms for the exact and efficient computation of lifted spherical convolutions of band-limited fields by appealing to sampling theorems on the sphere and rotation group.  All operators are implemented in the open-source \texttt{s2conv} code, and numerical experiments confirm the reductions and the equivariance of the operators to machine precision.  Finally, we illustrate the operators on fields measured on the Earth and on the celestial sky.
\end{abstract}

\maketitle

\section{Introduction}
\label{sec:intro}

In many fields of science and engineering data are measured on the sphere.  Applications where data are defined inherently on the sphere are found in cosmology (\eg\ the cosmic microwave background and its polarisation \cite{zaldarriaga1997all,kamionkowski1997statistics}, and gravitational lensing \cite{wallis2022mapping}), geophysics and planetary science, meteorology and climate science (\eg\ global weather forecasting \cite{bonev2023spherical}), computer vision \cite{cohen2018spherical,esteves2018learning} and biomedical imaging (\eg\ diffusion tensor imaging \cite{goodwin2022how}), to name only a few.  In many of these applications a harmonic analysis of the data is insightful and convolution is often the fundamental operation: it underlies filtering \cite{mcewen2008optimal}, wavelet analysis \cite{mcewen2007fast,wiaux2008exact,leistedt2013s2let,mcewen2015directional} and, more recently, convolutional neural networks and neural operators defined natively on the sphere \cite{cohen2018spherical,esteves2018learning,kondor2018clebsch,cobb2021efficient,esteves2020spin,ocampo2023scalable,bonev2023spherical}.  Moreover, many of the signals of interest are not only scalar but carry spin: tangent vector fields such as horizontal wind are spin $\pm1$ fields, the polarisation of the cosmic microwave background and cosmic shear of weak gravitational lensing are spin $\pm2$ fields, and the second flexion of weak gravitational lensing is a spin $\pm3$ field \cite{bacon2006weak} (Fig.~\ref{fig:spin-fields}); the values of all of these are defined relative to a local frame and acquire a phase when that frame is rotated.

\begin{figure}[t]
  \centering
  \includegraphics[width=\linewidth]{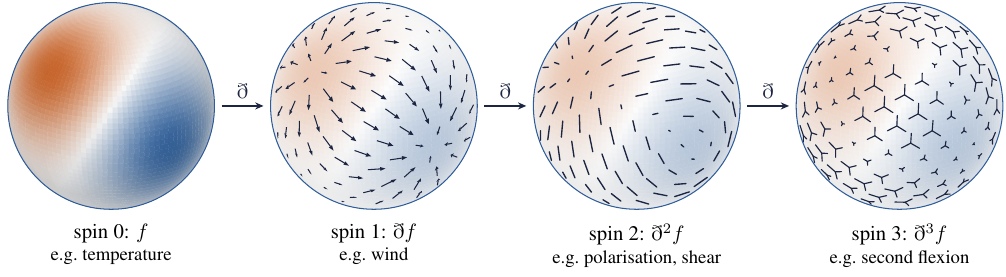}
  \caption{Spin fields on the sphere.  From left to right: a scalar (spin-$0$) field $f$, shown as colour, and the spin-$1$, spin-$2$ and spin-$3$ fields $\eth f$, $\eth^2 f$ and $\eth^3 f$ obtained from it by the spin raising operator, drawn in the tangent plane with glyphs whose symmetry is their spin.  Temperature, wind, polarisation or shear, and the second flexion of weak lensing are physical examples of spin $0$, $1$, $2$ and $3$ respectively.}
  \label{fig:spin-fields}
\end{figure}

A number of definitions of convolution on the sphere have been proposed in the literature, differing in the class of kernel supported, the domain on which the convolved signal is defined, whether spin is accommodated, and the cost of computation.  Axisymmetric convolution (\eg\ \cite{driscoll1994computing,wandelt2001fast,mcewen2007fast,kennedy2011azimuthally,esteves2018learning,bonev2023spherical}) restricts the kernel to be invariant under azimuthal rotation when centred on the North pole, so that the output remains on the sphere and may be computed simply by a product in harmonic space; directional kernels are, however, not supported.  Directional convolution (\eg\ \cite{wandelt2001fast,challinor2000all,mcewen2007fast,wiaux2006fast,wiaux2007complex,mcewen2013directional,cohen2018spherical,ocampo2023scalable}) relaxes this restriction to arbitrary kernels; however, the output is then defined on the rotation group $\SOthree$, with greater computational cost when all azimuthal orders are retained.  Convolution on the rotation group itself \cite{kostelec2008ffts,cohen2018spherical,kondor2018clebsch,cobb2021efficient} takes both field and kernel to be functions on $\SOthree$.  Spin convolutions extend the axisymmetric and directional constructions to spin signals \cite{mcewen2015directional,geller2010spin,geller2011mixed,esteves2020spin,tessore2025euclid,mulder2026spin}, while a further family of anisotropic constructions \cite{kennedy2011azimuthally,sadeghi2012commutative,khalid2012anisotropic,roddy2021sifting} recovers an output on the sphere for directional kernels, but is not in general equivariant under all rotations.  Roddy \& McEwen \cite{roddy2021sifting} compare isotropic, left, directional, commutative anisotropic and sifting constructions by support for directional kernels, spherical output and efficient computation, motivating sifting by the limitations of the preceding constructions in that comparison.  Several families are related by scalar or axisymmetric specialisations, but their differing conventions and representations obscure the full set of connections.  We make those connections explicit and distinguish the original criteria from full rotation equivariance.

We seek a common description in the language of homogeneous spaces.  The sphere is the quotient $\SOthree/\SOtwo$ of the rotation group by the stabiliser of a point, and fields on the sphere carrying spin are sections of the vector bundles associated to this quotient, which may be encoded as functions on the group, the so-called Mackey functions, on which the group acts by the representation induced from the stabiliser \cite{mackey1952induced,folland2016course}.  In this language, equivariant linear maps between fields admit group-convolution representations \cite{kondor2018generalization,cohen2019general}.  Cohen, Geiger \& Weiler \cite{cohen2019general} and Xu \etal\ \cite{xu2022unified} already construct layers by lifting from $G/H$ to $G$, convolving, and returning to $G/H$, with possibly different stabiliser representations at input and output.  Related local-frame formulations are developed in \cite{cohen2019gauge,weiler2021coordinate}.

In this article we show that all of the rotation-equivariant spherical convolutions listed above are special cases of a single construction, which we call the \emph{generalised lifted spherical convolution}: a spin-$s$ field on the sphere is lifted to the rotation group through the irreducible representation $\rho^{(s)}(\gamma) = e^{\ii s\gamma}$ of $\SOtwo$, convolved on $\SOthree$ with an arbitrary kernel $\Psi$, and projected back to the sphere through the adjoint of the lift for a possibly different spin $s'$.  By decomposing the lifted field and the kernel into their Wigner expansions and noting the orthogonality of the Wigner $D$-functions, the construction is found to act in spin spherical harmonic space by multiplication by the conjugate of a single Wigner coefficient $\Psi^{\ell}_{-s',-s}$ of the kernel, up to a sign.  Each existing equivariant convolution is then recovered by constraining the support of the Wigner coefficients of $\Psi$, by the choice of spins $s$ and $s'$, and by whether the lift or the projection is omitted, as we elaborate in Sec.~\ref{sec:reductions}.  The constants in the literature arise from the different ways in which a kernel on the sphere is identified with a kernel on the rotation group.  The adjoint is a bounded equivariant projection on all group functions and selects the prescribed output spin for arbitrary kernels.  Consequently, the axisymmetric, directional, group and spin convolutions and others, and the analysis and synthesis operators of wavelet transforms on the sphere, become instances of one operator whose properties, equivariance in particular, are established once.

The construction also makes explicit what equivariance permits.  We collect the four cases of bounded equivariant maps between spin fields and group functions: scalar multipliers between fixed spins, vectors for spin-to-group and group-to-spin maps, and matrices acting on group multiplicities, with exact boundedness conditions.  Their kernels are generally distributions, with square integrable representatives precisely for Hilbert--Schmidt maps.  Azimuthally band-limited group functions form tuples of spin fields, so a directional convolution with a steerable kernel is a finite tuple of spin-changing convolutions.  A key result of the article is that, for maps between fields of fixed input and output spin, directional kernels, output on the sphere and equivariance cannot be achieved simultaneously.

The geometric description yields two further key results.  We make explicit that the spin-weighted functions of physics \cite{newman1966note,goldberg1967spin}, defined by the phase that they acquire under a rotation of the local frame, are the sections of the line bundles associated to the quotient $\SOthree/\SOtwo$; making this identification explicit links the physics and the bundle literatures.  We also express the phase that accompanies the rotation of a spin field through the geometry of the bundle: the residual-frame phase developed by the author and collaborators \cite{mcewen2015directional} is the character $\rho^{(s)}$ evaluated on the inverse twist of the bundle section, so that the rotation of a spin field is the representation of the rotation group induced from that character.  A forthcoming article \cite{mulder2026spin} applies the construction to spin spherical Fourier neural operators.

The remainder of this article is structured as follows.  Harmonic analysis on the sphere and rotation group is reviewed concisely in Sec.~\ref{sec:prelim}, to present the mathematical preliminaries and conventions required subsequently.  In Sec.~\ref{sec:review} we review the spherical convolutions of the literature, restating each as a numbered definition in a single convention together with its harmonic form.  Fields on homogeneous spaces, the lift and its adjoint, and induced representations are developed in Sec.~\ref{sec:lifting}.  The generalised lifted spherical convolution is defined in Sec.~\ref{sec:convolution}, where its harmonic form, equivariance, kernel constraint and composition rule are derived and the classification of equivariant maps is proved.  In Sec.~\ref{sec:reductions} each rotation-equivariant construction treated here is recovered as a special case, and in Sec.~\ref{sec:beyond} the constructions that fall outside the framework are analysed.  Harmonic-space algorithms, their computational cost, the \texttt{s2conv} code that implements them and the numerical tests of the reductions and of equivariance are presented in Sec.~\ref{sec:computation}, and the operators are illustrated on measured fields of the Earth and of the sky in Sec.~\ref{sec:illustrations}.  Concluding remarks are made in Sec.~\ref{sec:discussion}.  The identities for Wigner functions and spin harmonics used in the proofs are collected in Appendix~\ref{app:conventions}, and the proofs in Appendix~\ref{app:proofs}.

\section{Harmonic analysis on the sphere and rotation group}
\label{sec:prelim}

We first describe the sphere as a quotient of the rotation group, before reviewing scalar and spin spherical harmonics, the Wigner $D$-functions and the relation between them, following the conventions of \cite{mcewen2011novel,mcewen2015directional,price2024differentiable}.  Spin fields and their rotation are then discussed.

\subsection{The sphere and the rotation group}
\label{sec:prelim:sphere}

We consider the space of square integrable functions $L^2(\sphere)$ on the two-sphere $\sphere = \{\omega \in \mathbb{R}^3 : \lvert \omega \rvert = 1\}$, parameterised by spherical coordinates $\omega = (\theta,\varphi)$ with colatitude $\theta \in [0,\pi]$ and longitude $\varphi \in [0,2\pi)$, with the inner product of $f, g \in L^2(\sphere)$ defined by $\langle f, g \rangle = \int_{\sphere} f(\omega)\, g^{*}(\omega) \dmu(\omega)$, where $\dmu(\omega) = \sin\theta \dd\theta \dd\varphi$ is the usual invariant measure on the sphere and $g^{*}$ denotes complex conjugation.  We write $\eta = (0,0)$ for the North pole.

Rotations $\rho \in \SOthree$ are parameterised by $zyz$ Euler angles $\rho = (\alpha,\beta,\gamma)$, with $\alpha \in [0,2\pi)$, $\beta \in [0,\pi]$ and $\gamma \in [0,2\pi)$, corresponding to the rotation matrix
\begin{equation}
  R(\alpha,\beta,\gamma) = R_z(\alpha)\, R_y(\beta)\, R_z(\gamma) .
  \label{eq:euler}
\end{equation}
We adopt the $zyz$ Euler convention corresponding to the rotation of a physical body in a \emph{fixed} coordinate system about the $z$, $y$ and $z$ axes by $\gamma$, $\beta$ and $\alpha$, respectively.  The inner product of $F, G \in \Lrot$ is defined by $\langle F, G\rangle = \int_{\SOthree} F(\rho)\,G^{*}(\rho) \dmu(\rho)$, where $\dmu(\rho) = \sin\beta \dd\alpha \dd\beta \dd\gamma$ is the usual invariant measure on the rotation group, of total mass $8\pi^2$, which we do not normalise.  Note that $\langle \cdot , \cdot \rangle$ is used to denote inner products over both the sphere and the rotation group (the case adopted can be inferred from the context).  The rotation of a function on the sphere is defined by $(\Rot_\rho f)(\omega) = f(R_\rho^{-1}\omega)$, where $R_\rho$ is the rotation matrix corresponding to $\rho$, and for a point $\omega = (\theta,\varphi)$ we adopt the shorthand notation $\Rot_\omega = \Rot_{(\varphi,\theta,0)}$.

The map $p : \SOthree \to \sphere$, $p(\rho) = R_\rho\, \eta$, sends the rotation $R(\alpha,\beta,\gamma)$ to the point with colatitude $\beta$ and longitude $\alpha$, and is invariant under right multiplication by the subgroup $\SOtwo = \{R_z(\gamma)\}$ of rotations about the $z$ axis, which is the stabiliser of the North pole.  Hence $\sphere \cong \SOthree/\SOtwo$, the coset $\rho\,\SOtwo$ corresponding to the point $\rho\eta$.  Throughout, a point of the sphere is written $(\theta,\varphi)$ when it is considered in its own right and $(\beta,\alpha)$ when it arises as the image of the North pole under a rotation with Euler angles $(\alpha,\beta,\gamma)$; the two descriptions denote the same point.

\subsection{Spherical harmonics, spin spherical harmonics and Wigner functions}
\label{sec:prelim:harmonics}

The spherical harmonics $Y_{\ell m} : \sphere \to \mathbb{C}$ form the canonical orthonormal basis of $L^2(\sphere)$ and are defined by
\begin{equation}
  Y_{\ell m}(\theta,\varphi) = \sqrt{\frac{2\ell+1}{4\pi}\frac{(\ell-m)!}{(\ell+m)!}}\; P^{m}_{\ell}(\cos\theta)\, e^{\ii m \varphi},
  \qquad \ell \in \mathbb{N},\ \lvert m \rvert \le \ell ,
  \label{eq:Ylm}
\end{equation}
where $P^m_\ell$ are the associated Legendre functions, for natural degree $\ell$ and integer order $m$.  We adopt the Condon--Shortley phase convention, with the $(-1)^m$ phase factor included in the definition of the associated Legendre functions, to ensure that the conjugate symmetry relation $Y^{*}_{\ell m} = (-1)^m Y_{\ell,-m}$ holds.  Due to the orthogonality and completeness of the spherical harmonics, any $f \in L^2(\sphere)$ may be represented by its spherical harmonic expansion $f = \sum_{\ell m} f_{\ell m} Y_{\ell m}$, where the spherical harmonic coefficients are given by the usual projection onto each basis function, $f_{\ell m} = \langle f, Y_{\ell m}\rangle$.

Spin spherical harmonics $\sw{s}{Y}_{\ell m}$, for integer spin $s$ with $\lvert s\rvert \le \ell$, are defined through the action of the spin raising and lowering operators $\eth$ and $\bar\eth$ on the scalar spherical harmonics \cite{newman1966note,goldberg1967spin},
\begin{equation}
  \eth = -\sin^{s}\theta \Big( \partial_\theta + \frac{\ii\, \partial_\varphi}{\sin\theta} \Big) \sin^{-s}\theta ,
  \qquad
  \bar\eth = -\sin^{-s}\theta \Big( \partial_\theta - \frac{\ii\, \partial_\varphi}{\sin\theta} \Big) \sin^{s}\theta ,
  \label{eq:eth-def}
\end{equation}
which act on a spin-$s$ quantity, by
\begin{equation}
  \sw{s}{Y}_{\ell m} =
  \begin{cases}
    \sqrt{\dfrac{(\ell-s)!}{(\ell+s)!}}\; \eth^{s}\, Y_{\ell m} , & 0 \le s \le \ell , \\[2ex]
    (-1)^{s}\sqrt{\dfrac{(\ell+s)!}{(\ell-s)!}}\; \bar\eth^{-s}\, Y_{\ell m} , & -\ell \le s \le 0 ,
  \end{cases}
  \label{eq:sY-def}
\end{equation}
so that
\begin{equation}
  \eth\, \sw{s}{Y}_{\ell m} = \sqrt{(\ell-s)(\ell+s+1)}\; \sw{s+1}{Y}_{\ell m},
  \qquad
  \bar\eth\, \sw{s}{Y}_{\ell m} = -\sqrt{(\ell+s)(\ell-s+1)}\; \sw{s-1}{Y}_{\ell m} .
  \label{eq:eth}
\end{equation}
For each fixed $s$ the spin spherical harmonics form an orthonormal basis of the space $\Lsph{s}$ of square integrable spin-$s$ fields on the sphere (Sec.~\ref{sec:prelim:spin}), reducing to the scalar spherical harmonics for $s = 0$.  Any $\sw{s}{f} \in \Lsph{s}$ may thus be represented by its spin spherical harmonic expansion $\sw{s}{f} = \sum_{\ell m} \sw{s}{f}_{\ell m}\, \sw{s}{Y}_{\ell m}$, with coefficients $\sw{s}{f}_{\ell m} = \langle \sw{s}{f}, \sw{s}{Y}_{\ell m}\rangle$.  Note that $\sw{s}{Y}_{\ell m}$ exists only for $\ell \ge \lvert s \rvert$, so that a spin-$s$ field has no harmonic content below degree $\lvert s\rvert$.

The Wigner $D$-functions $D^{\ell}_{mn} : \SOthree \to \mathbb{C}$, for $\lvert m\rvert,\lvert n\rvert \le \ell$, are the matrix elements of the irreducible unitary representations of the rotation group \cite{varshalovich1988quantum}, and may be decomposed as
\begin{equation}
  D^{\ell}_{mn}(\alpha,\beta,\gamma) = e^{-\ii m\alpha}\, d^{\ell}_{mn}(\beta)\, e^{-\ii n\gamma},
  \label{eq:wignerD}
\end{equation}
where the Wigner $d$-functions $d^{\ell}_{mn}$ are real.  Since the $D^\ell$ are representation matrices they satisfy $D^{\ell}(\rho\rho') = D^{\ell}(\rho)D^{\ell}(\rho')$ and $D^{\ell}(\rho^{-1}) = D^{\ell}(\rho)^{\dagger}$, and the orthogonality relation of their matrix elements reads $\langle D^{\ell}_{mn}, D^{\ell'}_{m'n'}\rangle = \frac{8\pi^2}{2\ell+1}\delta_{\ell\ell'}\delta_{mm'}\delta_{nn'}$, where $\delta_{ij}$ is the Kronecker delta symbol.  Together with completeness from the Peter--Weyl theorem \cite{folland2016course}, this makes the conjugate $D$-functions an orthogonal basis of $\Lrot$, and any $F \in \Lrot$ admits the Wigner expansion
\begin{equation}
  F(\rho) = \sum_{\ell=0}^{\infty} \frac{2\ell+1}{8\pi^2} \sum_{m,n=-\ell}^{\ell} F^{\ell}_{mn}\, D^{\ell *}_{mn}(\rho),
  \qquad
  F^{\ell}_{mn} = \langle F, D^{\ell *}_{mn} \rangle = \int_{\SOthree} F(\rho)\, D^{\ell}_{mn}(\rho) \dmu(\rho),
  \label{eq:wigner-transform}
\end{equation}
where we adopt the conjugate $D$-functions as basis elements since this convention simplifies connections to convolutions on the sphere.  Parseval's identity reads $\lVert F \rVert^2 = \sum_{\ell} \frac{2\ell+1}{8\pi^2}\sum_{mn} \lvert F^{\ell}_{mn}\rvert^2$.  The Wigner $D$-functions are related to the spin spherical harmonics by \cite{goldberg1967spin,mcewen2011novel}
\begin{equation}
  D^{\ell}_{mn}(\alpha,\beta,\gamma) = (-1)^n \sqrt{\frac{4\pi}{2\ell+1}}\; \sw{-n}{Y}^{*}_{\ell m}(\beta,\alpha)\; e^{-\ii n\gamma},
  \label{eq:D-sY}
\end{equation}
or equivalently
\begin{equation}
  e^{-\ii s\gamma}\, \sw{s}{Y}_{\ell m}(\beta,\alpha) = (-1)^s \sqrt{\frac{2\ell+1}{4\pi}}\; D^{\ell *}_{m,-s}(\alpha,\beta,\gamma) .
  \label{eq:sY-D}
\end{equation}
Consequently, the spin spherical harmonics of spin $s$ are, up to normalisation and conjugation, the column $n = -s$ of the Wigner $D$-functions restricted to $\gamma = 0$.  This relation underlies the lift developed in Sec.~\ref{sec:lifting}: a spin-$s$ field lifted to the rotation group occupies a single column of Wigner space.

Under rotation the spin spherical harmonic coefficients of a field transform with the same Wigner matrices for every spin \cite{mcewen2015directional},
\begin{equation}
  (\Rot_\rho\, \sw{s}{f})_{\ell m} = \sum_{n=-\ell}^{\ell} D^{\ell}_{mn}(\rho)\; \sw{s}{f}_{\ell n},
  \label{eq:rotation-coeffs}
\end{equation}
where the rotation $\Rot_\rho$ of a spin field is defined in \eqref{eq:spin-rotation} below.  Similarly, the Wigner coefficients of a left-translated function $F(\rho^{-1}\,\cdot)$ read $\sum_{m'} D^{\ell}_{mm'}(\rho) F^{\ell}_{m'n}$, \ie\ left translation acts on the first index of the Wigner coefficients and leaves the second untouched.

Throughout, we say that a field on the sphere is band-limited at $L$ if its harmonic coefficients vanish for all $\ell \ge L$, and that a function on $\SOthree$ is in addition azimuthally band-limited at $N$ if $F^{\ell}_{mn} = 0$ for all $\lvert n \rvert \ge N$.  The harmonic band-limit plays no role in the theory developed in Secs.~\ref{sec:lifting}--\ref{sec:beyond}, which is formulated in $L^2$; it enters only in Sec.~\ref{sec:computation}, where it permits all transforms to be computed exactly and efficiently by appealing to sampling theorems on the sphere \cite{mcewen2011novel} and rotation group \cite{mcewen2015novel}, as implemented in \cite{price2024differentiable}.  The azimuthal band-limit, in contrast, is structural: it is what makes a function on the rotation group a finite tuple of spin fields (Corollary~\ref{cor:fibre-decomposition} and Proposition~\ref{prop:band-limited}).

\subsection{Spin fields}
\label{sec:prelim:spin}

A spin-$s$ field $\sw{s}{f}$ on the sphere is a complex function whose value at $\omega$ is defined relative to a choice of orthonormal frame $(e_\theta, e_\varphi)$ in the tangent plane and transforms as
\begin{equation}
  \sw{s}{f}'(\omega) = e^{-\ii s\chi}\, \sw{s}{f}(\omega)
  \label{eq:spin-phase}
\end{equation}
under a right-handed rotation of the frame by $\chi$ \cite{newman1966note,goldberg1967spin,zaldarriaga1997all}.\footnote{The sign convention adopted for the argument of the complex exponential differs from the original definition \cite{newman1966note} but is identical to the convention used in the context of the polarisation of the cosmic microwave background \cite{zaldarriaga1997all,kamionkowski1997statistics}.}  It is important to note that the rotation considered here is \emph{not} a global rotation on the sphere but rather a rotation by $\chi$ in the tangent plane centred on $\omega$.  Scalar functions such as temperature are spin $0$; the complex representation $v_\theta \pm \ii v_\varphi$ of a tangent vector field such as wind has spin $\pm 1$; the Stokes combination $Q \pm \ii U$ of linear polarisation, and likewise cosmic shear, has spin $\pm 2$ \cite{zaldarriaga1997all}; and the second flexion of weak lensing, the spin-raised gradient $\eth\gamma$ of the shear, which bends images into three-fold arcs, has spin $\pm 3$ \cite{bacon2006weak}.  In Fig.~\ref{fig:spin-fields} we plot a scalar field together with the spin-$1$, spin-$2$ and spin-$3$ fields obtained from it by repeated application of the spin raising operator $\eth$.

Figure~\ref{fig:fields-overview} shows seven such fields measured on the Earth and on the sky, which span spins $0$, $1$ and $2$ and to which we return in Sec.~\ref{sec:illustrations}.  The relief of the Earth is a scalar field.  The deflection of the vertical, the angle by which a plumb line departs from the normal to the reference ellipsoid, the horizontal part of the geomagnetic field, which aligns a compass needle, the wind near the surface and the currents at the surface of the ocean are tangent vector fields, of spin $\pm1$, drawn as arrows.  The polarisation of the cosmic microwave background and the shear of weak gravitational lensing, the stretching of the images of distant galaxies, are fields of spin $\pm2$, drawn as headless sticks, since a polarisation angle or a direction of stretching is unchanged by a half turn.  In each case the frame law \eqref{eq:spin-phase} is a physical statement: a bearing, a polarisation angle or a direction of stretching is a number only once a reference direction has been chosen at each point, and it changes by a known phase when that choice changes.  The constructions of this article act on such fields without a global choice of frame, which no choice of coordinates on the sphere can provide.

\begin{figure}[tp]
  \centering
  \begin{subfigure}[t]{0.32\textwidth}
    \centering
    \includegraphics[width=\linewidth]{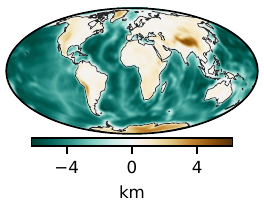}
    \caption{Topography, in km.}
  \end{subfigure}\hfill
  \begin{subfigure}[t]{0.32\textwidth}
    \centering
    \includegraphics[width=\linewidth]{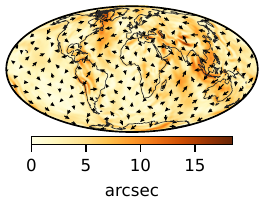}
    \caption{Deflection of the vertical, in arcseconds.}
  \end{subfigure}\hfill
  \begin{subfigure}[t]{0.32\textwidth}
    \centering
    \includegraphics[width=\linewidth]{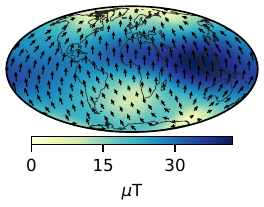}
    \caption{Horizontal geomagnetic field, in $\mu$T.}
  \end{subfigure}\\[1ex]
  \begin{subfigure}[t]{0.32\textwidth}
    \centering
    \includegraphics[width=\linewidth]{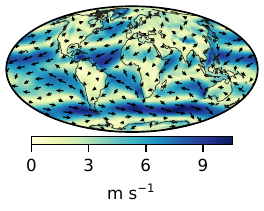}
    \caption{Near-surface wind, in m\,s$^{-1}$.}
  \end{subfigure}\hspace{0.04\textwidth}
  \begin{subfigure}[t]{0.32\textwidth}
    \centering
    \includegraphics[width=\linewidth]{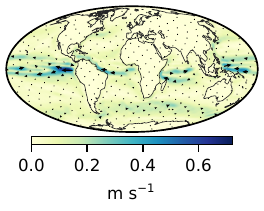}
    \caption{Surface currents of the ocean, in m\,s$^{-1}$.}
  \end{subfigure}\\[1ex]
  \begin{subfigure}[t]{0.32\textwidth}
    \centering
    \includegraphics[width=\linewidth]{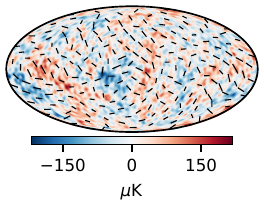}
    \caption{Temperature of the cosmic microwave background, in $\mu$K, with its polarisation.}
  \end{subfigure}\hspace{0.04\textwidth}
  \begin{subfigure}[t]{0.32\textwidth}
    \centering
    \includegraphics[width=\linewidth]{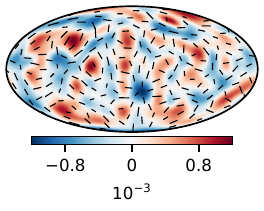}
    \caption{Simulated lensing convergence, with the shear.}
  \end{subfigure}
  \caption{Fields of spin $0$, $1$ and $2$ on the Earth and on the celestial sky (the examples of Sec.~\ref{sec:illustrations}) in the Mollweide projection.  The relief (a) is a scalar field.  The deflection of the vertical (b), the small angle by which the direction of gravity differs from the normal to the reference ellipsoid because of the uneven distribution of mass within the Earth, the horizontal geomagnetic field (c), which aligns a compass needle, the wind (d) and the currents (e) are spin-$1$ fields, with the modulus shown as colour and the direction drawn as arrows.  The polarisation of the cosmic microwave background (f) and the shear of weak gravitational lensing (g) are spin-$2$ fields, drawn as headless sticks over the temperature anisotropy and the convergence, and are smoothed so that the sticks follow their large-scale patterns.}
  \label{fig:fields-overview}
\end{figure}

Because the frame is fixed by the coordinate lines, a global rotation of a spin field must be accompanied by a frame correction.  The rotation of a spin field is defined by \cite{mcewen2015directional}
\begin{equation}
  (\Rot_\rho\, \sw{s}{f})(\omega) = e^{-\ii s \vartheta}\; \sw{s}{f}(R_\rho^{-1}\omega),
  \label{eq:spin-rotation}
\end{equation}
where the angle $\vartheta \in [0,2\pi)$ is the third Euler angle of the rotation $R_\rho^{-1} R_\omega$, with $R_\omega = R(\varphi,\theta,0)$ the rotation matrix corresponding to the Euler angles $(\varphi,\theta,0)$.  For the scalar setting where $s = 0$, \eqref{eq:spin-rotation} reduces to the usual rotation operator defined solely through a rotation of the coordinate system.  The exponential factor compensates the residual rotation of the local frame so that the field retains its spin, as described in \cite{mcewen2015directional,boyle2016how}, and yields the harmonic law \eqref{eq:rotation-coeffs}.  We write $\Rot_\rho$ for the rotation of scalar and spin fields alike, the spin being that of the field on which it acts; in Sec.~\ref{sec:lifting:induced} we show that $\Rot_\rho$ is the representation $\pis{s}(R_\rho)$ of the rotation group induced from the character $\rho^{(s)}$, with the phase evaluated on the inverse twist.  The Euler-angle description holds away from the coordinate singularities, hence almost everywhere.

\section{Spherical convolutions in the literature}
\label{sec:review}

In this section we review the definitions of convolution on the sphere proposed in the literature, restating each as a numbered definition in the single convention of Sec.~\ref{sec:prelim}, together with its harmonic form.  We first recall the Euclidean convolution and the difficulty of translating it to the sphere, before reviewing axisymmetric convolutions, directional convolutions, convolution on the rotation group, spin convolutions, anisotropic convolutions whose output remains on the sphere, and finally the projections from the rotation group back to the sphere that arise in wavelet synthesis.  The definitions are summarised in Table~\ref{tab:review}.

\subsection{Convolution on the plane and translation on the sphere}
\label{sec:review:plane}

On the plane $\mathbb{R}^2$ the convolution $(f \star g)(x) = \int_{\mathbb{R}^2} f(y)\, g(x-y) \dd y$ is a bilinear pairing of $f$ with a translated and reflected $g$, or an inner product when $g$ is real.  It commutes with translations, it is commutative, and every translation-equivariant bounded linear operator is of this form, with a distributional kernel in general.  On the sphere translations are replaced by rotations.  However, as Kennedy \etal\ \cite{kennedy2011azimuthally} emphasise, the rotation group is not the analogue of the plane: a rotation carries three parameters while a point on the sphere carries two, and a kernel translated to a point by a rotation acquires an orientation that has no counterpart in the Euclidean setting.  Every definition of spherical convolution must therefore decide how to treat this third degree of freedom, namely whether to integrate it out, to retain it in the output, or to fix it by a rule.  The definitions below treat this orientation in different ways.  Roddy \& McEwen \cite{roddy2021sifting} compare them by three criteria, namely support for directional kernels, an output that remains on the sphere, and computational cost no greater than that of a fast spherical harmonic transform, to which we add equivariance under $\SOthree$, \ie\ that convolution commute with the rotation of the signal.

\subsection{Axisymmetric convolutions with output on the sphere}
\label{sec:review:axisymmetric}

The simplest approach is to restrict the kernel to be axisymmetric, so that its orientation is immaterial.

\begin{definition}[Axisymmetric convolution]
\label{def:axisymmetric}
For $f,\psi\in L^2(\sphere)$ with axisymmetric kernel $\psi$, \ie\ $\psi_{\ell m}=\psi_{\ell0}\delta_{m0}$, the axisymmetric convolution is
\begin{equation}
  (f \odot \psi)(\omega) = \langle f, \Rot_\omega \psi \rangle = \int_{\sphere} f(\omega')\, (\Rot_\omega \psi)^{*}(\omega') \dmu(\omega'),
  \qquad \Rot_\omega = \Rot_{(\varphi,\theta,0)} .
  \label{eq:axisymmetric-conv}
\end{equation}
\end{definition}
For an axisymmetric kernel, \ie\ a kernel invariant under azimuthal rotation when centred on the North pole so that $\psi_{\ell m} = \psi_{\ell 0}\delta_{m0}$, the harmonic form of \eqref{eq:axisymmetric-conv} reads \cite{mcewen2007fast,roddy2021sifting}
\begin{equation}
  (f \odot \psi)_{\ell m} = \sqrt{\frac{4\pi}{2\ell+1}}\; f_{\ell m}\, \psi^{*}_{\ell 0} ,
  \label{eq:axisymmetric-harmonic}
\end{equation}
which follows directly from \eqref{eq:rotation-coeffs} and \eqref{eq:D-sY}.  The output remains on the sphere and the convolution is simply a product in harmonic space, hence it may be computed exactly and efficiently at the cost of two spherical harmonic transforms.  This operator underlies both the axisymmetric continuous wavelets of \cite{mcewen2007fast}, constructed by stereographic dilation, and the needlet and scale-discretised constructions of \cite{narcowich2006localized,leistedt2013s2let}.
Driscoll \& Healy \cite{driscoll1994computing} instead integrate over the full rotation group.

\begin{definition}[Left convolution of Driscoll \& Healy]
\label{def:driscoll-healy}
For $f, h \in L^2(\sphere)$, the left convolution is
\begin{equation}
  (f \star_{\mathrm{L}} h)(\omega) = \int_{\SOthree} f(\rho\eta)\, h(R_\rho^{-1}\omega) \dmu(\rho) .
  \label{eq:driscoll-healy}
\end{equation}
\end{definition}
Its harmonic form reads \cite{driscoll1994computing,kennedy2011azimuthally,esteves2018learning}
\begin{equation}
  (f \star_{\mathrm{L}} h)_{\ell m} = 2\pi\sqrt{\frac{4\pi}{2\ell+1}}\; f_{\ell m}\, h_{\ell 0} .
  \label{eq:driscoll-healy-harmonic}
\end{equation}
Note that the kernel $h$ may be arbitrary but only its zonal part $h_{\ell 0}$ enters, since the integral over the third Euler angle averages $h$ about its own pole (Theorem~3 of \cite{kennedy2011azimuthally}); note also that the pairing is bilinear, with no complex conjugate, and that Kennedy \etal\ \cite{kennedy2011azimuthally} divide \eqref{eq:driscoll-healy} by $2\pi$.  The left convolution is the convolution of the spherical convolutional networks of Esteves \etal\ \cite{esteves2018learning}, who observe that it ``only makes sense to learn zonal filters'', and of the spherical Fourier neural operator of Bonev \etal\ \cite{bonev2023spherical}.

\subsection{Directional convolutions with output on the rotation group}
\label{sec:review:directional}

The second approach retains the orientation of the kernel in the output.

\begin{definition}[Directional convolution]
\label{def:directional}
For $f, \psi \in L^2(\sphere)$, the directional convolution is
\begin{equation}
  (f \circledast \psi)(\rho) = \langle f, \Rot_\rho \psi \rangle = \int_{\sphere} f(\omega)\, (\Rot_\rho \psi)^{*}(\omega) \dmu(\omega), \qquad \rho \in \SOthree .
  \label{eq:directional-conv}
\end{equation}
\end{definition}
By decomposing $f$ and $\psi$ into their spherical harmonic expansions and noting the orthogonality of the spherical harmonics and their rotation by \eqref{eq:rotation-coeffs}, the directional convolution may be written
\begin{equation}
  (f \circledast \psi)(\rho) = \sum_{\ell m n} f_{\ell m}\, \psi^{*}_{\ell n}\, D^{\ell *}_{mn}(\rho),
  \qquad
  (f \circledast \psi)^{\ell}_{mn} = \frac{8\pi^2}{2\ell+1}\; f_{\ell m}\, \psi^{*}_{\ell n} ,
  \label{eq:directional-harmonic}
\end{equation}
where the second expression follows by comparison with \eqref{eq:wigner-transform}.  Consequently, the output is a function on $\SOthree$ whose Wigner coefficients are the outer product of the harmonic coefficients of signal and kernel \cite{wandelt2001fast,mcewen2007fast,wiaux2007complex,mcewen2013directional,cohen2018spherical}.  Wandelt \& G\'orski \cite{wandelt2001fast} parameterise the rotation by the scan geometry of a cosmic microwave background experiment and evaluate \eqref{eq:directional-harmonic} on a torus, and Challinor \etal\ \cite{challinor2000all} give the polarised, \ie\ spin-2, version.  Fast algorithms by factorisation of rotations \cite{wandelt2001fast,mcewen2007fast,mcewen2013directional} cost $\mathcal{O}(L^4)$ in general, or $\mathcal{O}(NL^3)$ for kernels azimuthally band-limited at $N$, which is the case of steerable kernels \cite{wiaux2006fast}.  Cohen \etal\ \cite{cohen2018spherical} write the same operator for real signals with the Haar measure normalised to unity, and the discrete--continuous convolution of Ocampo \etal\ \cite{ocampo2023scalable} is a quadrature of \eqref{eq:directional-conv} on a sampling of the sphere.  Directional convolution supports directional kernels but yields an output that does not remain on the sphere, at greater computational cost than a spherical harmonic transform.

\subsection{Convolution on the rotation group}
\label{sec:review:group}

Once the output is a function on $\SOthree$, subsequent convolutions take place on the rotation group itself.

\begin{definition}[Group convolution on $\SOthree$]
\label{def:group-conv}
For $\Psi, F \in \Lrot$, the group convolution is
\begin{equation}
  (\Psi \conv F)(\rho) = \langle F, \piG(\rho)\Psi \rangle = \int_{\SOthree} \Psi^{*}(\rho^{-1}\rho')\, F(\rho') \dmu(\rho'),
  \qquad (\piG(\rho)\Psi)(\rho') = \Psi(\rho^{-1}\rho') ,
  \label{eq:group-conv}
\end{equation}
where $\piG$ is the left regular representation of $\SOthree$.
\end{definition}
In Wigner space the group convolution is a matrix product for each degree \cite{kostelec2008ffts,cohen2018spherical,cobb2021efficient},
\begin{equation}
  (\Psi \conv F)^{\ell}_{mn} = \sum_{m'=-\ell}^{\ell} F^{\ell}_{mm'}\, \Psi^{\ell *}_{nm'} ,
  \label{eq:group-conv-harmonic}
\end{equation}
as we show in Appendix~\ref{app:proofs}.  Definition~\ref{def:group-conv} is written in correlation form, with the kernel conjugated and translated by the left regular representation, so that \eqref{eq:group-conv} is manifestly the inner product of $F$ with a rotated copy of $\Psi$; this is the $\SOthree$ layer of Cohen \etal\ \cite{cohen2018spherical}, the correlation of Kostelec \& Rockmore \cite{kostelec2008ffts}, the linear part of the Clebsch--Gordan networks of Kondor \etal\ \cite{kondor2018clebsch} and, with channel mixing, the generalised convolution of Cobb \etal\ \cite{cobb2021efficient}.  Our convention differs from that of Xu \etal\ \cite{xu2022unified} and Kondor \& Trivedi \cite{kondor2018generalization}, who write $\int \kappa(\rho'^{-1}\rho)F(\rho')\dmu(\rho')$, by $\kappa(\rho) = \Psi^{*}(\rho^{-1})$, and from that of Cohen, Geiger \& Weiler \cite{cohen2019general}, who write $\int\kappa(\rho^{-1}\rho')F(\rho')\dmu(\rho')$, by conjugation alone.

\subsection{Spin convolutions}
\label{sec:review:spin}

We follow the directional spin wavelet transform of McEwen \etal\ \cite{mcewen2015directional}, which defines convolutions of spin fields by the same three constructions as in the scalar case: a directional convolution with output on the rotation group, an axisymmetric convolution with output on the sphere, and a synthesis integral back to the sphere.  All three rest on the rotation \eqref{eq:spin-rotation} of a spin field.

\paragraph{Rotation of spin fields.}
A coordinate rotation alone, $\sw{s}{f} \mapsto \sw{s}{f}(R_\rho^{-1}\,\cdot)$, does not map a spin-$s$ field to a spin-$s$ field, because the coordinate frame at $R_\rho^{-1}\omega$ is not the image under $R_\rho^{-1}$ of the coordinate frame at $\omega$.  The operator \eqref{eq:spin-rotation} corrects this with the phase $e^{-\ii s\vartheta}$, where $\vartheta$ is the third Euler angle of $R_\rho^{-1}R_\omega$.  In \cite{mcewen2015directional} the phase is justified in harmonic space: from the composition law $D^{\ell}(\rho_1\rho_2) = D^{\ell}(\rho_1)D^{\ell}(\rho_2)$ and \eqref{eq:sY-D} one finds
\begin{equation}
  (\Rot_\rho\, \sw{s}{Y}_{\ell m})(\omega) = \sum_{n=-\ell}^{\ell} D^{\ell}_{nm}(\rho)\, \sw{s}{Y}_{\ell n}(\omega),
  \label{eq:spin-harmonic-rotation}
\end{equation}
equivalently \eqref{eq:rotation-coeffs}, so that $\Rot_\rho$ maps spin-$s$ harmonics to spin-$s$ harmonics with the same Wigner matrices as in the scalar case, which is what makes the constructions below consistent.  The residual frame rotation is described in \cite{mcewen2015directional,boyle2016how}.  Here we express it through the section twist, connecting the adopted spin convention to the induced representation of Cohen, Geiger \& Weiler \cite{cohen2019general} (Proposition~\ref{prop:intertwining} and Remark~\ref{rem:phase-interpretation}).  The lift then intertwines $\Rot_\rho$ with left translation on the rotation group, giving \eqref{eq:spin-harmonic-rotation}.

\begin{definition}[Directional spin convolution]
\label{def:spin-directional}
For $\sw{s}{f}, \sw{s}{\psi} \in \Lsph{s}$, the directional spin convolution is
\begin{equation}
  (\sw{s}{f} \circledast \sw{s}{\psi})(\rho) = \langle \sw{s}{f}, \Rot_\rho\, \sw{s}{\psi} \rangle
  = \int_{\sphere} \sw{s}{f}(\omega)\, (\Rot_\rho\, \sw{s}{\psi})^{*}(\omega) \dmu(\omega),
  \qquad \rho \in \SOthree .
  \label{eq:spin-directional}
\end{equation}
\end{definition}
These are the wavelet coefficients of \cite{mcewen2015directional} for a spin wavelet $\sw{s}{\psi}$.  By decomposing both fields into their spin spherical harmonic expansions and noting \eqref{eq:spin-harmonic-rotation}, the directional spin convolution may be written
\begin{equation}
  (\sw{s}{f} \circledast \sw{s}{\psi})(\rho) = \sum_{\ell m n} \sw{s}{f}_{\ell m}\, \sw{s}{\psi}^{*}_{\ell n}\, D^{\ell *}_{mn}(\rho),
  \qquad
  (\sw{s}{f} \circledast \sw{s}{\psi})^{\ell}_{mn} = \frac{8\pi^2}{2\ell+1}\; \sw{s}{f}_{\ell m}\, \sw{s}{\psi}^{*}_{\ell n} ,
  \label{eq:spin-directional-harmonic}
\end{equation}
the spin generalisation of \eqref{eq:directional-harmonic}.  From \eqref{eq:spin-directional-harmonic} it is apparent that the output is a scalar function on $\SOthree$ whatever the spin of the input: the third Euler angle records the orientation of the kernel about each point, and the spin phases of field and kernel cancel in the inner product.  The transform may thus be computed via an inverse Wigner transform of the coefficients \eqref{eq:spin-directional-harmonic}, as we elaborate in Sec.~\ref{sec:computation}.

\begin{definition}[Axisymmetric spin convolution]
\label{def:spin-axisymmetric}
For $\sw{s}{f} \in \Lsph{s}$ and an axisymmetric kernel $\sw{s}{\Phi}$, \ie\ $\Rot_{(0,0,\gamma)}\,\sw{s}{\Phi} = \sw{s}{\Phi}$ or equivalently $\sw{s}{\Phi}_{\ell m} = \sw{s}{\Phi}_{\ell 0}\delta_{m0}$, the axisymmetric spin convolution is
\begin{equation}
  (\sw{s}{f} \odot \sw{s}{\Phi})(\omega) = \langle \sw{s}{f}, \Rot_\omega\, \sw{s}{\Phi} \rangle, \qquad \Rot_\omega = \Rot_{(\varphi,\theta,0)} .
  \label{eq:spin-axisymmetric}
\end{equation}
\end{definition}
These are the scaling coefficients of \cite{mcewen2015directional}, whose harmonic form reads
\begin{equation}
  (\sw{s}{f} \odot \sw{s}{\Phi})_{\ell m} = \sqrt{\frac{4\pi}{2\ell+1}}\; \sw{s}{f}_{\ell m}\, \sw{s}{\Phi}^{*}_{\ell 0} ,
  \label{eq:spin-axisymmetric-harmonic}
\end{equation}
where the coefficients on the left are those of the scalar spherical harmonics: the output is a scalar field on the sphere although the input has spin $s$.  
The field is recovered from its wavelet and scaling coefficients $W^{\psi^{(j)}} = \sw{s}{f} \circledast \sw{s}{\psi}^{(j)}$ and $W^{\Phi} = \sw{s}{f} \odot \sw{s}{\Phi}$ by the synthesis \cite{mcewen2015directional}
\begin{equation}
  \sw{s}{f}(\omega) = \int_{\sphere} W^{\Phi}(\omega')\, (\Rot_{\omega'}\, \sw{s}{\Phi})(\omega) \dmu(\omega')
  + \sum_{j} \int_{\SOthree} W^{\psi^{(j)}}(\rho)\, (\Rot_\rho\, \sw{s}{\psi}^{(j)})(\omega) \dmu(\rho),
  \label{eq:spin-synthesis}
\end{equation}
provided the admissibility condition
\begin{equation}
  \frac{4\pi}{2\ell+1}\, \bigl\lvert \sw{s}{\Phi}_{\ell 0} \bigr\rvert^2 + \frac{8\pi^2}{2\ell+1} \sum_{j} \sum_{n=-\ell}^{\ell} \bigl\lvert \sw{s}{\psi}^{(j)}_{\ell n} \bigr\rvert^2 = 1
  \qquad \text{for all } \ell\ge\lvert s\rvert
  \label{eq:spin-admissibility}
\end{equation}
holds.  The second integral in \eqref{eq:spin-synthesis} is the spin version of the synthesis operator of Definition~\ref{def:synthesis} below and the first its axisymmetric counterpart; both project back to a spin-$s$ field.  As we show in Sec.~\ref{sec:reductions}, each is the adjoint of the corresponding analysis operator, which is why admissibility takes the form of a resolution of the identity.  The wavelets are steerable, \ie\ azimuthally band-limited, so that their directional coefficients at all orientations are determined by finitely many \cite{mcewen2015directional,wiaux2006fast}; Proposition~\ref{prop:band-limited} identifies such coefficients with a tuple of spin fields.

Other spin constructions in the literature admit the same description.  The spin-weighted spherical convolutional networks of Esteves \etal\ \cite{esteves2020spin} define, for a set of input spins $W_F$ and filter spins $W_K$, the convolution $\sw{s}{(\widehat{F * K})}{}^{\ell}_{m} = \sum_{i \in W_F} \sw{i}{\hat f}{}^{\ell}_{m}\, \sw{s}{\hat k}{}^{\ell}_{i}$ for $s \in W_K$, in their notation, so that the output carries the spins of the filter, and a cross-correlation $\sw{s}{(\widehat{F \star K})}{}^{\ell}_{m} = \sum_{i} \sw{i}{\hat f}{}^{\ell}_{m}\, \sw{i}{\hat k}{}^{\ell *}_{s}$, which is a sum of spin-changing multipliers.  Their explicit $D$ convention and section harmonics agree with ours; the exact kernel conversion, rather than an unchanged column reading of Definition~\ref{def:spin-directional}, is given in Proposition~\ref{prop:spin-change}.  The spin needlets of Geller \& Marinucci \cite{geller2010spin,geller2011mixed} use isotropic two-point kernels $\sum_{\ell m} b\big(\sqrt{e_{\ell s}}/B^{j}\big)\, \sw{s}{Y}_{\ell m}(x)\,\sw{s}{Y}^{*}_{\ell m}(y)$ with $e_{\ell s} = (\ell-\lvert s\rvert)(\ell+\lvert s\rvert+1)$, mapping spin $s$ to spin $s$, and their mixed needlets replace one of the two harmonics by a scalar harmonic, connecting spin $s$ with spin $0$ only.  The spin-weighted spherical convolution introduced by Tessore \etal\ \cite{tessore2025euclid} for making maps of cosmic shear from discrete galaxy catalogues, $F(x) = \int_{\sphere} f(y)\, e^{\ii S\alpha}\, K(\theta)\, e^{-\ii s\alpha'} \dmu(y)$, pairs a radial profile $K$ of the separation $\theta$ with relative-frame phases and maps spin $s$ to a freely chosen spin $S$.  They state the theorem $F_{\ell m}=K_\ell f_{\ell m}$ for $K(\theta)=\sum_\ell(2\ell+1)K_\ell d^\ell_{Ss}(\theta)/(4\pi)$.  Proposition~\ref{prop:spin-change} translates the angles explicitly and distinguishes the literal printed convention for odd output spins.  Their cap profile $K(\theta)=\cos^{2s}(\theta/2)$ for $\theta\le\beta$ has closed-form coefficients for $S=s\ge0$.  Finally, a number of operators of mathematical physics change the spin of a field while commuting with rotations: the spin raising and lowering operators \eqref{eq:eth}; the $E$/$B$ decomposition of a spin-$\pm2$ polarisation field, $a_{E,\ell m} = -(a_{2,\ell m}+a_{-2,\ell m})/2$ and $a_{B,\ell m} = \ii(a_{2,\ell m}-a_{-2,\ell m})/2$ \cite{zaldarriaga1997all,kamionkowski1997statistics}; the spherical Kaiser--Squires relation $\sw{2}{\gamma}_{\ell m} = \mathcal{D}_\ell\, \kappa_{\ell m}$, with $\mathcal{D}_\ell = -\sqrt{(\ell+2)(\ell-1)/(\ell(\ell+1))}$, between the convergence (spin $0$) and shear (spin $2$) of weak gravitational lensing \cite{wallis2022mapping}; and the all-sky convolution of a polarised beam \cite{challinor2000all}.  All are recovered in Sec.~\ref{sec:reductions}.

\subsection{Anisotropic convolutions with output on the sphere}
\label{sec:review:anisotropic}

The third approach to address the orientation degree of freedom is to fix it by a rule, which yields an output on the sphere for a directional kernel.

\begin{definition}[Standard correlation]
\label{def:standard-correlation}
For $f, \psi \in L^2(\sphere)$ with $\psi$ not necessarily axisymmetric, the standard correlation is the directional convolution restricted to the rotations with vanishing third Euler angle,
\begin{equation}
  (f \circledast \psi)(R_\omega) = \langle f, \Rot_\omega \psi \rangle, \qquad R_\omega = R(\varphi,\theta,0) .
  \label{eq:standard-correlation}
\end{equation}
\end{definition}
This is the ``standard correlation'' of Wiaux \etal\ \cite{wiaux2006fast,wiaux2007complex}, and is formula \eqref{eq:axisymmetric-conv} applied to a directional kernel.  Its harmonic coefficients are not a pointwise product \cite{wiaux2007complex}.

\begin{definition}[Commutative anisotropic convolution]
\label{def:commutative-anisotropic}
For $f, h \in L^2(\sphere)$, the commutative anisotropic convolution of Sadeghi \etal\ \cite{sadeghi2012commutative,khalid2012anisotropic} is
\begin{equation}
  (h \odot_{\mathrm{S}} f)(\theta,\varphi) = \int_{\sphere} \big(\Rot_{(\varphi,\theta,\pi-\varphi)} h\big)(\omega')\, f(\omega') \dmu(\omega') .
  \label{eq:commutative-anisotropic}
\end{equation}
\end{definition}
The pairing is bilinear.  Sadeghi \etal\ \cite{sadeghi2012commutative} prove that their bilinear directional correlation evaluated at rotations $(\varphi,\theta,\chi(\varphi))$ is commutative if and only if $\chi = \pi - \varphi$, and give its harmonic form and an $\mathcal{O}(L^3\log L)$ algorithm; they show commutativity, not equivariance.

\begin{definition}[Harmonic multiplication and sifting convolution]
\label{def:sifting}
For $f, g \in L^2(\sphere)$, harmonic multiplication \cite{kennedy2011azimuthally} and the sifting convolution \cite{roddy2021sifting} are defined in harmonic space by
\begin{equation}
  (g \odot_{\mathrm{K}} f)_{\ell m} = g_{\ell m}\, f_{\ell m},
  \qquad
  (f \circledcirc g)_{\ell m} = f_{\ell m}\, g^{*}_{\ell m} ,
  \label{eq:sifting}
\end{equation}
the latter arising in real space as $(f \circledcirc g)(\omega) = \langle \mathcal{T}_\omega f, g\rangle$ for the translation operator $(\mathcal{T}_{\omega'} f)(\omega) = \sum_{\ell m} f_{\ell m}\, Y_{\ell m}(\omega')\, Y_{\ell m}(\omega)$.
\end{definition}
Both operators are diagonal in the spherical harmonic basis.  Kennedy \etal\ \cite{kennedy2011azimuthally} characterise isotropic operators by independence of the multiplier from $m$.  Harmonic multiplication is commutative; sifting is commutative up to coefficientwise conjugation, not pointwise conjugation of the field.  Sifting has the $\mathcal{O}(L^3)$ cost of spherical harmonic transforms and meets the three criteria of \cite{roddy2021sifting}.  Both families are always polar-equivariant but not fully rotation-equivariant in general; additional symmetries and equivariant special cases are discussed in Sec.~\ref{sec:beyond}.

\subsection{Projections from the rotation group to the sphere}
\label{sec:review:projection}

Finally, the adjoint of directional analysis synthesises a signal from group coefficients by an integral over $\SOthree$.  It becomes an inverse only with appropriate admissibility or dual-frame conditions.

\begin{definition}[Synthesis or transpose convolution]
\label{def:synthesis}
For $W \in \Lrot$ and $\psi \in L^2(\sphere)$, the synthesis is
\begin{equation}
  (W \circledast^{\dagger} \psi)(\omega) = \int_{\SOthree} W(\rho)\, (\Rot_\rho \psi)(\omega) \dmu(\rho),
  \qquad
  (W \circledast^{\dagger} \psi)_{\ell m} = \sum_{n=-\ell}^{\ell} W^{\ell}_{mn}\, \psi_{\ell n} ,
  \label{eq:synthesis}
\end{equation}
where the harmonic form follows by expanding $W$ and $\psi$ and noting \eqref{eq:rotation-coeffs}.
\end{definition}
This is Haar-measure synthesis, distinct from the flat-scan ``transpose convolution'' of Wandelt \& G\'orski \cite{wandelt2001fast} as explained in Sec.~\ref{sec:reductions:directional}, and is the synthesis step of directional wavelet transforms on the sphere \cite{mcewen2007fast,wiaux2008exact,mcewen2013directional,mcewen2015directional}, where the kernel is either the wavelet itself, with the constant $8\pi^2/(2\ell+1)$ carried by the admissibility condition as in \eqref{eq:spin-admissibility} \cite{mcewen2015directional}, or the dual wavelet $\psi_{\ell n} \mapsto \frac{2\ell+1}{8\pi^2}\psi_{\ell n}$, with an admissibility condition free of constants \cite{wiaux2008exact,mcewen2013directional}.  The left convolution of Kennedy \etal\ \cite{kennedy2011azimuthally} and Esteves \etal\ \cite{esteves2018learning} is obtained instead by marginalising the bilinear directional correlation over the third Euler angle; with the sesquilinear convention \eqref{eq:directional-conv}, the spherical kernel is conjugated.

\subsection{Summary}
\label{sec:review:summary}

The convolutions reviewed are summarised in Table~\ref{tab:review}.  In the rotation-equivariant harmonic formulas, the signal order $m$ is a spectator index: the kernel acts on the remaining indices.  Dependence of a diagonal multiplier on $m$, or an unconstrained restriction to selected rotations, instead gives families that are not fully equivariant in general.  Their special cases and symmetry conditions are established in Secs.~\ref{sec:convolution}--\ref{sec:beyond}.

\begin{table}[t]
  \caption{Spherical convolutions in the notation of Sec.~\ref{sec:prelim}.  Symmetry entries describe the family: ``yes'' guarantees $\SOthree$ equivariance, ``polar'' guarantees polar rotations without excluding other symmetries of particular kernels, and ``no'' means full equivariance is not guaranteed.  Harmonic multiplication is commutative; $^{\mathrm{a}}$ denotes coefficientwise conjugation for sifting, with longitude reflection in real space.  ``--'' means no numbered definition or, in the commutativity column, not applicable.  Costs are algorithmic upper bounds for band-limit $L$ and azimuthal band-limit $N$, with standard correlation referring to grid values.}
  \label{tab:review}
  \scriptsize\setlength{\tabcolsep}{2.5pt}
  \centering
  \resizebox{\textwidth}{!}{%
  \begin{tabular}{llllll}
    \toprule
    Convolution & Def. & Input $\to$ output & Kernel & Equiv.\,/\,comm. & Cost \\
    \midrule
    Axisymmetric $\odot$ & \ref{def:axisymmetric} & $\sphere \to \sphere$ & axisymmetric & yes\,/\,-- & $L^3$ \\
    Left $\star_{\mathrm{L}}$ & \ref{def:driscoll-healy} & $\sphere \to \sphere$ & any (zonal part) & yes\,/\,no & $L^3$ \\
    Directional $\circledast$ & \ref{def:directional} & $\sphere \to \SOthree$ & any & yes\,/\,no & $L^4$, $NL^3$ \\
    Group $\conv$ & \ref{def:group-conv} & $\SOthree \to \SOthree$ & any & yes\,/\,no & $L^4$ \\
    Directional spin $\circledast$ & \ref{def:spin-directional} & $\Lsph{s} \to \SOthree$ & spin $s$ & yes\,/\,no & $L^4$, $NL^3$ \\
    Axisymmetric spin $\odot$ & \ref{def:spin-axisymmetric} & $\Lsph{s} \to \Lsph{0}$ & axisymmetric spin $s$ & yes\,/\,-- & $L^3$ \\
    Spin-weighted \cite{esteves2020spin} & -- & $\Lsph{s} \to \Lsph{s'}$ & spin $s'$ & yes\,/\,no & $L^3$ \\
    Spin-weighted, isotropic \cite{tessore2025euclid} & -- & $\Lsph{s} \to \Lsph{S}$ & radial, with frame phases & yes\,/\,-- & $L^3$ \\
    Synthesis $\circledast^{\dagger}$ & \ref{def:synthesis} & $\SOthree \to \sphere$ & any & yes\,/\,-- & $L^4$, $NL^3$ \\
    Standard correlation & \ref{def:standard-correlation} & $\sphere \to \sphere$ & directional & polar\,/\,no & $L^3$ \\
    Commutative anisotropic $\odot_{\mathrm{S}}$ & \ref{def:commutative-anisotropic} & $\sphere \to \sphere$ & directional & no\,/\,yes & $L^3 \log L$ \\
    Harmonic mult.\ / sifting $\circledcirc$ & \ref{def:sifting} & $\sphere \to \sphere$ & any & polar\,/\,yes$^{\mathrm{a}}$ & $L^3$ \\
  \bottomrule
  \end{tabular}}
\end{table}

\section{Fields on homogeneous spaces: lifting and projection}
\label{sec:lifting}

In this section we develop the machinery on which the unification rests.  We first describe the rotation group as a principal bundle over the sphere and introduce the twist that measures the failure of a rotation to preserve a section, before describing spin fields as sections of associated bundles and as Mackey functions on the group.  The lift of a spin field to the rotation group and its adjoint, the projection, are then defined and their structure established, and the induced representation of $\SOthree$ on spin fields is identified with the rotation \eqref{eq:spin-rotation}.  Finally, the decomposition of an arbitrary function on the rotation group into a tuple of spin fields is derived.  The material of the first two subsections is classical \cite{mackey1952induced,folland2016course} and was brought to equivariant networks in \cite{cohen2019general,xu2022unified,weiler2021coordinate}; we present it in the concrete setting of the sphere and in the conventions of Sec.~\ref{sec:prelim}.

\subsection{The principal bundle $\SOthree \to \sphere$ and the twist}
\label{sec:lifting:bundle}

Let $G$ be a compact Lie group and $H$ a closed subgroup, and consider the homogeneous space $X=G/H$ with base point $x_0=H$, so that $H$ is its stabiliser and $gH$ corresponds to $g x_0$.  The quotient map $p : G \to G/H$ exhibits $G$ as a principal $H$-bundle over $G/H$: the fibre $p^{-1}(x) = gH$ over $x = gH$ is a copy of $H$ on which $H$ acts freely by right multiplication.  For $G = \SOthree$, $H = \SOtwo$ and $x_0 = \eta$ the fibre over a point $\omega$ of the sphere is the circle of unit tangent vectors at $\omega$: a rotation $\rho$ with $\rho\eta = \omega$ is determined by where it sends the North pole and a reference tangent vector there, and the third Euler angle $\gamma$ parameterises the fibre.  The bundle is illustrated in Fig.~\ref{fig:fibre-bundle}.

\begin{figure}[t]
  \centering
  \includegraphics[width=0.6\linewidth,page=2]{figures/fibre_bundle_cones.pdf}
  \caption{The rotation group as a principal $\SOtwo$-bundle over the sphere $\sphere$.  The fibre over a point $x$ is drawn as the rim of a cone standing on the sphere at $x$; its points are the rotations $R(\alpha,\beta,\gamma)$ with $R(\alpha,\beta,\gamma)\eta = x$, coloured by the fibre coordinate $\gamma$.  The red dot is the section $s(x) = R(\alpha,\beta,0)$, and a general element of the fibre is $g = s(x)R_z(\gamma)$ (inset).}
  \label{fig:fibre-bundle}
\end{figure}

A \emph{section} is a map $s : G/H \to G$ with $p \circ s = \Id$, \ie\ a choice of coset representative for every point.  On the sphere we adopt
\begin{equation}
  s(\theta,\varphi) = R(\varphi,\theta,0)\quad(0<\theta<\pi),\qquad s(\eta)=e,\qquad s(-\eta)=R_y(\pi),
  \label{eq:section}
\end{equation}
which chooses the frame aligned with the coordinate lines away from the poles and is smooth there.  The letter $s$ denotes the section in $s(x)$ and in compositions $F\circ s$, and the integer spin in representation and operator labels; we write $\sigma_0=s(\cdot)$ when an argument-free section symbol is needed for the evaluation operators below.

\begin{remark}[No global section]
\label{rem:no-global-section}
The bundle $\SOthree \to \sphere$ is the unit tangent bundle of the sphere: a section is a choice of unit tangent vector at every point, and by the hairy-ball theorem no such choice is continuous.  The section \eqref{eq:section}, which follows the meridians away from the poles, is measurable but not continuous.  The intrinsic spin-field constructions below are independent of this choice, with their coordinate expressions related by \eqref{eq:section-change}; the mixed-spin section evaluations of Sec.~\ref{sec:beyond:section} instead depend on the chosen section and do not transform by one common spin phase.  For continuous spin fields, the limiting coordinate values at the poles vanish or carry the windings $e^{\mp\ii s\varphi}$ \cite{boyle2016how,mcewen2011novel}; bare $L^2$ classes have no intrinsic point values there.
\end{remark}

The action of $G$ does not preserve a section: in general $g\, s(x) \ne s(gx)$.  The discrepancy lies in $H$.

\begin{definition}[Twist]
\label{def:twist}
The twist $\twist : G/H \times G \to H$ is defined by
\begin{equation}
  g\, s(x) = s(gx)\, \twist(x,g), \qquad \twist(x,g) = s(gx)^{-1}\, g\, s(x) .
  \label{eq:twist}
\end{equation}
We write $\twist(g) = \twist(x_0, g) = s(gx_0)^{-1} g$ for the twist at the base point.
\end{definition}

\begin{lemma}[Composition of twists]
\label{lem:twist-composition}
$\twist(x, g_1 g_2) = \twist(g_2 x, g_1)\, \twist(x, g_2)$ for all $x \in G/H$ and $g_1, g_2 \in G$.  On the sphere, with the section \eqref{eq:section}, $\twist(R(\alpha,\beta,\gamma)) = R_z(\gamma)$ for $0<\beta<\pi$, hence almost everywhere.
\end{lemma}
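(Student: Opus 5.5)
The lemma has two parts, and I would prove them separately: a general cocycle identity that follows from the defining relation \eqref{eq:twist} alone, and an explicit evaluation on the sphere using the Euler-angle factorisation \eqref{eq:euler} and the section \eqref{eq:section}.

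For the composition law I will expand the left-hand side using the definition $\twist(x,g) = s(gx)^{-1} g\, s(x)$. Inserting the identity $s(g_2x)\, s(g_2x)^{-1}$ between $g_1$ and $g_2$ gives
\begin{equation*}
  \twist(x,g_1g_2) = s(g_1g_2x)^{-1} g_1 g_2\, s(x)
  = \bigl[s(g_1(g_2x))^{-1} g_1\, s(g_2x)\bigr]\bigl[s(g_2x)^{-1} g_2\, s(x)\bigr].
\end{equation*}
The first bracket is $\twist(g_2x,g_1)$ and the second is $\twist(x,g_2)$. Each factor lies in $H$ because $p(g\,s(x)) = gx = p(s(gx))$, so $g\,s(x)$ and $s(gx)$ lie in the same fibre $gxH$. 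This step is pure algebra and should present no difficulty.

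For the explicit form on the sphere, set $g = R(\alpha,\beta,\gamma)$ and $x_0 = \eta$. By Sec.~\ref{sec:prelim:sphere}, $g\eta$ is the point with colatitude $\beta$ and longitude $\alpha$. When $0<\beta<\pi$ this point is not a pole, so \eqref{eq:section} gives $s(g\eta) = R(\alpha,\beta,0) = R_z(\alpha)R_y(\beta)$. It then follows directly that
\begin{equation*}
  \twist(g) = R_y(\beta)^{-1}R_z(\alpha)^{-1}R_z(\alpha)R_y(\beta)R_z(\gamma) = R_z(\gamma).
\end{equation*}

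The only delicate point is the ``almost everywhere'' clause. The rotations excluded by the condition are those with $\beta\in\{0,\pi\}$. These are the preimages of the two poles, a union of two circles in $\SOthree$. Since the density $\sin\beta$ vanishes there, this set has Haar measure zero. At those rotations the Euler angles are degenerate: for example, at $\beta=0$ only $\alpha+\gamma$ is determined. The twist there equals $R_z(\alpha+\gamma)$ or a similar expression computed from $s(\pm\eta)$, but this does not affect the a.e. statement. I would therefore simply note that the identity holds off this null set, without computing the polar values.
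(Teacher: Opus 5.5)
Your proof is correct and follows the paper's argument essentially verbatim: the cocycle identity comes from inserting $s(g_2x)\,s(g_2x)^{-1}$ and regrouping, and the explicit form from cancelling in $R(\alpha,\beta,0)^{-1}R(\alpha,\beta,\gamma)$ for $0<\beta<\pi$. The paper additionally records the polar values, $R_z(\alpha+\gamma)$ at $\beta=0$ and $R_z(\gamma-\alpha)$ at $\beta=\pi$, which your null-set remark rightly treats as immaterial to the almost-everywhere statement.
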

\begin{proof}
The composition rule follows by inserting $s(g_2x)\,s(g_2x)^{-1}$ into $s(g_1g_2x)^{-1}g_1g_2s(x)$ and regrouping (it is Eq.~(20) of \cite{cohen2019general}).  On the sphere, for $0<\beta<\pi$, $s(R(\alpha,\beta,\gamma)\eta) = R(\alpha,\beta,0)$, hence $\twist(R(\alpha,\beta,\gamma)) = R(\alpha,\beta,0)^{-1}R(\alpha,\beta,\gamma) = R_z(\gamma)$.  The details are given in Appendix~\ref{app:proofs}.
\end{proof}

Geometrically, a rotation $g$ carries the frame chosen by the section at $x$ to a frame at $gx$ that differs from the frame chosen by the section there by a rotation $\twist(x,g)$ about the local normal, as illustrated in Fig.~\ref{fig:twist}.  This residual rotation is what a spin field must compensate for, and it is the origin of the phase in \eqref{eq:spin-rotation}.

\begin{figure}[t]
  \centering
  \includegraphics[page=1,width=0.6\linewidth]{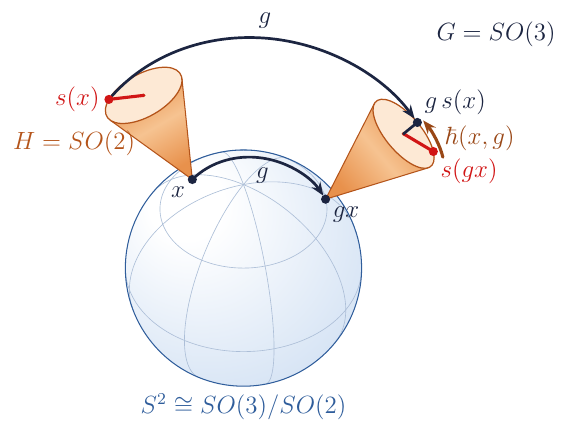}
  \caption{The twist of the section under a rotation, $g\,s(x) = s(gx)\,\twist(x,g)$.  The rotation $g$ carries the section point $s(x)$ in the fibre over $x$ to a point of the fibre over $gx$ that differs from $s(gx)$ by the right action of the twist $\twist(x,g) \in \SOtwo$.}
  \label{fig:twist}
\end{figure}

\subsection{Associated bundles and fields}
\label{sec:lifting:fields}

The irreducible unitary representations of $H = \SOtwo$ are the characters
\begin{equation}
  \rho^{(s)}(R_z(\gamma)) = e^{\ii s\gamma}, \qquad s \in \mathbb{Z},
  \label{eq:so2-irreps}
\end{equation}
and we call $s$ the spin.  Given a representation $\rho$ of $H$ on a vector space $V$, the associated vector bundle is $A = G \times_\rho V = (G \times V)/H$, whose elements are the orbits $[g, v] = \{(gh, \rho(h^{-1})v) : h \in H\}$ of pairs of a pose $g$ and a feature vector $v$, with projection $[g,v] \mapsto gH$.  A \emph{field of type $\rho$} is a section of $A$.  Scalar fields have type $\rho^{(0)}$; a real tangent vector field is the pair of fields $v_\theta \pm \ii v_\varphi$ of types $\rho^{(\pm1)}$; the trace-free symmetric part of a rank-two tensor, such as linear polarisation or shear, is the pair $Q \pm \ii U$ of types $\rho^{(\pm2)}$.  A real field of spin $s$ is thus a pair of complex fields of spins $\pm s$ related by conjugation.

A field admits two equivalent encodings.

\begin{definition}[Mackey functions]
\label{def:mackey}
A Mackey function of type $\rho$ is a map $F : G \to V$ with $F(gh) = \rho(h^{-1})F(g)$ for all $g \in G$, $h \in H$.  For $\rho = \rho^{(s)}$ the condition reads $F(g R_z(\gamma)) = e^{-\ii s\gamma}F(g)$, and we write
\begin{equation}
  \Mackey{s} = \{ F \in \Lrot : F(\,\cdot\,R_z(\gamma)) = e^{-\ii s\gamma}F \text{ in } L^2 \text{ for every fixed } \gamma \} .
  \label{eq:Ms}
\end{equation}
\end{definition}

For $F\in\Mackey{s}$ we use a fibrewise equivariant representative when restricting to a section: its value on the section is the coefficient of $e^{-\ii s\gamma}$ in its fibre Fourier series, not the restriction of an arbitrary $L^2$ representative to a measure-zero set.  The second encoding evaluates $F$ along the section, $f(x)=\sqrt{2\pi}\,F(s(x))$, the factor $\sqrt{2\pi}$ being chosen to match the normalisation of the lift and projection operators introduced in Sec.~\ref{sec:lifting:lift} below, under which $f$ and $F$ determine each other (Lemma~\ref{lem:lift}).  Changing the section to $s'(x)=s(x)R_z(\chi(x))$ changes this function to
\begin{equation}
  f'(x) = \sqrt{2\pi}\,F(s(x)R_z(\chi(x))) = e^{-\ii s\chi(x)} f(x),
  \label{eq:section-change}
\end{equation}
which is precisely the frame dependence \eqref{eq:spin-phase} of a spin-$s$ field.  Thus spin-$s$ fields in the sense of Sec.~\ref{sec:prelim:spin} are the coordinate expressions, with respect to the section \eqref{eq:section}, of type-$\rho^{(s)}$ Mackey functions.

\begin{remark}[Spin fields are sections]
\label{rem:spin-fields-are-sections}
The identification just made is the bridge between the physics and the bundle literatures.  The spin-weighted quantities of Newman \& Penrose \cite{newman1966note} and Goldberg \etal\ \cite{goldberg1967spin}, defined by their behaviour \eqref{eq:spin-phase} under a rotation of the local frame and adopted for the polarisation of the cosmic microwave background in \cite{zaldarriaga1997all,kamionkowski1997statistics}, are the sections of the line bundle associated to $\SOthree \to \sphere$ by the character $\rho^{(s)}$, and their coordinate expression depends on the choice of frame exactly as \eqref{eq:section-change} prescribes.  Boyle \cite{boyle2016how} draws the same conclusion from the impossibility of a global frame, noting that spin-weighted functions are not functions on the sphere alone but functions of a point together with a frame at that point, \ie\ functions on the rotation group; the general theory of such coordinate-independent fields on manifolds is developed by Weiler \etal\ \cite{weiler2021coordinate}.  On the sphere the oriented orthonormal frame bundle is $\SOthree$, which is distinct from a chosen local frame field.  Note that the encoding as a Mackey function is global and involves no choice of section, whereas the encoding as a spin field does; the former is therefore the natural setting in which to establish properties such as equivariance, and the latter the natural setting in which to compute.
\end{remark}

\subsection{Lift and projection}
\label{sec:lifting:lift}

The two encodings are related by the lift and its adjoint.

\begin{definition}[Lift and projection]
\label{def:lift}
For $s \in \mathbb{Z}$ define the lift $\lift{s} : \Lsph{s} \to \Lrot$ and its adjoint, the projection $\proj{s} : \Lrot \to \Lsph{s}$, by
\begin{equation}
  (\lift{s} f)(\alpha,\beta,\gamma) = \frac{1}{\sqrt{2\pi}}\, e^{-\ii s\gamma}\, f(\beta,\alpha),
  \qquad
  (\proj{s} F)(\beta,\alpha) = \frac{1}{\sqrt{2\pi}} \int_0^{2\pi} e^{\ii s\gamma}\, F(\alpha,\beta,\gamma) \dd\gamma .
  \label{eq:lift-proj}
\end{equation}
\end{definition}
In the language of Definition~\ref{def:twist}, $(\lift{s}f)(g) = \rho^{(s)}(\twist(g)^{-1})\, f(gH)/\sqrt{2\pi}$, and the projection is the integral over the fibre against the character.  The normalisation by $\sqrt{2\pi}$, the square root of the volume of the fibre, makes the lift an isometry.  For a general compact Lie group and closed stabiliser $H$, the analogous normalisation is $\sqrt{\operatorname{vol}(H)}$.

\begin{lemma}[Structure of the lift]
\label{lem:lift}
Let $s \in \mathbb{Z}$.
\begin{enumerate}
  \item[(a)] $\lift{s}$ is an isometry of $\Lsph{s}$ onto $\Mackey{s}$; $\proj{s}\lift{s} = \Id$ and $\lift{s}\proj{s}$ is the orthogonal projector onto $\Mackey{s}$.  The subspaces $\Mackey{s}$ are mutually orthogonal and $\Lrot = \widehat{\bigoplus}_{s \in \mathbb{Z}} \Mackey{s}$.
  \item[(b)] On $\Mackey{s}$ the projection is evaluation on the section: $\proj{s} F = \sqrt{2\pi}\, F \circ s$.
  \item[(c)] In harmonic space,
  \begin{equation}
    (\lift{s} f)^{\ell}_{mn} = (-1)^s \sqrt{\frac{8\pi^2}{2\ell+1}}\; \sw{s}{f}_{\ell m}\, \delta_{n,-s},
    \qquad
    (\proj{s} F)_{\ell m} = (-1)^s \sqrt{\frac{2\ell+1}{8\pi^2}}\; F^{\ell}_{m,-s} ,
    \label{eq:lift-harmonic}
  \end{equation}
  where the coefficients on the sphere are taken with respect to the spin-$s$ harmonics.
  \item[(d)] $\lift{s}\, \sw{s}{Y}_{\ell m} = (-1)^s \sqrt{\frac{2\ell+1}{8\pi^2}}\; D^{\ell *}_{m,-s}$: lifted spin harmonics are unit-norm Wigner functions.
  \item[(e)] $\piG(g)\lift{s} = \lift{s}\pis{s}(g)$ and $\proj{s}\piG(g) = \pis{s}(g)\proj{s}$ for all $g \in \SOthree$, where $\piG$ is the left regular representation of $\SOthree$ on $\Lrot$, $(\piG(g)F)(g') = F(g^{-1}g')$, and $\pis{s}$ is the representation of $\SOthree$ on spin-$s$ fields induced from $\rho^{(s)}$; both are developed in Sec.~\ref{sec:lifting:induced}, where (e) is proved as Proposition~\ref{prop:intertwining}.
\end{enumerate}
\end{lemma}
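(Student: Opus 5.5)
The plan is to work directly from the explicit formulas \eqref{eq:lift-proj}, using the Euler-angle parameterisation and the measure $\dmu(\rho)=\sin\beta\dd\alpha\dd\beta\dd\gamma$, which factorises as the sphere measure $\sin\beta\dd\beta\dd\alpha$ times $\dd\gamma$ on the fibre. The harmonic identities (c) and (d) will then follow from \eqref{eq:sY-D} and the Wigner transform \eqref{eq:wigner-transform}. Part (e) is deferred to Proposition~\ref{prop:intertwining}, so the real work is in (a)--(d).

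\textbf{Part (a).} For $f\in\Lsph{s}$ I will compute $\lVert\lift{s}f\rVert^2=\frac{1}{2\pi}\int_0^{2\pi}\dd\gamma\int_{\sphere}\lvert f\rvert^2\dmu=\lVert f\rVert^2$, which shows the lift is an isometry. Adjointness is Fubini: $\langle\lift{s}f,F\rangle=\int_{\sphere} f(\beta,\alpha)\,\bigl[\tfrac{1}{\sqrt{2\pi}}\int e^{\ii s\gamma}F\dd\gamma\bigr]^{*}\dmu$. We have $\lift{s}f\in\Mackey{s}$ since $\gamma\mapsto\gamma+\gamma'$ multiplies it by $e^{-\ii s\gamma'}$, and $\proj{s}\lift{s}f=\frac{1}{2\pi}\int_0^{2\pi}\dd\gamma\, f=f$.

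For $F\in\Mackey{s}$, expand $F(\alpha,\beta,\cdot)$ in its fibre Fourier series. The defining property in \eqref{eq:Ms}, holding in $L^2$ for every $\gamma$, forces all Fourier modes other than $e^{-\ii s\gamma}$ to vanish almost everywhere in $(\alpha,\beta)$. Hence $F=\lift{s}\proj{s}F$, so the lift is onto $\Mackey{s}$ and $\lift{s}\proj{s}$, being a self-adjoint idempotent, is the orthogonal projector. I expect this measure-theoretic step to be the main obstacle: one must pass from ``for each fixed $\gamma$, equality in $L^2$'' to a statement about fibre Fourier coefficients. I will handle it by taking the $k$-th fibre Fourier coefficient of both sides, which is legitimate after integrating against test functions and using Fubini.

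The same fibre Fourier decomposition, with Parseval on the circle, gives the orthogonal decomposition $\Lrot=\widehat{\bigoplus}_s\Mackey{s}$. Mutual orthogonality is immediate from $\int_0^{2\pi} e^{-\ii(s-s')\gamma}\dd\gamma=0$ for $s\neq s'$.

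\textbf{Part (b).} I will use the fibrewise equivariant representative $F(\alpha,\beta,\gamma)=e^{-\ii s\gamma}F(\alpha,\beta,0)$. The fibre integral then gives $\sqrt{2\pi}\,F(\alpha,\beta,0)$, and by \eqref{eq:section} this equals $\sqrt{2\pi}\,F(s(\beta,\alpha))$ for $0<\beta<\pi$, that is, almost everywhere.

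\textbf{Part (d).} This follows by substituting $f=\sw{s}{Y}_{\ell m}$ into \eqref{eq:lift-proj} and applying \eqref{eq:sY-D}. That gives $\lift{s}\,\sw{s}{Y}_{\ell m}=(-1)^s\sqrt{(2\ell+1)/(8\pi^2)}\,D^{\ell *}_{m,-s}$, which has unit norm by the orthogonality relation of the $D$-functions.

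\textbf{Part (c).} I will expand $f$ in spin harmonics, use (d) together with continuity of the isometry, and read off the coefficients against the basis $\frac{2\ell+1}{8\pi^2}D^{\ell *}_{mn}$. This yields $(\lift{s}f)^\ell_{mn}=(-1)^s\sqrt{8\pi^2/(2\ell+1)}\,\sw{s}{f}_{\ell m}\delta_{n,-s}$. For the projection, I will compute $(\proj{s}F)_{\ell m}=\langle\proj{s}F,\sw{s}{Y}_{\ell m}\rangle=\langle F,\lift{s}\sw{s}{Y}_{\ell m}\rangle$ by adjointness. Using (d) and $\langle F,D^{\ell *}_{m,-s}\rangle=F^\ell_{m,-s}$ from \eqref{eq:wigner-transform}, and noting that the sign is real, this gives the stated formula.
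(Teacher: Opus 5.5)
Your proposal is correct and follows the paper's proof essentially step by step: Fubini gives the isometry and adjointness, the fibre Fourier series in $\gamma$ identifies $\Mackey{s}$ and gives orthogonality and completeness, the fibrewise equivariant representative gives (b), \eqref{eq:sY-D} with Wigner orthogonality gives (d), and linearity plus adjointness give (c). The only difference is a refinement: you explicitly derive the vanishing of the non-$s$ fibre modes from the translation condition in \eqref{eq:Ms}, and you obtain the projector property from self-adjointness and idempotence, whereas the paper simply asserts that a fibrewise equivariant representative exists.
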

\begin{proof}
We outline the proof; the details are given in Appendix~\ref{app:proofs}.  Part (a) follows from Fubini's theorem and the Fourier series of $F$ in the fibre coordinate $\gamma$, of which $\lift{s}\proj{s}F$ is the term in $e^{-\ii s\gamma}$.  Part (b) is immediate from \eqref{eq:Ms}.  Part (d) is the relation \eqref{eq:sY-D} between spin spherical harmonics and Wigner functions, together with their orthogonality, and (c) follows from (d) by linearity: by \eqref{eq:D-sY} and the orthogonality of the complex exponentials, the Wigner coefficients of $\lift{s}f$ vanish unless $n = -s$, and $\int_0^{2\pi} e^{-\ii(s+n)\gamma}\dd\gamma = 2\pi\delta_{n,-s}$ supplies the constant.  Part (e) is Proposition~\ref{prop:intertwining} below.
\end{proof}

Part (c) shows that the lift of a spin-$s$ field occupies the single column $n = -s$ of Wigner space, so that the Wigner transform restricted to that column is simply a spin-$s$ spherical harmonic transform.  This is classical \cite{goldberg1967spin,boyle2016how} (for scalar fields it is Proposition~1 of Kondor \& Trivedi \cite{kondor2018generalization}), and it underlies Wigner transforms through spin spherical harmonic transforms \cite{mcewen2015novel}, implemented in \cite{price2024differentiable}.

\begin{remark}[Evaluation versus adjoint]
\label{rem:evaluation}
Cohen, Geiger \& Weiler \cite{cohen2019general} and Xu \etal\ \cite{xu2022unified} define the inverse of the unnormalised lift on Mackey functions by evaluation on the section, $F\mapsto F\circ s$, which by (b) equals $\proj{s}/\sqrt{2\pi}$ on $\Mackey{s}$.  Xu \etal\ also define a general convolutional projection $P^\kappa$ on group functions and implement a fixed fibre integral after the nonlinearity; with their unnormalised stabiliser measure this integral is $\sqrt{2\pi}\,\proj{s}$ and is not restricted to Mackey functions.  Kondor \& Trivedi \cite{kondor2018generalization} use the scalar fibre average.  We adopt the adjoint convention throughout: it is bounded on all of $\Lrot$, selects a definite output spin equivariantly for arbitrary kernels, and, with the isometric lift, satisfies $\proj{s}\lift{s}=\Id$.  The degree-dependent factors also cancel under section evaluation on a constrained range; the adjoint provides a consistent normalisation, with the constants of the classical reductions tracked explicitly in Sec.~\ref{sec:reductions}.  The forthcoming article \cite{mulder2026spin} adopts the same convention.
\end{remark}

\subsection{Induced representations}
\label{sec:lifting:induced}

The left regular representation $(\piG(g)F)(g') = F(g^{-1}g')$ of $G$ on $\Lrot$ commutes with right multiplication and therefore preserves every $\Mackey{s}$.  Transporting it to the sphere gives the representation induced from $\rho^{(s)}$.

\begin{definition}[Induced representation]
\label{def:induced}
$\pis{s} = \Ind_H^G \rho^{(s)}$ is the representation of $G = \SOthree$ on $\Lsph{s}$ given by $\pis{s}(g) = \proj{s}\, \piG(g)\, \lift{s}$.
\end{definition}

\begin{proposition}[Intertwining and explicit form]
\label{prop:intertwining}
$\pis{s}$ is unitary, $\piG(g)\lift{s} = \lift{s}\pis{s}(g)$, and
\begin{equation}
  (\pis{s}(g) f)(x) = \rho^{(s)}\big(\twist(x, g^{-1})^{-1}\big)\, f(g^{-1}x) .
  \label{eq:induced-explicit}
\end{equation}
On the sphere, $\pis{s}(R_\rho)$ is the rotation $\Rot_\rho$ of a spin field defined in \eqref{eq:spin-rotation}: $\twist(x, R_\rho^{-1}) = R_z(\vartheta)$ with $\vartheta$ the third Euler angle of $R_\rho^{-1}R_x$ away from the coordinate singularities, hence almost everywhere.
\end{proposition}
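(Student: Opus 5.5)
The plan is to compute $\pis{s}(g)f=\proj{s}\piG(g)\lift{s}f$ directly from Definition~\ref{def:lift}, using the twist to rewrite the translated lift as a Mackey function. Intertwining and unitarity then follow from Lemma~\ref{lem:lift}(a).

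\emph{Step 1: the invariant form of the lift.} By the coordinate expression in \eqref{eq:lift-proj} and Lemma~\ref{lem:twist-composition}, $(\lift{s}f)(g')=\rho^{(s)}(\twist(g')^{-1})f(g'H)/\sqrt{2\pi}$ almost everywhere. Since $\twist(g'h)=\twist(g')h$, this function lies in $\Mackey{s}$, consistent with Lemma~\ref{lem:lift}(a).

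\emph{Step 2: the translate stays Mackey.} The left regular representation commutes with right multiplication, so $\piG(g)$ preserves $\Mackey{s}$. Hence $\piG(g)\lift{s}f\in\Mackey{s}$. Lemma~\ref{lem:lift}(a) says $\lift{s}\proj{s}$ is the orthogonal projector onto $\Mackey{s}$, so
\[
\lift{s}\pis{s}(g)f=\lift{s}\proj{s}\piG(g)\lift{s}f=\piG(g)\lift{s}f,
\]
which is the intertwining relation. Unitarity follows from it: $\lift{s}$ is an isometry onto $\Mackey{s}$ and $\piG(g)$ is unitary, so $\pis{s}(g)$ is isometric. Moreover $\pis{s}(g)\pis{s}(g^{-1})=\proj{s}\piG(g)\piG(g^{-1})\lift{s}=\Id$, again using the intertwining, so $\pis{s}(g)$ is surjective. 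The homomorphism property follows the same way.

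\emph{Step 3: the explicit formula.} By Lemma~\ref{lem:lift}(b), on $\Mackey{s}$ the projection is evaluation on the section, so
\[
(\pis{s}(g)f)(x)=\sqrt{2\pi}\,(\lift{s}f)(g^{-1}s(x)).
\]
By Definition~\ref{def:twist} with $g^{-1}$ in place of $g$, $g^{-1}s(x)=s(g^{-1}x)\,\twist(x,g^{-1})$. The Mackey property then gives $(\lift{s}f)(g^{-1}s(x))=\rho^{(s)}(\twist(x,g^{-1})^{-1})(\lift{s}f)(s(g^{-1}x))$. Finally, $(\lift{s}f)(s(y))=f(y)/\sqrt{2\pi}$, because $\twist(s(y))=e$ for the section \eqref{eq:section}. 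Combining these yields \eqref{eq:induced-explicit}.

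\emph{Step 4: identification with the spin rotation.} On the sphere take $g=R_\rho$. Then $\twist(x,R_\rho^{-1})=s(R_\rho^{-1}x)^{-1}R_\rho^{-1}R_x$, where $R_x=s(x)=R(\varphi,\theta,0)$ for non-polar $x$. Write $R_\rho^{-1}R_x=R(\alpha',\beta',\vartheta)$. When $0<\beta'<\pi$, Lemma~\ref{lem:twist-composition} gives $\twist(R(\alpha',\beta',\vartheta))=R_z(\vartheta)$, and this is the same quantity because $R_\rho^{-1}x=R(\alpha',\beta',\vartheta)\eta$. Therefore
\[
\rho^{(s)}(\twist^{-1})=e^{-\ii s\vartheta},
\]
which is exactly the phase in \eqref{eq:spin-rotation}.

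The main obstacle is measure-theoretic rather than algebraic: point evaluation on the measure-zero section and at the poles. For the Euler-angle identification I would restrict to $x$ off the poles and off $R_\rho\{\pm\eta\}$, which is a null set, and invoke the fibrewise equivariant representative convention stated after Definition~\ref{def:mackey}. This makes Step 3 well defined almost everywhere, since $\piG(g)\lift{s}f$ is Mackey by Step 2. I would also note that the section-independent statements, intertwining and unitarity, never use pointwise values.
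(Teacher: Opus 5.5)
Your proposal is correct and follows essentially the same route as the paper. Mackey-invariance of $\piG(g)\lift{s}f$ together with the projector $\lift{s}\proj{s}$ gives the intertwining relation, Lemma~\ref{lem:lift}(b) reduces the explicit form to evaluating $\lift{s}f$ at $g^{-1}s(x)=s(g^{-1}x)\twist(x,g^{-1})$, and the sphere identification is the paper's $R_\rho^{-1}R_x=s(R_\rho^{-1}x)R_z(\vartheta)$. The only cosmetic difference is that you get unitarity from isometry plus $\pis{s}(g)\pis{s}(g^{-1})=\Id$, whereas the paper computes $\pis{s}(g)^{\dagger}=\pis{s}(g^{-1})$ directly.
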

\begin{proof}
Writing $(\lift{s}f)(g') = \rho^{(s)}(\twist(g')^{-1})f(g'H)/\sqrt{2\pi}$, we have
\begin{equation}
  (\piG(g)\lift{s}f)(g') = \frac{1}{\sqrt{2\pi}}\,\rho^{(s)}\big(\twist(g^{-1}g')^{-1}\big)\,f(g^{-1}g'H),
\end{equation}
and Lemma~\ref{lem:twist-composition} with $x = x_0$, $g_1 = g^{-1}$ and $g_2 = g'$ splits $\twist(g^{-1}g')$ into $\twist(g'H,g^{-1})\,\twist(g')$, from which \eqref{eq:induced-explicit} and the intertwining relation follow.  Unitarity follows since $\piG$ is unitary and $\lift{s}$ is an isometry onto an invariant subspace.  On the sphere, $R_\rho^{-1}R_x = s(R_\rho^{-1}x)R_z(\vartheta)$ by the definition of $\vartheta$, hence $\twist(x,R_\rho^{-1}) = R_z(\vartheta)$ and \eqref{eq:induced-explicit} reduces to \eqref{eq:spin-rotation}.  The details are given in Appendix~\ref{app:proofs}.
\end{proof}

\begin{remark}[Geometric meaning of the spin rotation phase]
\label{rem:phase-interpretation}
The rotation $R_\rho^{-1}$ carries the reference frame $s(x)$ at $x$ to a frame at $R_\rho^{-1}x$ that differs from the chosen frame there by $\twist(x,R_\rho^{-1})=R_z(\vartheta)$ (Fig.~\ref{fig:twist}).  The phase $e^{-\ii s\vartheta}$ is the character $\rho^{(s)}$ evaluated on the inverse twist.  This residual frame rotation is already described in the spin-wavelet treatment \cite{mcewen2015directional} and by Boyle \cite{boyle2016how}; Proposition~\ref{prop:intertwining} connects those conventions to the induced-action formula of Cohen, Geiger \& Weiler \cite{cohen2019general}.  The common harmonic law \eqref{eq:rotation-coeffs} then follows because $\lift{s}$ intertwines $\Rot_\rho$ with left translation on $\SOthree$, which acts on the first Wigner index by the same $D^\ell$ for every spin.
\end{remark}

\subsection{Fibre decomposition and azimuthal band-limits}
\label{sec:lifting:fibre}

The decomposition in Lemma~\ref{lem:lift}(a) expresses every function on the rotation group as a sum of lifted spin fields.

\begin{corollary}[Functions on $\SOthree$ as tuples of spin fields]
\label{cor:fibre-decomposition}
Every $F \in \Lrot$ decomposes uniquely as $F = \sum_{s \in \mathbb{Z}} \lift{s}\proj{s}F$, its Fourier series in $\gamma$ converging in $L^2$, with $(\lift{s}\proj{s}F)^{\ell}_{mn} = F^{\ell}_{mn}\,\delta_{n,-s}$; the component $\proj{s}F$ is a spin-$s$ field whose harmonic coefficients are the Wigner column $F^{\ell}_{m,-s}$ up to the constant in \eqref{eq:lift-harmonic}.  In particular $F$ is azimuthally band-limited at $N$, $F^{\ell}_{mn} = 0$ for $\lvert n \rvert \ge N$, if and only if $F = \sum_{\lvert s \rvert < N} \lift{s} f_s$ for spin fields $f_s = \proj{s}F$.
\end{corollary}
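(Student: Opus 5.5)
The plan is to read the corollary off Lemma~\ref{lem:lift}(a) and (c), together with one explicit computation of the Wigner coefficients of $\lift{s}\proj{s}F$.

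The first step is existence and uniqueness. By Lemma~\ref{lem:lift}(a) the subspaces $\Mackey{s}$ are mutually orthogonal and their closed span is $\Lrot$, and $\lift{s}\proj{s}$ is the orthogonal projector onto $\Mackey{s}$. Hence $F=\sum_s \lift{s}\proj{s}F$ converges in $L^2$. Uniqueness holds because any decomposition $F=\sum_s F_s$ with $F_s\in\Mackey{s}$ must satisfy $F_s=\lift{s}\proj{s}F$, by orthogonality. To identify this with the Fourier series in $\gamma$, I write $F(\alpha,\beta,\gamma)=\sum_s c_s(\alpha,\beta)e^{-\ii s\gamma}$ fibrewise via Fubini. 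From \eqref{eq:lift-proj}, $\lift{s}\proj{s}F$ is exactly the term $c_s e^{-\ii s\gamma}$. $L^2$ convergence follows from Parseval on each fibre, integrated over the base with the measure $\sin\beta\dd\alpha\dd\beta$.

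The second step computes the coefficients. I insert the Wigner expansion \eqref{eq:wigner-transform} of $F$ and apply $\proj{s}$ termwise; this is justified because $\proj{s}$ is bounded. By \eqref{eq:wignerD}, $D^{\ell*}_{mn}$ carries the factor $e^{\ii n\gamma}$, so the fibre integral $\int_0^{2\pi}e^{\ii(s+n)\gamma}\dd\gamma=2\pi\delta_{n,-s}$ keeps only the column $n=-s$. Lifting back restores the factor $e^{-\ii s\gamma}$, giving $(\lift{s}\proj{s}F)^{\ell}_{mn}=F^{\ell}_{mn}\delta_{n,-s}$. Alternatively, and more quickly, I can compose the two formulas in \eqref{eq:lift-harmonic}: the constants $(-1)^s\sqrt{8\pi^2/(2\ell+1)}$ and $(-1)^s\sqrt{(2\ell+1)/(8\pi^2)}$ multiply to $1$. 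The same formula \eqref{eq:lift-harmonic} gives the statement that the harmonic coefficients of $\proj{s}F$ are the column $F^{\ell}_{m,-s}$ up to that constant. That $\proj{s}F$ is a spin-$s$ field follows from the discussion after Definition~\ref{def:mackey}, since $\lift{s}\proj{s}F\in\Mackey{s}$.

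The third step is the band-limit equivalence. If $F^{\ell}_{mn}=0$ for $\lvert n\rvert\ge N$, then $\proj{s}F=0$ whenever $\lvert s\rvert\ge N$, because its coefficients are the column $n=-s$. The decomposition then truncates to $\lvert s\rvert<N$ with $f_s=\proj{s}F$. Conversely, if $F=\sum_{\lvert s\rvert<N}\lift{s}f_s$, then by \eqref{eq:lift-harmonic} each summand occupies only the column $n=-s$, with $\lvert n\rvert<N$. Applying $\proj{s}$ and using $\proj{s}\lift{s'}=\delta_{ss'}\Id$ identifies $f_s=\proj{s}F$; the off-diagonal vanishing $\proj{s}\lift{s'}=0$ for $s\neq s'$ follows from orthogonality of the $\Mackey{s}$.

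The only delicate point is the analytic justification: the termwise interchange of the Wigner series with the fibre integral, and the identification of the $L^2$ Fourier series with the projector sum. Both rest on the boundedness of $\proj{s}$ and on Fubini, as in Lemma~\ref{lem:lift}(a). I therefore expect no real obstacle.
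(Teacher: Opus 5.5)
Your proposal is correct and follows the paper's proof. The decomposition and its uniqueness come from Lemma~\ref{lem:lift}(a), the identity $(\lift{s}\proj{s}F)^{\ell}_{mn} = F^{\ell}_{m,-s}\delta_{n,-s}$ comes from composing the two formulas in \eqref{eq:lift-harmonic} so that the constants cancel, and the band-limit equivalence is read off from which columns survive. Your direct fibre-integral computation and your explicit identification $f_s=\proj{s}F$ via $\proj{s}\lift{s'}=\delta_{ss'}\Id$ are correct extra detail that the paper leaves implicit.
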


Consequently, a function on the rotation group azimuthally band-limited at $N$ is represented by a tuple with at most $2N-1$ spin slots, one for each $\lvert s\rvert<N$, some of which may vanish.  At degree $\ell$ there are only $2\min(\ell,N-1)+1$ admissible slots.  The Wigner transform of such a function is therefore a stack of at most $2N-1$ spin spherical harmonic transforms.  This is the identity by which Wigner transforms are computed through spin spherical harmonic transforms in the sampling theorem on the rotation group of McEwen \etal\ \cite{mcewen2015novel}, as implemented in \cite{price2024differentiable}.  The identity is used in Sec.~\ref{sec:reductions} to identify directional convolutions with azimuthally band-limited, \ie\ steerable, kernels with tuples of spin-changing convolutions, and in Sec.~\ref{sec:beyond} to decompose the anisotropic constructions of Sec.~\ref{sec:review:anisotropic} into their spin components.  Figure~\ref{fig:schematic} (bottom) shows the column structure of a lifted field and Fig.~\ref{fig:kernel-support} the part of a kernel read by each of the operators defined in the next section.

\section{The generalised lifted spherical convolution}
\label{sec:convolution}

We now define the construction that unifies the convolutions of Sec.~\ref{sec:review}.  We first define the generalised lifted spherical convolution in its four forms, before deriving its harmonic form and equivariance.  The constraint on the kernel under which the projection discards nothing is then characterised, the closure of the construction under composition is established, and finally all rotation-equivariant linear maps between spin fields and functions on the rotation group are classified, with a corollary on the incompatibility of directional kernels, spherical output and equivariance.

\subsection{Definition}
\label{sec:convolution:definition}

\begin{definition}[Generalised lifted convolution]
\label{def:lifted-conv}
For a kernel $\Psi \in \Lrot$ and spins $s, s' \in \mathbb{Z}$ define the four operators
\begin{alignat}{2}
  \Cop{s}{s'}[\Psi] &= \proj{s'}\,(\Psi \conv \lift{s}\,\cdot\,) &&: \Lsph{s} \to \Lsph{s'}, \notag\\
  \Cop{s}{G}[\Psi]  &= \Psi \conv \lift{s}\,\cdot               &&: \Lsph{s} \to \Lrot, \notag\\
  \Cop{G}{s'}[\Psi] &= \proj{s'}\,(\Psi \conv \cdot\,)          &&: \Lrot \to \Lsph{s'}, \notag\\
  \Cop{G}{G}[\Psi]  &= \Psi \conv \cdot                          &&: \Lrot \to \Lrot ,
  \label{eq:lifted-conv}
\end{alignat}
with $\conv$ the group convolution of Definition~\ref{def:group-conv}.
\end{definition}

The first operator lifts a spin-$s$ field to $\SOthree$, convolves it there and projects the result to a spin-$s'$ field; there is no requirement that $s' = s$.  The other three omit the lift, the projection, or both, and are included so that convolutions with input or output on the rotation group are covered by the same statements.  Figure~\ref{fig:schematic} shows the construction and its harmonic-space form, which we now derive.

\begin{figure}[t]
  \centering
  \includegraphics[page=1]{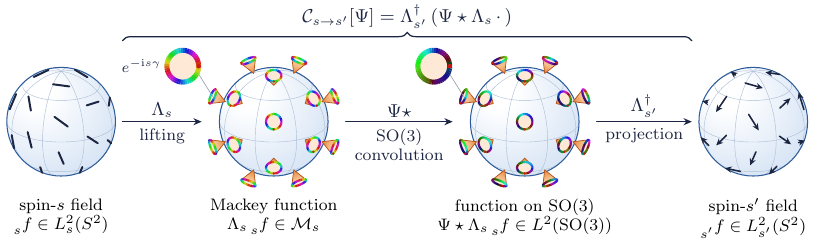}\\[1.5ex]
  \includegraphics[page=2]{figures/lifted_convolution.pdf}
  \caption{The generalised lifted spherical convolution $\Cop{s}{s'}[\Psi] = \proj{s'}(\Psi\conv\lift{s}\,\cdot\,)$, drawn for $s = 2$ and $s' = 1$.  Top: a spin-$s$ field on the sphere is lifted to a type-$s$ Mackey function on $\SOthree$, convolved on $\SOthree$ and projected to a spin-$s'$ field.  Cones are the $\SOtwo$ fibres, with the red dot at $\gamma = 0$; rims are coloured by the phase of the function on the fibre, which winds as $e^{-\ii s\gamma}$ after the lift and mixes all windings after the convolution.  Bottom: the same at one degree $\ell$.  The lift fills the Wigner column $n = -s$, the convolution multiplies by the conjugate transpose of the kernel block, and the projection reads the column $n = -s'$, so that only the entry $\Psi^{\ell}_{-s',-s}$ reaches the output, \eqref{eq:harmonic-ss}.}
  \label{fig:schematic}
\end{figure}

\subsection{Harmonic form and equivariance}
\label{sec:convolution:harmonic}

\begin{theorem}[Harmonic form]
\label{thm:harmonic}
Let $\Psi \in \Lrot$, $s, s' \in \mathbb{Z}$, and let $\widetilde{\Psi}(g) = \Psi^{*}(g^{-1})$.  For $\lvert m\rvert, \lvert n\rvert \le \ell$, with $\ell \ge \lvert s\rvert$ whenever the input is a spin-$s$ field and $\ell \ge \lvert s'\rvert$ whenever the output is a spin-$s'$ field,
\begin{align}
  \big(\Cop{s}{G}[\Psi]\, \sw{s}{f}\big)^{\ell}_{mn} &= (-1)^s \sqrt{\frac{8\pi^2}{2\ell+1}}\; \sw{s}{f}_{\ell m}\; \Psi^{\ell *}_{n,-s} ,
  \label{eq:harmonic-sG}\\
  \big(\Cop{s}{s'}[\Psi]\, \sw{s}{f}\big)_{\ell m} &= (-1)^{s+s'}\; \sw{s}{f}_{\ell m}\; \Psi^{\ell *}_{-s',-s} ,
  \label{eq:harmonic-ss}\\
  \big(\Cop{G}{s'}[\Psi]\, F\big)_{\ell m} &= (-1)^{s'} \sqrt{\frac{2\ell+1}{8\pi^2}}\; \sum_{m'=-\ell}^{\ell} F^{\ell}_{mm'}\; \Psi^{\ell *}_{-s',m'} ,
  \label{eq:harmonic-Gs}\\
  \big(\Cop{G}{G}[\Psi]\, F\big)^{\ell}_{mn} &= \sum_{m'=-\ell}^{\ell} F^{\ell}_{mm'}\; \Psi^{\ell *}_{nm'} ,
  \label{eq:harmonic-GG}
\end{align}
and the output of $\Cop{s}{s'}[\Psi]$ has no component of degree $\ell < \max(\lvert s\rvert,\lvert s'\rvert)$.  All four operators are bounded and $\SOthree$-equivariant for every $\Psi$: $\Cop{s}{s'}[\Psi]\,\pis{s}(g) = \pis{s'}(g)\,\Cop{s}{s'}[\Psi]$ and correspondingly with $\piG$ in place of $\pis{s}$ or $\pis{s'}$.  Their adjoints are $\big(\Cop{s}{G}[\Psi]\big)^{\dagger} = \Cop{G}{s}[\widetilde\Psi]$, $\big(\Cop{s}{s'}[\Psi]\big)^{\dagger} = \Cop{s'}{s}[\widetilde\Psi]$ and $\big(\Cop{G}{G}[\Psi]\big)^{\dagger} = \Cop{G}{G}[\widetilde\Psi]$.
\end{theorem}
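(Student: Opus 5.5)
The proof will run entirely in Wigner space: every operator in Definition~\ref{def:lifted-conv} is a composition of the lift, the group convolution and the projection, and each of these has a harmonic form stated earlier. The group-convolution formula \eqref{eq:group-conv-harmonic}, $(\Psi\conv F)^{\ell}_{mn}=\sum_{m'}F^{\ell}_{mm'}\Psi^{\ell*}_{nm'}$, is already \eqref{eq:harmonic-GG}. I would verify it once, by inserting the Wigner expansions \eqref{eq:wigner-transform} of $\Psi$ and $F$ into \eqref{eq:group-conv}. Then I would use $D^{\ell}(\rho^{-1}\rho')=D^{\ell}(\rho)^{\dagger}D^{\ell}(\rho')$ and integrate over $\rho'$ with the orthogonality relation. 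The factor $\tfrac{2\ell+1}{8\pi^2}$ produced by orthogonality should cancel against one normalisation factor, leaving the coefficients of $\Psi\conv F$ against the basis $D^{\ell*}_{mn}(\rho)$.

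Composing with Lemma~\ref{lem:lift}(c) gives the rest. For \eqref{eq:harmonic-sG}, substitute $F=\lift{s}\sw{s}{f}$, whose coefficients are $(-1)^s\sqrt{8\pi^2/(2\ell+1)}\,\sw{s}{f}_{\ell m}\delta_{m',-s}$. The sum over $m'$ then collapses to $m'=-s$. For \eqref{eq:harmonic-Gs}, apply $\proj{s'}$, which reads the column $n=-s'$ with factor $(-1)^{s'}\sqrt{(2\ell+1)/8\pi^2}$. For \eqref{eq:harmonic-ss}, do both, so the square-root constants cancel and only $\Psi^{\ell*}_{-s',-s}$ survives. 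The degree restriction follows from two facts: the lift has no content at $\ell<|s|$, and the column $n=-s'$ exists only when $\ell\ge|s'|$. This is also why the spin-$s'$ output harmonics are indexed by $\ell\ge|s'|$.

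For boundedness I would not use the harmonic forms, since $\sup_\ell|\Psi^\ell_{mn}|$ need not be controlled by a uniform statement there. Instead I would argue directly. $\lift{s}$ is an isometry and $\proj{s'}$ is its adjoint, so both have norm at most one. For the group convolution, Young's inequality on the compact group gives $\|\Psi\conv F\|_2\le\|\Psi\|_1\|F\|_2\le\sqrt{8\pi^2}\,\|\Psi\|_2\|F\|_2$ via Cauchy--Schwarz on the finite-measure group. Equivalently, Parseval and Cauchy--Schwarz give $|\Psi^\ell_{nm'}|\le\|\Psi\|_2\|D^\ell_{nm'}\|_2$, a bound uniform in $\ell$.

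Equivariance also comes from composition. The group convolution commutes with $\piG$ because the Haar measure is left invariant: substituting $\rho'\mapsto g\rho'$ in \eqref{eq:group-conv} shows $\Psi\conv\piG(g)F=\piG(g)(\Psi\conv F)$. Combined with the intertwining relations in Lemma~\ref{lem:lift}(e) (Proposition~\ref{prop:intertwining}), this handles all four operators. Alternatively, one can note that in harmonic space $\piG(g)$ and $\pis{s}(g)$ both act on the first index by $D^\ell(g)$, and the kernel acts only on the second.

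For the adjoints, I would first show $(\Psi\conv\cdot)^\dagger=\widetilde\Psi\conv\cdot$. Writing $\langle \Psi\conv F,G\rangle$ as a double integral and swapping the order by Fubini gives kernel $\Psi^*(\rho^{-1}\rho')$ in one ordering. Then $\widetilde\Psi^{*}(\rho'^{-1}\rho)=\Psi(\rho^{-1}\rho')$ identifies the adjoint. Finally $(\proj{s'}C\lift{s})^\dagger=\proj{s}C^\dagger\lift{s'}$, which yields all three stated adjoints.

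The main obstacle is bookkeeping of index conventions and conjugations in the group-convolution harmonic formula. One must track whether the transposed index lands on $n$ or $m'$ and handle the sign $(-1)^s$ correctly. I would cross-check the result on the scalar case $s=s'=0$, comparing $\Psi^{\ell*}_{00}$ against \eqref{eq:axisymmetric-harmonic}.
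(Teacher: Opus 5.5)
Your proposal is correct and is essentially the paper's proof: the harmonic forms come from collapsing \eqref{eq:group-conv-harmonic} onto the single lifted column $m'=-s$ and reading the column $n=-s'$ via Lemma~\ref{lem:lift}(c), equivariance comes from left invariance of the Haar measure together with Lemma~\ref{lem:lift}(e), and the adjoints come from $(\Psi\conv)^{\dagger}=\widetilde\Psi\conv$ by Fubini. The differences are minor. You bound $\Psi\conv$ by Young's inequality, whereas the paper uses Parseval with $\lVert\Psi^{\ell}\rVert_{\mathrm{F}}^2\le\frac{8\pi^2}{2\ell+1}\lVert\Psi\rVert^2$; note that your ``equivalent'' entrywise bound on $\Psi^{\ell}_{nm'}$ would not by itself control the full $(2\ell+1)\times(2\ell+1)$ block of $\Cop{G}{G}$ uniformly in $\ell$, so rely on the Young (or blockwise Frobenius) argument there.
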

\begin{proof}
By Lemma~\ref{lem:lift}(c) the Wigner coefficients of the lifted field are supported on the single column $m' = -s$, so that the sum over $m'$ in \eqref{eq:group-conv-harmonic} collapses to a single term and \eqref{eq:harmonic-sG} follows; applying $\proj{s'}$ by \eqref{eq:lift-harmonic} reads the column $n = -s'$ and multiplies by $(-1)^{s'}\sqrt{(2\ell+1)/8\pi^2}$, so that the degree-dependent constants cancel and \eqref{eq:harmonic-ss} follows, while \eqref{eq:harmonic-Gs} is $\proj{s'}$ applied to \eqref{eq:group-conv-harmonic}.  Equivariance follows since $\lift{s}$, $\proj{s'}$ and $\Psi\conv$ each intertwine the relevant representations, by Lemma~\ref{lem:lift}(e) and the left invariance of the Haar measure, and the adjoint identities follow from Fubini's theorem, the adjoint of $\Psi\conv$ being $\widetilde\Psi\conv$.  The details are given in Appendix~\ref{app:proofs}.
\end{proof}

Equation \eqref{eq:harmonic-ss} is the central formula of the article: a spin-$s$ to spin-$s'$ lifted convolution multiplies each spin-$s$ harmonic coefficient by a single Wigner coefficient of the kernel, conjugated and up to a sign, $(-1)^{s+s'}\Psi^{\ell *}_{-s',-s}$, with no degree-dependent constant.  The equivariant families listed in Table~\ref{tab:review} arise from the different ways in which the literature identifies spherical and group kernels, as we show in Sec.~\ref{sec:reductions}.  Figure~\ref{fig:kernel-support} in Sec.~\ref{sec:reductions} shows the part of the kernel block read by each of the four operators.

\begin{proposition}[Composition]
\label{prop:composition}
The four families of Definition~\ref{def:lifted-conv} are closed under composition, and composing multiplies kernel blocks: for kernels $\Psi_1, \Psi_2 \in \Lrot$,
\begin{equation}
  \Cop{G}{G}[\Psi_2]\,\Cop{G}{G}[\Psi_1] = \Cop{G}{G}[\Psi_3], \qquad \Psi_3^{\ell} = \Psi_2^{\ell}\,\Psi_1^{\ell},
  \label{eq:composition-GG}
\end{equation}
and likewise $\Cop{G}{s''}[\Psi_2]\,\Cop{s}{G}[\Psi_1] = \Cop{s}{s''}[\Psi_3]$, $\Cop{G}{G}[\Psi_2]\,\Cop{s}{G}[\Psi_1] = \Cop{s}{G}[\Psi_3]$ and $\Cop{G}{s''}[\Psi_2]\,\Cop{G}{G}[\Psi_1] = \Cop{G}{s''}[\Psi_3]$ with the same $\Psi_3$.  When the intermediate stage is a spin-$s'$ field only the intermediate index $-s'$ is retained: $\Cop{s'}{s''}[\Psi_2]\,\Cop{s}{s'}[\Psi_1] = \Cop{s}{s''}[\Psi_3]$ with $(\Psi_3)^{\ell}_{-s'',-s} = (\Psi_2)^{\ell}_{-s'',-s'}\,(\Psi_1)^{\ell}_{-s',-s}$, so that the multipliers of \eqref{eq:harmonic-ss} multiply, $c^{(3)}_\ell = c^{(2)}_\ell c^{(1)}_\ell$; correspondingly $\Cop{s'}{G}[\Psi_2]\,\Cop{s}{s'}[\Psi_1] = \Cop{s}{G}[\Psi_3]$ with $(\Psi_3)^{\ell}_{n,-s} = (\Psi_2)^{\ell}_{n,-s'}(\Psi_1)^{\ell}_{-s',-s}$.
\end{proposition}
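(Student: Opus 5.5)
The plan is to work entirely in Wigner space, using the matrix form of the group convolution \eqref{eq:group-conv-harmonic}. Write $F^{\ell}$ for the $(2\ell+1)\times(2\ell+1)$ block of Wigner coefficients of $F$. Then \eqref{eq:harmonic-GG} reads
\begin{equation}
  \big(\Cop{G}{G}[\Psi]F\big)^{\ell} = F^{\ell}\,(\Psi^{\ell})^{\dagger},
\end{equation}
where $^{\dagger}$ denotes the conjugate transpose. Composing two such operators therefore gives
\begin{equation}
  F^{\ell}(\Psi_1^{\ell})^{\dagger}(\Psi_2^{\ell})^{\dagger} = F^{\ell}\,(\Psi_2^{\ell}\Psi_1^{\ell})^{\dagger},
\end{equation}
which is \eqref{eq:composition-GG} with $\Psi_3^{\ell}=\Psi_2^{\ell}\Psi_1^{\ell}$. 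Since Wigner coefficients determine an element of $\Lrot$ uniquely by Peter--Weyl, equality of blocks gives equality of operators.

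The one genuinely non-formal point, and the step I expect to be the main obstacle, is that $\Psi_3$ must actually lie in $\Lrot$. I would bound the Frobenius norm by submultiplicativity, $\|\Psi_2^{\ell}\Psi_1^{\ell}\|_F\le\|\Psi_2^{\ell}\|_F\|\Psi_1^{\ell}\|_F$. Set
\begin{equation}
  a_\ell=\frac{2\ell+1}{8\pi^2}\|\Psi_1^{\ell}\|_F^2, \qquad b_\ell=\frac{2\ell+1}{8\pi^2}\|\Psi_2^{\ell}\|_F^2 ,
\end{equation}
both summable by Parseval. Then
\begin{equation}
  \frac{2\ell+1}{8\pi^2}\|\Psi_3^{\ell}\|_F^2\le\frac{8\pi^2}{2\ell+1}\,a_\ell b_\ell\le 8\pi^2\,(\sup_\ell b_\ell)\,a_\ell ,
\end{equation}
which is summable. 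Hence $\Psi_3\in\Lrot$.

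The mixed cases with a lift at the input and/or a projection at the output follow from the same identity. Each such operator is $\Cop{G}{G}$ pre-composed with $\lift{s}$ and/or post-composed with $\proj{s''}$ (Definition~\ref{def:lifted-conv}). For example,
\begin{equation}
  \Cop{G}{s''}[\Psi_2]\,\Cop{s}{G}[\Psi_1]=\proj{s''}\,\Cop{G}{G}[\Psi_2]\,\Cop{G}{G}[\Psi_1]\,\lift{s}=\proj{s''}\,\Cop{G}{G}[\Psi_3]\,\lift{s}=\Cop{s}{s''}[\Psi_3],
\end{equation}
and the other two listed identities follow identically by bracketing.

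For an intermediate spin-$s'$ stage, the composite is
\begin{equation}
  \proj{s''}\,(\Psi_2\conv\cdot\,)\,\lift{s'}\proj{s'}\,(\Psi_1\conv\cdot\,)\,\lift{s}.
\end{equation}
By Corollary~\ref{cor:fibre-decomposition}, $\lift{s'}\proj{s'}$ keeps only the Wigner column $n=-s'$. So rather than tracking the full blocks, I would use \eqref{eq:harmonic-ss} directly. The two multipliers are $(-1)^{s+s'}(\Psi_1)^{\ell*}_{-s',-s}$ and $(-1)^{s'+s''}(\Psi_2)^{\ell*}_{-s'',-s'}$, and their product is
\begin{equation}
  (-1)^{s+s''}\big[(\Psi_2)^{\ell}_{-s'',-s'}(\Psi_1)^{\ell}_{-s',-s}\big]^{*}.
\end{equation}
This is exactly the multiplier of $\Cop{s}{s''}[\Psi_3]$ for any $\Psi_3$ whose $(-s'',-s)$ entry equals the stated product. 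One admissible choice is the kernel supported on that single entry for each $\ell$. It lies in $\Lrot$ by the same Cauchy--Schwarz-type bound as above, since $\lvert(\Psi_2)^{\ell}_{-s'',-s'}\rvert\le\|\Psi_2^{\ell}\|_F$ is bounded up to the factor $\sqrt{8\pi^2/(2\ell+1)}$. Degrees below $\max(\lvert s\rvert,\lvert s'\rvert,\lvert s''\rvert)$ contribute nothing on either side by Theorem~\ref{thm:harmonic}, so the identity holds trivially there.

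The final case $\Cop{s'}{G}[\Psi_2]\,\Cop{s}{s'}[\Psi_1]$ is handled the same way by combining \eqref{eq:harmonic-sG} with \eqref{eq:harmonic-ss}. The composite has coefficients
\begin{equation}
  (-1)^{s}\sqrt{\tfrac{8\pi^2}{2\ell+1}}\;\sw{s}{f}_{\ell m}\,\big[(\Psi_2)^{\ell}_{n,-s'}(\Psi_1)^{\ell}_{-s',-s}\big]^{*},
\end{equation}
where the sign $(-1)^{s+s'}$ from the first stage and $(-1)^{s'}$ from the second combine to $(-1)^{s}$. Comparing with \eqref{eq:harmonic-sG} for $\Cop{s}{G}[\Psi_3]$ gives the stated column $(\Psi_3)^{\ell}_{n,-s}$, with $L^2$ membership checked as before.
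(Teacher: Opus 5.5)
Your argument is correct for every displayed identity and is essentially the paper's. You write $\Cop{G}{G}[\Psi]$ blockwise as $F^{\ell}\mapsto F^{\ell}\Psi^{\ell\dagger}$, compose, and obtain the mixed cases by pre-composing with $\lift{s}$ or post-composing with $\proj{s''}$. There are three differences.

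\emph{Square integrability of $\Psi_3$.} The paper exhibits $\Psi_3=\widetilde\Psi_1\conv\Psi_2$, whose blocks are $\Psi_2^{\ell}\widetilde\Psi_1^{\ell\dagger}=\Psi_2^{\ell}\Psi_1^{\ell}$, and uses the boundedness of $\widetilde\Psi_1\conv$. This gives the kernel in real space. Your Frobenius--Parseval estimate is an equally valid alternative and gives a quantitative bound. You also check square integrability of the single-entry and column kernels in the spin-intermediate cases, which the paper leaves implicit.

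\emph{Spin-intermediate cases with group input.} For the spin-$s'$ intermediate stage you multiply scalar multipliers and columns. The paper instead forms the block $\Psi_2^{\ell}E_{-s'}\Psi_1^{\ell}$, with $E_{-s'}$ the diagonal matrix with a single unit entry at $-s'$. Your route proves the two displayed identities. It does not cover $\Cop{s'}{s''}[\Psi_2]\,\Cop{G}{s'}[\Psi_1]$ and $\Cop{s'}{G}[\Psi_2]\,\Cop{G}{s'}[\Psi_1]$, which the opening closure claim also asserts. The block identity $F^{\ell}\Psi_1^{\ell\dagger}E_{-s'}\Psi_2^{\ell\dagger}=F^{\ell}(\Psi_2^{\ell}E_{-s'}\Psi_1^{\ell})^{\dagger}$ handles all four spin-intermediate cases at once. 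In particular it gives $\Cop{G}{s''}[\Psi_3]$ and $\Cop{G}{G}[\Psi_3]$ with $\Psi_3^{\ell}=\Psi_2^{\ell}E_{-s'}\Psi_1^{\ell}$, and your estimate shows this $\Psi_3$ is square integrable, since $\lVert\Psi_2^{\ell}E_{-s'}\Psi_1^{\ell}\rVert_F\le\lVert\Psi_2^{\ell}\rVert_F\lVert\Psi_1^{\ell}\rVert_F$.

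\emph{Low degrees.} For $\max(\lvert s\rvert,\lvert s''\rvert)\le\ell<\lvert s'\rvert$, the right-hand side $\Cop{s}{s''}[\Psi_3]$ vanishes, but not by Theorem~\ref{thm:harmonic}. It vanishes because $E_{-s'}=0$ at those degrees; equivalently, the stated entry $(\Psi_3)^{\ell}_{-s'',-s}$ involves the out-of-range index $-s'$ and is zero. You should state this convention explicitly when you define your single-entry kernel.
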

\begin{proof}
By \eqref{eq:group-conv-harmonic}, $(\Psi_2\conv(\Psi_1\conv F))^{\ell} = F^{\ell}\,\Psi_1^{\ell\dagger}\,\Psi_2^{\ell\dagger} = F^{\ell}(\Psi_2^{\ell}\Psi_1^{\ell})^{\dagger}$, which is \eqref{eq:composition-GG}; inserting $\lift{s'}\proj{s'}$ between the two convolutions, whose action in Wigner space is the projector onto the column $n = -s'$ by Corollary~\ref{cor:fibre-decomposition}, retains only the intermediate index $-s'$, and the remaining cases follow by omitting the outer lift or projection.  See Appendix~\ref{app:proofs}.
\end{proof}

Consequently, consecutive sphere-to-sphere convolutions require no intermediate transforms, since their multipliers simply multiply, whereas an intermediate stage on the rotation group between two spin fields sums over all intermediate spins, which is the content of Proposition~\ref{prop:band-limited} below.  The forthcoming article \cite{mulder2026spin} relies on this rule to stack layers.

\subsection{Lossless projection and the kernel constraint}
\label{sec:convolution:constraint}

Only one column of $\Psi$ is read by $\Cop{s}{G}[\Psi]$ and only one entry by $\Cop{s}{s'}[\Psi]$.  The projection discards nothing precisely when the intermediate function on $\SOthree$ is already a Mackey function of the target type.

\begin{corollary}[Lossless projection and kernel constraint]
\label{cor:constraint}
The following are equivalent:
\begin{enumerate}
  \item[(i)] $\Psi \conv \lift{s} f \in \Mackey{s'}$ for all $f \in \Lsph{s}$;
  \item[(ii)] $\Psi^{\ell}_{n,-s} = 0$ for all $n \ne -s'$ and all $\ell \ge \lvert s\rvert$;
  \item[(iii)] the type-$s$ right-isotypic part $\Psi^{(s)}(g) = \frac{1}{2\pi}\int_0^{2\pi}\Psi(gR_z(\gamma))\,e^{\ii s\gamma}\dd\gamma$ of the kernel satisfies the two-sided constraint
  \begin{equation}
    \Psi^{(s)}\big(R_z(\chi_1)\, g\, R_z(\chi_2)\big) = e^{-\ii s'\chi_1}\; \Psi^{(s)}(g)\; e^{-\ii s\chi_2}
    \qquad\text{for all } \chi_1, \chi_2 .
    \label{eq:two-sided}
  \end{equation}
\end{enumerate}
In that case $\Cop{s}{s'}[\Psi] f = \sqrt{2\pi}\,(\Psi \conv \lift{s} f)\circ s$, the constrained part of the kernel is
\begin{equation}
  \Psi^{(s)}(\alpha,\beta,\gamma) = \sum_{\ell \ge \max(\lvert s\rvert,\lvert s'\rvert)} \frac{2\ell+1}{8\pi^2}\; \Psi^{\ell}_{-s',-s}\; e^{-\ii s'\alpha}\, d^{\ell}_{-s',-s}(\beta)\, e^{-\ii s\gamma} ,
  \label{eq:constrained-kernel}
\end{equation}
where the series converges in $L^2$.  For every kernel lifted from a spin-$s$ field, $\Psi=\lift{s}\psi$, whether or not projection is lossless,
\begin{equation}
  \big(\Cop{s}{s'}[\lift{s}\psi]\, \sw{s}{f}\big)_{\ell m} = (-1)^{s'} \sqrt{\frac{8\pi^2}{2\ell+1}}\; \sw{s}{f}_{\ell m}\; \sw{s}{\psi}^{*}_{\ell,-s'} .
  \label{eq:harmonic-ss-lifted}
\end{equation}
For such a lifted kernel, losslessness is equivalent to $\sw{s}{\psi}_{\ell m}=\sw{s}{\psi}_{\ell,-s'}\delta_{m,-s'}$.
\end{corollary}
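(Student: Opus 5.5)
The whole argument lives in Wigner space. Theorem~\ref{thm:harmonic}, Lemma~\ref{lem:lift} and Corollary~\ref{cor:fibre-decomposition} reduce each condition to a statement about the single column $-s$ of the kernel blocks $\Psi^\ell$.

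\emph{(i)$\Leftrightarrow$(ii).} By \eqref{eq:harmonic-sG}, $(\Psi\conv\lift{s}f)^{\ell}_{mn}$ is a nonzero constant times $\sw{s}{f}_{\ell m}\Psi^{\ell*}_{n,-s}$. By Corollary~\ref{cor:fibre-decomposition}, a function lies in $\Mackey{s'}$ exactly when its Wigner coefficients vanish off the column $n=-s'$. Hence (i) holds for all $f$ if and only if $\sw{s}{f}_{\ell m}\Psi^{\ell *}_{n,-s}=0$ for all $f$ and all $n\neq -s'$. Choosing $f=\sw{s}{Y}_{\ell m}$ for each $\ell\ge|s|$ gives (ii); the converse is immediate.

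\emph{(ii)$\Leftrightarrow$(iii).} First I compute the Wigner coefficients of $\Psi^{(s)}$. Right translation by $R_z(\gamma)$ multiplies $D^{\ell*}_{mn}$ by $e^{-\ii n\gamma}$, using \eqref{eq:wignerD}. The fibre average against $e^{\ii s\gamma}$ then keeps only the column $n=-s$, so $(\Psi^{(s)})^\ell_{mn}=\Psi^\ell_{mn}\delta_{n,-s}$. Thus $\Psi^{(s)}$ automatically satisfies the right half of \eqref{eq:two-sided}. Similarly, left multiplication by $R_z(\chi_1)$ multiplies the coefficient with first index $m$ by $e^{-\ii m\chi_1}$. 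The left condition with phase $e^{-\ii s'\chi_1}$ for all $\chi_1$ is therefore equivalent to support in $m=-s'$, which is exactly (ii) for the column $-s$. Justifying these translation rules in $L^2$ needs only unitarity of the translations and the Peter--Weyl expansion.

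\emph{Consequences.} Once the support is reduced to the single entry $(-s',-s)$, substituting it into \eqref{eq:wigner-transform} together with \eqref{eq:wignerD} gives \eqref{eq:constrained-kernel}, with $L^2$ convergence from Parseval. The lower limit $\ell\ge\max(|s|,|s'|)$ comes from two facts: $\Psi^\ell_{-s',-s}$ only exists for $\ell\ge|s'|$ as a matrix index, and $\ell\ge|s|$ since the relevant inputs have no lower degrees. For the section formula, (i) says the convolution lies in $\Mackey{s'}$, and Lemma~\ref{lem:lift}(b) then gives $\proj{s'}=\sqrt{2\pi}\,(\cdot)\circ s$ there. For \eqref{eq:harmonic-ss-lifted}, insert \eqref{eq:lift-harmonic} for the kernel: $\Psi^\ell_{-s',-s}=(-1)^s\sqrt{8\pi^2/(2\ell+1)}\,\sw{s}{\psi}_{\ell,-s'}$. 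Putting this into \eqref{eq:harmonic-ss}, the signs $(-1)^{s+s'}(-1)^s$ reduce to $(-1)^{s'}$. This uses no constraint, since only one entry of the kernel is ever read.

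\emph{Lifted kernels.} For $\Psi=\lift{s}\psi$, condition (ii) reads $\Psi^\ell_{n,-s}\propto\sw{s}{\psi}_{\ell n}=0$ for $n\neq-s'$, which is the stated last claim. Note that the column of $\lift{s}\psi$ is $-s$, so this is precisely the column tested.

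The step I expect to be most delicate is the (ii)$\Leftrightarrow$(iii) bookkeeping. The signs of the exponents under left versus right translation in the conjugate-$D$ convention are easy to get wrong, and the care needed is in confirming that the left phase $e^{-\ii s'\chi_1}$ corresponds to first index $-s'$ rather than $+s'$. I would check this directly on $D^{\ell*}_{mn}(R_z(\chi_1)\,g)=e^{\ii m\chi_1}\cdots$, with the relevant Fourier coefficient taken in the fibre convention of Definition~\ref{def:mackey}.
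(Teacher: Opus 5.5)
Your proposal is correct and follows the paper's proof essentially step for step: (i)$\Leftrightarrow$(ii) from \eqref{eq:harmonic-sG} and the column characterisation of $\Mackey{s'}$; (ii)$\Leftrightarrow$(iii) by identifying $\Psi^{(s)}$ with the column $-s$ of $\Psi$ and reading off the left phase; and the remaining claims by substituting \eqref{eq:lift-harmonic} into \eqref{eq:harmonic-ss}. One slip in the write-up needs fixing: $D^{\ell *}_{mn}$ acquires $e^{+\ii n\gamma}$ under $g\mapsto gR_z(\gamma)$ and $e^{+\ii m\chi_1}$ under $g\mapsto R_z(\chi_1)g$ (it is the unconjugated $D^{\ell}_{mn}$ that carries the minus signs), and these are the phases that give the column $-s$ and first index $-s'$ you correctly conclude, as your closing check confirms.
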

\begin{proof}
The equivalence of (i) and (ii) follows from \eqref{eq:harmonic-sG} and the characterisation of $\Mackey{s'}$ as the column $n = -s'$ of Wigner space (Lemma~\ref{lem:lift}); that of (ii) and (iii) by expanding $\Psi^{(s)}$ in Wigner functions and noting that $D^{\ell *}_{mn}(R_z(\chi_1)gR_z(\chi_2)) = e^{\ii m\chi_1}D^{\ell *}_{mn}(g)e^{\ii n\chi_2}$.  The evaluation formula is Lemma~\ref{lem:lift}(b), \eqref{eq:constrained-kernel} is the Wigner expansion restricted to the entry $(-s',-s)$, and \eqref{eq:harmonic-ss-lifted} is \eqref{eq:harmonic-ss} with Lemma~\ref{lem:lift}(c).  See Appendix~\ref{app:proofs}.
\end{proof}

\subsection{Classification of equivariant maps}
\label{sec:convolution:classification}

Theorem~\ref{thm:harmonic} shows that every generalised lifted spherical convolution between spin fields is a multiplier in spin spherical harmonic space.  We now prove the converse.

\begin{theorem}[Classification]
\label{thm:classification}
Let $T : \Lsph{s} \to \Lsph{s'}$ be a bounded linear map with $T\pis{s}(g) = \pis{s'}(g)T$ for all $g \in \SOthree$.  Then there is a unique bounded sequence $(c_\ell)_{\ell \ge \max(\lvert s\rvert,\lvert s'\rvert)}$ such that
\begin{equation}
  (T\, \sw{s}{f})_{\ell m} = c_\ell\; \sw{s}{f}_{\ell m}
  \qquad \text{for all } \ell \ge \max(\lvert s\rvert,\lvert s'\rvert),\ \lvert m \rvert \le \ell ,
  \label{eq:classification}
\end{equation}
and $T$ annihilates the input components with degrees absent from the target; conversely every bounded sequence defines such a $T$, with $\lVert T\rVert = \sup_\ell\lvert c_\ell\rvert$.  Moreover:
\begin{enumerate}
  \item[(a)] $T = \Cop{s}{s'}[\Psi]$ for some $\Psi \in \Lrot$ if and only if $\sum_\ell (2\ell+1)\lvert c_\ell\rvert^2 < \infty$, \ie\ if and only if $T$ is Hilbert--Schmidt.  The kernel is not unique, but the representative whose entire kernel satisfies the two-sided law \eqref{eq:two-sided}, with $\Psi$ in place of $\Psi^{(s)}$, is unique: $\Psi_T{}^{\ell}_{-s',-s} = (-1)^{s+s'}c^{*}_{\ell}$ and all other Wigner coefficients are zero.
  \item[(b)] Every bounded $T$ is $\Cop{s}{s'}[\Psi_T]$ in the sense of Fourier multipliers, with $\Psi_T$ the distribution on $\SOthree$ with the Wigner coefficients of (a).
  \item[(c)] Set $d_\ell = 2\ell+1$ and $a_\ell = d_\ell/(8\pi^2)$.  The equivariant maps of the other three types act on coefficients as
  \begin{equation}
    (T_{s\to G}f)^\ell_{mn} = u^\ell_n\,\sw{s}{f}_{\ell m},\qquad
    (T_{G\to s'}F)_{\ell m} = \sum_k v^\ell_k F^\ell_{mk},\qquad
    (T_{G\to G}F)^\ell = F^\ell B^\ell .
    \label{eq:classification-blocks}
  \end{equation}
  They are bounded if and only if the respective bounds
  \begin{align}
    \lVert T_{s\to G}\rVert &= \sup_{\ell\ge\lvert s\rvert}\sqrt{a_\ell}\,\lVert u^\ell\rVert_2 < \infty, \notag\\
    \lVert T_{G\to s'}\rVert &= \sup_{\ell\ge\lvert s'\rvert}\frac{\lVert v^\ell\rVert_2}{\sqrt{a_\ell}} < \infty,\qquad
    \lVert T_{G\to G}\rVert = \sup_{\ell\ge0}\lVert B^\ell\rVert_{\mathrm{op}} < \infty
    \label{eq:classification-norms}
  \end{align}
  hold.  Degrees absent from the input or output spin space contribute zero.  These maps are realised by the column, row and full block of a kernel, respectively, with
  \begin{equation}
    u^\ell_n = (-1)^s a_\ell^{-1/2}\Psi^{\ell *}_{n,-s},\qquad
    v^\ell_k = (-1)^{s'}a_\ell^{1/2}\Psi^{\ell *}_{-s',k},\qquad
    B^\ell = \Psi^{\ell\dagger} .
    \label{eq:classification-kernels}
  \end{equation}
  Setting all unread entries to zero gives a canonical distributional kernel, which is square integrable if and only if the map is Hilbert--Schmidt.  The corresponding summability conditions are given in Appendix~\ref{app:proofs:classification}.
\end{enumerate}
\end{theorem}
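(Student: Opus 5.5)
The plan is to work entirely in harmonic space and use Schur's lemma, organised by degree. By Lemma~\ref{lem:lift}(c) and the harmonic rotation law \eqref{eq:rotation-coeffs}, $\Lsph{s}$ decomposes isometrically as $\widehat{\bigoplus}_{\ell\ge\lvert s\rvert} \mathcal{H}_\ell$. Each $\mathcal{H}_\ell=\operatorname{span}\{\sw{s}{Y}_{\ell m}\}_m$ carries the irreducible representation $D^\ell$ of $\SOthree$, and it does so in the same way for every spin. Similarly $\Lrot$ decomposes by Peter--Weyl into blocks indexed by $\ell$. Each block consists of the coefficient matrices $F^\ell$. Left translation acts by $D^\ell(g)$ on the first index, the row index $m$, and leaves the second index untouched, so the block is $2\ell+1$ copies of $D^\ell$ labelled by the column index.

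\textbf{Step 1: the multiplier form \eqref{eq:classification}.} Given a bounded equivariant $T:\Lsph{s}\to\Lsph{s'}$, compose it with the orthogonal projection onto the degree-$\ell'$ subspace of the target. This gives an intertwiner from $D^\ell$ to $D^{\ell'}$. By Schur's lemma it vanishes for $\ell\ne\ell'$ and is a scalar $c_\ell$ for $\ell=\ell'$. Because the coefficient bases $\sw{s}{Y}_{\ell m}$ and $\sw{s'}{Y}_{\ell m}$ transform by the identical matrices $D^\ell_{nm}$ in \eqref{eq:spin-harmonic-rotation}, this scalar acts on coefficients with the same index $m$. Degrees $\ell$ with $\ell<\lvert s'\rvert$ have no target component, so $T$ annihilates them. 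Boundedness gives $\lvert c_\ell\rvert\le\lVert T\rVert$, and Parseval gives $\lVert T\rVert=\sup\lvert c_\ell\rvert$. The converse direction is immediate from the same Parseval identity and \eqref{eq:rotation-coeffs}. Uniqueness holds because $c_\ell$ is read off from $T\,\sw{s}{Y}_{\ell m}$.

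\textbf{Step 2: part (a).} By \eqref{eq:harmonic-ss}, $\Cop{s}{s'}[\Psi]$ has multiplier $c_\ell=(-1)^{s+s'}\Psi^{\ell*}_{-s',-s}$. For necessity, Parseval for $\Psi\in\Lrot$ gives $\sum_\ell(2\ell+1)\lvert c_\ell\rvert^2\le 8\pi^2\lVert\Psi\rVert^2<\infty$. This sum is exactly $\lVert T\rVert_{\mathrm{HS}}^2$, since each degree contributes $2\ell+1$ eigenvalues $c_\ell$. For sufficiency, define $\Psi_T$ by the single nonzero entry $\Psi_T{}^\ell_{-s',-s}=(-1)^{s+s'}c_\ell^*$; it lies in $\Lrot$ by Parseval. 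Corollary~\ref{cor:constraint} with $\Psi^{(s)}=\Psi$ identifies the two-sided law \eqref{eq:two-sided} with support on the single entry $(-s',-s)$ in every row and column. That entry is forced by $c_\ell$, which gives uniqueness of this representative. Non-uniqueness of general kernels holds because all other entries are unread.

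\textbf{Step 3: parts (b) and (c).} Part (b) is the same construction with $\Psi_T$ interpreted as a distribution given by its Wigner coefficients, together with \eqref{eq:harmonic-ss} read formally. For (c), Schur's lemma applied blockwise gives the forms \eqref{eq:classification-blocks}. A map from the single $D^\ell$ copy into the $2\ell+1$ copies in the group block is a vector $u^\ell$ over the copies. A map back is a vector $v^\ell$. A map between group blocks is a matrix acting on the right, $F^\ell B^\ell$, since it must commute with left multiplication by $D^\ell(g)$. The norms follow from the two Parseval identities. With weight $a_\ell$ on $\Lrot$ and weight $1$ on the sphere, $\lVert T_{s\to G}f\rVert^2=\sum_\ell a_\ell\lVert u^\ell\rVert^2\sum_m\lvert f_{\ell m}\rvert^2$. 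For $T_{G\to s'}$, Cauchy--Schwarz on each row, with equality attained, gives $\lVert v^\ell\rVert/\sqrt{a_\ell}$. For $T_{G\to G}$, the norm on each block is that of right multiplication, $\lVert B^\ell\rVert_{\mathrm{op}}$. The identifications \eqref{eq:classification-kernels} are read directly from \eqref{eq:harmonic-sG}--\eqref{eq:harmonic-GG}, and the Hilbert--Schmidt statement is again Parseval.

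The main obstacle is Step 1, specifically justifying that the intertwiner preserves $m$ across different spins. This requires the sign and index conventions of \eqref{eq:spin-harmonic-rotation} to agree exactly for $s$ and $s'$. It also requires care that Schur's lemma applies to the bounded operator restricted to and projected from closed isotypic components, which is true here because every block is finite dimensional.
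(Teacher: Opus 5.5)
Your proposal is correct and follows essentially the same route as the paper's proof: Schur's lemma applied degree by degree, using that the spin-$s$ and spin-$s'$ harmonics of degree $\ell$ transform by the same matrices $D^\ell$ by \eqref{eq:rotation-coeffs}, Parseval's identity for the norms and the Hilbert--Schmidt condition, and the decomposition of the left regular representation into $2\ell+1$ Wigner columns for part (c). The one step to tighten is (b): rather than reading \eqref{eq:harmonic-ss} ``formally'', define $\Cop{s}{s'}[\Psi_T]$ on the dense subspace of band-limited fields, where it agrees with $T$, and then extend it by continuity using $\sup_\ell\lvert c_\ell\rvert<\infty$, as the paper does.
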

\begin{proof}
By Frobenius reciprocity the multiplicity of the degree-$\ell$ irreducible representation of $\SOthree$ in $\Ind_H^G\rho^{(s)}$ equals the multiplicity of $\rho^{(s)}$ in its restriction to $\SOtwo$, which is one if $\ell \ge \lvert s\rvert$ and zero otherwise, so that $\Lsph{s}$ decomposes without multiplicity into the subspaces spanned by the $\sw{s}{Y}_{\ell m}$, $\lvert m\rvert \le \ell$, on which $\SOthree$ acts by the same matrices $D^{\ell}$ for every spin by \eqref{eq:rotation-coeffs}.  Schur's lemma then forces $T$ to map each such subspace to its counterpart of the same degree by a scalar $c_\ell$, and to annihilate degrees absent from the target; boundedness of $T$ is equivalent to $\sup_\ell \lvert c_\ell\rvert < \infty$.  Part (a) follows by comparing \eqref{eq:harmonic-ss} with \eqref{eq:classification} and computing $\lVert\Psi_T\rVert^2 = \sum_\ell\frac{2\ell+1}{8\pi^2}\lvert c_\ell\rvert^2$, part (b) by extending $\Psi\conv$ to distributional kernels with bounded Wigner coefficients, and part (c) by the same argument applied to the left regular representation, in which the degree-$\ell$ representation occurs with multiplicity $2\ell+1$, one copy per column.  Parseval's identity weights each Wigner block by $a_\ell$, whereas the spin harmonic coefficients have unit weight, giving the norms \eqref{eq:classification-norms}.  The details are given in Appendix~\ref{app:proofs}.
\end{proof}

For scalar fields Theorem~\ref{thm:classification} is Theorem~1 and Lemma~10 of Kondor \& Trivedi \cite{kondor2018generalization} and, for $s = s' = 0$, the characterisation of isotropic operators of Kennedy \etal\ \cite{kennedy2011azimuthally}; for general types it is contained in Theorems~3.2--3.4 of Cohen, Geiger \& Weiler \cite{cohen2019general}, which for the sphere describe the kernels as functions on $[0,\pi]$ unconstrained except at the poles, and in Proposition~1 and Corollary~1 of Xu \etal\ \cite{xu2022unified}.  Xu \etal\ give the character lift, the single-column harmonic representation, the single-entry input/output kernel and a completeness argument explicitly, while Esteves \etal\ \cite{esteves2020spin} parameterise spin-changing multipliers directly.  Here we collect the four cases in the conventions of Sec.~\ref{sec:prelim}, including the degree ranges, the exact norm bounds and the distinction between square integrable and distributional kernels.  The identity already requires the distributional interpretation of part (b).

The differential operators of Sec.~\ref{sec:reductions:physics} require a separate extension beyond bounded maps.  For a possibly unbounded sequence $(c_\ell)$, set $c_\ell=0$ on input degrees absent from the target and define
\begin{equation}
  \mathcal{D}(T) = \left\{ f\in\Lsph{s} : \sum_{\ell m}\lvert c_\ell\rvert^2\lvert\sw{s}{f}_{\ell m}\rvert^2<\infty\right\} .
  \label{eq:multiplier-domain}
\end{equation}
The coefficient formula \eqref{eq:classification} defines a closed, densely defined operator on this rotation-invariant domain and commutes there with rotations.  Polynomial growth of $c_\ell$ gives a distributional kernel; an arbitrary sequence still acts on finite harmonic sums but need not define a distribution.  For $\eth$ and $\bar\eth$ the domain is the first-order spin Sobolev space, and for the second-order operators below it is the second-order space, with the finitely many zero low-degree multipliers imposing no additional restriction.

\begin{corollary}[Equivariant projections and lifts]
\label{cor:projection-general}
Every bounded $\SOthree$-equivariant linear map $\Lrot \to \Lsph{s'}$ is $\Cop{G}{s'}[\Psi]$ for a kernel (in general a distribution) whose only nonzero row is $n=-s'$ in each degree $\ell\ge\lvert s'\rvert$.  If the map acts by $(TF)_{\ell m}=\sum_{m'}c^\ell_{m'}F^\ell_{mm'}$, then $c^\ell_{m'}=(-1)^{s'}\sqrt{(2\ell+1)/(8\pi^2)}\,\Psi^{\ell *}_{-s',m'}$, subject to \eqref{eq:classification-norms}.  The adjoint projection $\proj{s'}$ is the case $\Psi^{\ell *}_{-s',m'}=\delta_{m',-s'}$, which reads one Wigner column.  Dually, every bounded equivariant map $\Lsph{s} \to \Lrot$ is $\Cop{s}{G}[\Psi]$ for a kernel with a single column $m'=-s$, and $\lift{s}$ is the case $\Psi^{\ell *}_{n,-s}=\delta_{n,-s}$.
\end{corollary}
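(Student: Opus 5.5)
The plan is to read this corollary off Theorem~\ref{thm:classification}(c). That result already says that every bounded equivariant map $T:\Lrot\to\Lsph{s'}$ acts as $(TF)_{\ell m}=\sum_k v^\ell_k F^\ell_{mk}$ under the norm condition \eqref{eq:classification-norms}, and that it is realised by the row of a kernel through \eqref{eq:classification-kernels}. So I would take the vectors $v^\ell=c^\ell$ and define a kernel $\Psi$ (a distribution in general) by its Wigner coefficients:
\[
\Psi^{\ell *}_{-s',m'}=(-1)^{s'}\sqrt{8\pi^2/(2\ell+1)}\,c^\ell_{m'}\quad\text{for }\ell\ge\lvert s'\rvert,
\]
with all other entries set to zero. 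Conjugating gives the coefficients themselves.

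Next I would apply \eqref{eq:harmonic-Gs}. Extended to distributional kernels in the sense of Theorem~\ref{thm:classification}(b), it gives $(\Cop{G}{s'}[\Psi]F)_{\ell m}=(-1)^{s'}\sqrt{(2\ell+1)/(8\pi^2)}\sum_{m'}F^\ell_{mm'}\Psi^{\ell*}_{-s',m'}$. This equals $\sum_{m'}c^\ell_{m'}F^\ell_{mm'}$, so $T=\Cop{G}{s'}[\Psi]$, and the stated relation between $c^\ell$ and $\Psi$ is exactly this identity. Only the row $n=-s'$ of $\Psi$ enters, so the zero entries elsewhere are a free choice that makes the kernel canonical. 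The norm condition transfers directly because $\lVert v^\ell\rVert_2/\sqrt{a_\ell}$ is the quantity already controlled in \eqref{eq:classification-norms}.

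For the projection $\proj{s'}$ I would use \eqref{eq:lift-harmonic}, which gives $(\proj{s'}F)_{\ell m}=(-1)^{s'}\sqrt{(2\ell+1)/8\pi^2}\,F^\ell_{m,-s'}$. Matching this with the formula above forces $\Psi^{\ell*}_{-s',m'}=\delta_{m',-s'}$. This kernel has bounded Wigner coefficients, so it is a legitimate distributional kernel even though it is not square integrable. I would also check that the bound $\lVert v^\ell\rVert/\sqrt{a_\ell}$ stays below $1$ for every $\ell$, consistent with $\proj{s'}$ being a contraction.

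The dual statement follows in the same way from the form $T_{s\to G}$ in Theorem~\ref{thm:classification}(c) together with \eqref{eq:harmonic-sG}, where the kernel is a single column $m'=-s$. For the lift, \eqref{eq:lift-harmonic} gives $(\lift{s}f)^\ell_{mn}=(-1)^s\sqrt{8\pi^2/(2\ell+1)}\,\sw{s}{f}_{\ell m}\delta_{n,-s}$. Comparing this with \eqref{eq:harmonic-sG} yields $\Psi^{\ell*}_{n,-s}=\delta_{n,-s}$. This case can also be obtained by taking adjoints, using $(\Cop{s}{G}[\Psi])^\dagger=\Cop{G}{s}[\widetilde\Psi]$ and $\proj{s}=\lift{s}^\dagger$.

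The step I expect to need the most care is justifying the distributional extension: \eqref{eq:harmonic-Gs} has to hold for kernels with only bounded (not summable) coefficients. I would handle this by defining $\Cop{G}{s'}[\Psi]$ for such kernels directly through the Fourier-multiplier formula, as in part (b). Boundedness of the resulting operator then comes from Cauchy--Schwarz applied in each degree, followed by Parseval's identity.
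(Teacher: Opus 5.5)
Your proposal is correct and follows the paper's own argument: the corollary is read off from the row and column cases of Theorem~\ref{thm:classification}(c), with the canonical kernel supported on a single row (or column) of Wigner coefficients, and the cases $\proj{s'}$ and $\lift{s}$ are identified by matching \eqref{eq:lift-harmonic} against \eqref{eq:harmonic-Gs} and \eqref{eq:harmonic-sG}. One small correction: for $\proj{s'}$ the ratio $\lVert v^\ell\rVert_2/\sqrt{a_\ell}$ equals exactly $1$ for every $\ell\ge\lvert s'\rvert$, rather than staying below $1$, which is consistent with $\proj{s'}$ being a coisometry of norm one.
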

\begin{proof}
This is Theorem~\ref{thm:classification}(c) with the normalisations of Lemma~\ref{lem:lift}(c).
\end{proof}

The learned projecting convolution of the forthcoming article \cite{mulder2026spin} uses the family in Corollary~\ref{cor:projection-general}.

\begin{corollary}[Incompatibility of directional kernels, spherical output and equivariance]
\label{cor:nogo}
The kernel, in general a distribution, of a bounded equivariant map from one fixed input spin to one fixed output spin, $T : \Lsph{s} \to \Lsph{s'}$, is
\begin{equation}
  K(x,y) = (-1)^{s+s'} \sum_{\ell} c_\ell\, \frac{2\ell+1}{4\pi}\; D^{\ell}_{-s',-s}\big(s(x)^{-1}s(y)\big) ,
  \label{eq:nogo-kernel}
\end{equation}
a radial kernel with the frame phases forced by \eqref{eq:section-change}; for $s=s'=0$ it is $\sum_\ell c_\ell(2\ell+1)P_\ell(x\cdot y)/(4\pi)$.  Apart from the angular characters prescribed by the input and output spins, such a map cannot retain independent directional kernel content.  Thus, beyond the angular characters prescribed by the spins, directional kernels, a fixed-spin output on the sphere and equivariance cannot be achieved simultaneously for a fixed input spin.
\end{corollary}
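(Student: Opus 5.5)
The plan is to start from Theorem~\ref{thm:classification}, which says $T$ is the spin multiplier $(T\,\sw{s}{f})_{\ell m}=c_\ell\,\sw{s}{f}_{\ell m}$, and then write $T$ as an integral operator by inserting the spin harmonic expansion. Formally,
\[
(T\,\sw{s}{f})(x)=\int_{\sphere} K(x,y)\,\sw{s}{f}(y)\dmu(y),
\qquad
K(x,y)=\sum_{\ell}c_\ell\sum_{m}\sw{s'}{Y}_{\ell m}(x)\,\sw{s}{Y}^{*}_{\ell m}(y).
\]
When $c_\ell$ is merely bounded, this series is read in the distributional sense of Theorem~\ref{thm:classification}(b).

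The first step is to convert the inner sum over $m$ into a single Wigner function. Using \eqref{eq:sY-D} at $\gamma=0$, I will write $\sw{s'}{Y}_{\ell m}(x)=(-1)^{s'}\sqrt{(2\ell+1)/4\pi}\,D^{\ell *}_{m,-s'}(s(x))$ and $\sw{s}{Y}^{*}_{\ell m}(y)=(-1)^{s}\sqrt{(2\ell+1)/4\pi}\,D^{\ell}_{m,-s}(s(y))$. The representation property and unitarity, $D^\ell(\rho^{-1})=D^\ell(\rho)^\dagger$, then give
\[
\sum_m D^{\ell *}_{m,-s'}(s(x))\,D^{\ell}_{m,-s}(s(y))=D^{\ell}_{-s',-s}\big(s(x)^{-1}s(y)\big),
\]
which is exactly \eqref{eq:nogo-kernel}.

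The second step checks the structural claims. The element $s(x)^{-1}s(y)$ has middle Euler angle equal to the geodesic distance between $x$ and $y$. Changing the section by $R_z(\chi)$ on either side multiplies $D^\ell_{-s',-s}$ by $e^{\ii s'\chi_1}$ and $e^{-\ii s\chi_2}$-type phases. So the $\beta$-dependence, through $d^\ell_{-s',-s}(\beta)$, is the only freedom, and the rest is the frame phase dictated by \eqref{eq:section-change}. I will match the signs against Corollary~\ref{cor:constraint}. For $s=s'=0$, $D^\ell_{00}=P_\ell(\cos\beta)$ with $\cos\beta=x\cdot y$ recovers the Legendre form.

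The "no-go" statement is the step I expect to be conceptual rather than computational. The argument is that the entire content of $T$ is the single sequence $(c_\ell)$, one scalar per degree, while a genuinely directional spherical kernel $\psi$ carries the degree-$\ell$ data $\psi_{\ell n}$ for every $|n|\le\ell$. By Theorem~\ref{thm:harmonic} and \eqref{eq:harmonic-ss-lifted}, any lifted convolution between fixed spins reads only the entry $n=-s'$ (or $\Psi^\ell_{-s',-s}$), and all other orders are discarded. Hence two kernels that differ away from that entry induce the same $T$, so directional content beyond the prescribed characters cannot survive. The main care needed here is stating precisely what "directional content" means: I will take it to mean azimuthal orders $n\ne -s'$ in the kernel, and note that the corollary then follows directly from the uniqueness in Theorem~\ref{thm:classification}(a).
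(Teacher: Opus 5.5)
Your proposal is correct and follows the paper's own proof. Both expand the multiplier of Theorem~\ref{thm:classification} in spin harmonics, collapse the sum over $m$ with \eqref{eq:sY-D} and $D^{\ell}(s(x))^{\dagger}D^{\ell}(s(y))=D^{\ell}(s(x)^{-1}s(y))$, read the middle Euler angle of $s(x)^{-1}s(y)$ as the geodesic distance, and use $d^{\ell}_{00}=P_\ell$; your harmonic-space reading of the no-go through \eqref{eq:harmonic-ss-lifted} is a harmless supplement to the paper's radial-kernel reading.

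One small sign fix: the change of section $s(x)\mapsto s(x)R_z(\chi(x))$ gives $K'(x,y)=e^{-\ii s'\chi(x)}K(x,y)\,e^{+\ii s\chi(y)}$. Right multiplication by $R_z(\chi_2)$ contributes $e^{+\ii s\chi_2}$, not $e^{-\ii s\chi_2}$. Also, since $K$ is built from $D^{\ell}$ rather than $D^{\ell *}$, its characters are the conjugates of those in \eqref{eq:two-sided}.
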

\begin{proof}
By \eqref{eq:classification}, $K(x,y) = \sum_\ell c_\ell\sum_m\sw{s'}{Y}_{\ell m}(x)\sw{s}{Y}^{*}_{\ell m}(y)$, and the sum over $m$ is evaluated by the addition theorem for spin spherical harmonics, which follows from \eqref{eq:sY-D} and $D^{\ell}(s(x)^{-1}s(y)) = D^{\ell}(s(x))^{\dagger}D^{\ell}(s(y))$; the second Euler angle of $s(x)^{-1}s(y)$ is the geodesic distance between $x$ and $y$, and the first and third are the phases forced by the frames at $x$ and $y$.  See Appendix~\ref{app:proofs}.
\end{proof}

Corollary~\ref{cor:nogo} adds an equivariance qualification to the criteria of Roddy \& McEwen \cite{roddy2021sifting}: between two fixed spins the kernel is radial with prescribed frame phases, not necessarily an axisymmetric spin filter in the sense of Definition~\ref{def:spin-axisymmetric}.  For example, a scalar filter $Y_{\ell,-1}$ can produce an equivariant spin-$1$ output; its prescribed angular character is not independent directionality.  Retaining additional directional content from a fixed input spin requires a tuple of output spins, equivalently a function on the rotation group.  The restriction does not apply to a group-to-spin synthesis, whose input already contains several spin components.

\section{Reduction to existing spherical convolutions}
\label{sec:reductions}

In this section we recover the rotation-equivariant convolutions treated in Sec.~\ref{sec:review} as generalised lifted spherical convolutions.  For each family we state the input, the part of the Wigner block $\Psi^{\ell}_{nm'}$ of the kernel that is read, and the output, and we recover the harmonic form together with its constant.  The results are collected in Table~\ref{tab:taxonomy}, and Fig.~\ref{fig:kernel-support} shows the part of the kernel block that each operator reads.  Throughout, $\lift{s}\psi$ denotes a kernel lifted from a spin-$s$ field $\psi$ on the sphere; by Lemma~\ref{lem:lift}(c) such a kernel has Wigner support in the single column $m'=-s$.  The reductions follow by direct calculation from Theorem~\ref{thm:harmonic} and Lemma~\ref{lem:lift}; proofs not immediate from the harmonic formulas are supplied in Appendix~\ref{app:proofs}.

\begin{figure}[t]
  \centering
  \includegraphics[page=3]{figures/lifted_convolution.pdf}
  \caption{The part of the kernel block $\Psi^{\ell}_{nm'}$ read by each of the four operators of Definition~\ref{def:lifted-conv} at one degree $\ell$: a single entry $(-s',-s)$ for sphere to sphere, the column $m' = -s$ for sphere to $\SOthree$, the row $n = -s'$ for $\SOthree$ to sphere, and the full block for $\SOthree$ to $\SOthree$, together with the convolutions of this section that each case covers and the costs of Sec.~\ref{sec:computation}.}
  \label{fig:kernel-support}
\end{figure}

\begin{table}[t]
  \caption{Taxonomy of spherical convolutions as generalised lifted spherical convolutions.  Kernel support is the set of Wigner coefficients $\Psi^{\ell}_{nm'}$ of the kernel that are read; $\lift{s}\psi$ denotes a kernel lifted from a spin-$s$ field $\psi$ on the sphere.}
  \label{tab:taxonomy}
  \scriptsize\setlength{\tabcolsep}{2.5pt}
  \centering
  \resizebox{\textwidth}{!}{%
  \begin{tabular}{llll}
    \toprule
    Convolution (Prop.) & Operator & Input $\to$ output & Kernel support \\
    \midrule
    Directional, scalar (\ref{prop:directional}) & $\Cop{0}{G}[\lift{0}\psi]$ & $\Lsph{0}$ $\to$ $\SOthree$ & col.\ $m'=0$ \\
    Directional, spin (\ref{prop:directional}) & $\Cop{s}{G}[\lift{s}\psi]$ & $\Lsph{s}$ $\to$ $\SOthree$ & col.\ $m'=-s$ \\
    Synthesis, transpose (\ref{prop:synthesis}) & $\Cop{G}{s}[\widetilde{\lift{s}\psi}]$ & $\Lrot$ $\to$ $\Lsph{s}$ & row $n=-s$ \\
    Axisymmetric (\ref{prop:axisymmetric}) & $(2\pi)^{-1/2}\Cop{0}{0}[\lift{0}\psi]$ & $\Lsph{0}$ $\to$ $\Lsph{0}$ & entry $(0,0)$ \\
    Left (\ref{prop:driscoll-healy}) & $\Cop{0}{0}[\Psi_h]$ & $\Lsph{0}$ $\to$ $\Lsph{0}$ & entry $(0,0)$ \\
    Spin $s \to s'$ (\ref{prop:spin-change}) & $\Cop{s}{s'}[\Psi]$ & $\Lsph{s}$ $\to$ $\Lsph{s'}$ & entry $(-s',-s)$ \\
    Band-limited directional (\ref{prop:band-limited}) & $\bigoplus_{\lvert s'\rvert<N}\Cop{s}{s'}[\Psi]$ & $\Lsph{s}$ $\to$ $\bigoplus_{\lvert s'\rvert<N}\Lsph{s'}$ & col.\ $-s$, $\lvert n\rvert<N$ \\
    Group convolution (\ref{prop:group}) & $\Cop{G}{G}[\Psi]$ & $\Lrot$ $\to$ $\Lrot$ & all \\
    $\eth$, $E/B$, Kaiser--Squires (\ref{prop:physics}) & $\Cop{s}{s'}[\Psi_T]$ & $\Lsph{s}$ $\to$ $\Lsph{s'}$ & entry $(-s',-s)$, distrib. \\
  \bottomrule
  \end{tabular}}
\end{table}

\subsection{Directional convolutions and synthesis}
\label{sec:reductions:directional}

We begin with the directional convolution, which is the lifted convolution with no projection and a kernel lifted from the sphere.

\begin{proposition}[Directional convolutions]
\label{prop:directional}
For $\sw{s}{f}, \sw{s}{\psi} \in \Lsph{s}$ (including $s = 0$),
\begin{equation}
  \sw{s}{f} \circledast \sw{s}{\psi} = \Cop{s}{G}[\lift{s}\psi]\; \sw{s}{f},
  \qquad
  (\sw{s}{f} \circledast \sw{s}{\psi})^{\ell}_{mn} = \frac{8\pi^2}{2\ell+1}\; \sw{s}{f}_{\ell m}\, \sw{s}{\psi}^{*}_{\ell n} ,
\end{equation}
exactly, with no residual constant.  The kernel $\lift{s}\psi$ has Wigner support in the column $m' = -s$.
\end{proposition}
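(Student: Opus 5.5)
The plan is to compare harmonic coefficients on both sides and then invoke completeness of the Wigner basis. The left-hand side is already known: by \eqref{eq:spin-directional-harmonic}, the directional spin convolution has Wigner coefficients $\frac{8\pi^2}{2\ell+1}\,\sw{s}{f}_{\ell m}\,\sw{s}{\psi}^{*}_{\ell n}$. For the right-hand side I will apply Theorem~\ref{thm:harmonic}, equation \eqref{eq:harmonic-sG}, with the kernel $\Psi = \lift{s}\psi$. This gives $(-1)^s\sqrt{8\pi^2/(2\ell+1)}\,\sw{s}{f}_{\ell m}\,\Psi^{\ell *}_{n,-s}$.

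Next I substitute the Wigner coefficients of the lifted kernel from Lemma~\ref{lem:lift}(c):
\[
\Psi^{\ell}_{n,m'} = (-1)^s\sqrt{8\pi^2/(2\ell+1)}\;\sw{s}{\psi}_{\ell n}\,\delta_{m',-s}.
\]
This formula also proves the final claim of the proposition: the kernel is supported in the single column $m'=-s$. Setting $m'=-s$ and conjugating, the result is real up to $\sw{s}{\psi}^{*}_{\ell n}$. The two signs combine to $(-1)^{2s}=1$. The two square roots multiply to $8\pi^2/(2\ell+1)$. This reproduces \eqref{eq:spin-directional-harmonic} exactly, with no leftover constant. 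The case $s=0$ is included because nothing in the computation depends on $s\neq 0$.

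Since both sides lie in $\Lrot$ (boundedness follows from Theorem~\ref{thm:harmonic}) and have identical Wigner coefficients for every $\ell\ge\lvert s\rvert$ and every $\lvert m\rvert,\lvert n\rvert\le\ell$, Peter--Weyl completeness \eqref{eq:wigner-transform} makes them equal in $L^2$. Degrees $\ell<\lvert s\rvert$ vanish on both sides because spin-$s$ fields have no content there.

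The main obstacle is keeping the conventions straight rather than any conceptual step. One check is the placement of indices in \eqref{eq:group-conv-harmonic}: the kernel enters as $\Psi^{\ell *}_{nm'}$, with its first index becoming the output's second index. The other is the conjugation of the $(-1)^s$ sign, which is harmless since it is real. As a cross-check that avoids harmonic space, I could also verify directly from the definitions that $(\Psi\conv\lift{s}f)(\rho)=\langle \lift{s}f,\piG(\rho)\lift{s}\psi\rangle$. Using Lemma~\ref{lem:lift}(e), this equals $\langle\lift{s}f,\lift{s}\Rot_\rho\psi\rangle$, and the isometry of Lemma~\ref{lem:lift}(a) turns it into $\langle f,\Rot_\rho\psi\rangle$. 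That is exactly Definition~\ref{def:spin-directional}, and it confirms that the normalisations cancel.
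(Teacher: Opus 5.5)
Your proof is correct, but its logical order is the reverse of the paper's.

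\textbf{The paper's route.} The paper's primary argument is exactly what you call a cross-check. By Lemma~\ref{lem:lift}(e), the isometry of the lift and Proposition~\ref{prop:intertwining}, $(\lift{s}\psi\conv\lift{s}f)(g)=\langle f,\pis{s}(g)\psi\rangle$ for every $g$, which is Definition~\ref{def:spin-directional} itself. Only afterwards does it compute the harmonic form from \eqref{eq:harmonic-sG} and Lemma~\ref{lem:lift}(c). That computation is the same cancellation of signs and square roots that you carry out.

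\textbf{Your route.} Your primary argument takes \eqref{eq:spin-directional-harmonic} as input and concludes by Peter--Weyl completeness. This is legitimate and not circular: \eqref{eq:spin-directional-harmonic} rests on the rotation law \eqref{eq:rotation-coeffs}, which the paper establishes independently at the end of the proof of Proposition~\ref{prop:intertwining}.

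\textbf{What each buys.} Your route yields equality only in $L^2$, i.e.\ almost everywhere. The real-space argument gives the identity at every rotation, and it also derives \eqref{eq:spin-directional-harmonic} rather than assuming it. The pointwise version is used later: the proof of Proposition~\ref{prop:axisymmetric} evaluates both sides on the section $s(\omega)$, which is a null set. With your route you would therefore need to add that both sides are continuous to upgrade almost-everywhere equality to equality everywhere. Both are continuous: the group convolution of two square integrable functions is continuous, and $\rho\mapsto\langle f,\Rot_\rho\psi\rangle$ is continuous by strong continuity of the representation.

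Since you already include the real-space computation, making it the main argument and the harmonic comparison the check would match the paper and remove that extra step.
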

\begin{proof}
By the intertwining relation of Lemma~\ref{lem:lift}(e) and the isometry of the lift, $(\lift{s}\psi\conv\lift{s}f)(g) = \langle\lift{s}f, \piG(g)\lift{s}\psi\rangle = \langle\lift{s}f,\lift{s}\pis{s}(g)\psi\rangle = \langle f, \pis{s}(g)\psi\rangle$, which is Definition~\ref{def:spin-directional} since $\pis{s}(g) = \Rot_g$ by Proposition~\ref{prop:intertwining}; the harmonic form is \eqref{eq:harmonic-sG} with Lemma~\ref{lem:lift}(c).
\end{proof}

Proposition~\ref{prop:directional} covers the directional convolutions of Wandelt \& G\'orski \cite{wandelt2001fast}, Challinor \etal\ \cite{challinor2000all}, McEwen \etal\ \cite{mcewen2007fast,mcewen2013directional,mcewen2015directional} and Wiaux \etal\ \cite{wiaux2006fast,wiaux2007complex}, and the $\sphere$ correlation of Cohen \etal\ \cite{cohen2018spherical} for real signals with the measure normalised to unity.  Note that the absence of a residual constant is a consequence of the isometric normalisation of the lift: the directional convolution on the sphere \emph{is} the group convolution of the lifts.

\begin{proposition}[Synthesis and transpose convolution]
\label{prop:synthesis}
For $W \in \Lrot$ and $\sw{s}{\psi} \in \Lsph{s}$,
\begin{gather}
  W \circledast^{\dagger} \sw{s}{\psi} = \int_{\SOthree} W(\rho)\, \big(\pis{s}(\rho)\,\sw{s}{\psi}\big) \dmu(\rho) = \Cop{G}{s}\big[\widetilde{\lift{s}\psi}\big]\, W = \big(\Cop{s}{G}[\lift{s}\psi]\big)^{\dagger} W , \\
  (W \circledast^{\dagger} \sw{s}{\psi})_{\ell m} = \sum_n W^{\ell}_{mn}\, \sw{s}{\psi}_{\ell n} .
\end{gather}
Synthesis is the adjoint of analysis, and the flipped kernel $\widetilde{\lift{s}\psi}$ has Wigner support in the row $n = -s$.
\end{proposition}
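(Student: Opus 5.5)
The plan is to establish the three equalities in turn and then read off the harmonic form and the kernel support. For the second equality I use the adjoint identity of Theorem~\ref{thm:harmonic}, $\big(\Cop{s}{G}[\Psi]\big)^{\dagger} = \Cop{G}{s}[\widetilde\Psi]$, applied with $\Psi=\lift{s}\psi$. That gives $\Cop{G}{s}[\widetilde{\lift{s}\psi}] = \big(\Cop{s}{G}[\lift{s}\psi]\big)^{\dagger}$ with no further work. It then remains to show that the integral $\int W(\rho)\,\pis{s}(\rho)\psi\dmu(\rho)$ is the adjoint of the analysis operator $f\mapsto \sw{s}{f}\circledast\sw{s}{\psi}$. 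By Proposition~\ref{prop:directional}, that analysis operator equals $\Cop{s}{G}[\lift{s}\psi]$.

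The adjoint computation is direct. For $f\in\Lsph{s}$ and $W\in\Lrot$, Definition~\ref{def:spin-directional} gives
\[
\langle \Cop{s}{G}[\lift{s}\psi] f, W\rangle = \int_{\SOthree} \langle f,\pis{s}(\rho)\psi\rangle\, W^{*}(\rho)\dmu(\rho) = \Big\langle f, \int_{\SOthree} W(\rho)\,\pis{s}(\rho)\psi\dmu(\rho)\Big\rangle .
\]
The interchange is Fubini's theorem. It is justified because $\rho\mapsto\pis{s}(\rho)\psi$ is strongly continuous with constant norm $\lVert\psi\rVert$, by the unitarity in Proposition~\ref{prop:intertwining}. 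Since $\SOthree$ has finite mass, $W\in L^1$, so the vector-valued integral converges as a Bochner integral in $\Lsph{s}$ with norm at most $\lVert W\rVert_{L^1}\lVert\psi\rVert$. Proposition~\ref{prop:intertwining} also identifies $\pis{s}(\rho)$ with the rotation $\Rot_\rho$ of \eqref{eq:spin-rotation}. This identifies the synthesis of Definition~\ref{def:synthesis}, read with the spin rotation, with the adjoint, and so proves the first and third equalities.

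For the harmonic form there are two routes. The first takes spin harmonic coefficients inside the Bochner integral and uses \eqref{eq:rotation-coeffs}, $(\pis{s}(\rho)\psi)_{\ell m} = \sum_n D^\ell_{mn}(\rho)\,\sw{s}{\psi}_{\ell n}$. Integrating against $W$ then gives $\sum_n W^\ell_{mn}\,\sw{s}{\psi}_{\ell n}$, by the definition \eqref{eq:wigner-transform} of $W^\ell_{mn}$. As a cross-check, the second route applies \eqref{eq:harmonic-Gs} with $s'=s$. This needs the Wigner coefficients of the flipped kernel. I would compute these from $\widetilde\Psi(g)=\Psi^*(g^{-1})$ and $D^\ell(g^{-1})=D^\ell(g)^\dagger$, together with the left invariance and inversion invariance of the Haar measure. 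The result is $\widetilde\Psi^\ell_{mn} = \Psi^{\ell *}_{nm}$, i.e.\ $\widetilde\Psi^\ell=\Psi^{\ell\dagger}$. By Lemma~\ref{lem:lift}(c), $\Psi=\lift{s}\psi$ is supported in column $n=-s$, so $\widetilde\Psi$ is supported in row $n=-s$, which proves the support claim. Substituting into \eqref{eq:harmonic-Gs}, the factor $(-1)^{s}\sqrt{(2\ell+1)/8\pi^2}$ there multiplies the factor $(-1)^s\sqrt{8\pi^2/(2\ell+1)}$ from the lift. The signs square away and the constants cancel, reproducing $\sum_n W^\ell_{mn}\,\sw{s}{\psi}_{\ell n}$.

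The main obstacle is bookkeeping rather than analysis. The conjugations must be tracked correctly through the flip $\widetilde\Psi$ and the index transpose, so that the row read in \eqref{eq:harmonic-Gs} is exactly $-s$ and the conjugate in $\Psi^{\ell*}_{-s,m'}$ undoes the conjugate on $\sw{s}{\psi}$. I would verify this first on $\psi=\sw{s}{Y}_{\ell n}$ using Lemma~\ref{lem:lift}(d).
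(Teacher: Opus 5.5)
Your proof is correct, but it organises the chain of equalities differently from the paper.

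\textbf{The paper's route.} It proves the convolution form directly. It writes $\pis{s}(\rho)\psi=\proj{s}\piG(\rho)\lift{s}\psi$ and pulls the bounded $\proj{s}$ out of the Bochner integral. It then evaluates the remaining $\Lrot$-valued integral at a point $g$. With $\Phi=\lift{s}\psi$ and the identity $\widetilde\Phi^{*}(g^{-1}\rho)=\Phi(\rho^{-1}g)$, this recognises $\int W(\rho)\Phi(\rho^{-1}g)\dmu(\rho)=(\widetilde\Phi\conv W)(g)$. The adjoint statement then follows as a corollary of Theorem~\ref{thm:harmonic}. The harmonic form is read off from \eqref{eq:harmonic-Gs}, which is your second route.

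\textbf{Your route.} You prove the adjoint relation directly, in weak form through Proposition~\ref{prop:directional}. You obtain $\Cop{G}{s}[\widetilde{\lift{s}\psi}]$ only afterwards, through the adjoint identity of Theorem~\ref{thm:harmonic}.

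\textbf{What each buys.} Your argument is analytically a little cleaner. It needs only that a Bochner integral commutes with the bounded functional $\langle f,\cdot\rangle$, rather than a pointwise evaluation of an $\Lrot$-valued integral. It also makes ``synthesis is the adjoint of analysis'' the primary fact. The paper's argument exhibits the synthesis structurally as $\proj{s}(\widetilde\Phi\conv W)$ without invoking the reduction of the directional convolution. Its identification with the lifted framework therefore does not depend on Proposition~\ref{prop:directional}.

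\textbf{Your other steps.} Your first harmonic route, through \eqref{eq:rotation-coeffs}, is the framework-free derivation already sketched after Definition~\ref{def:synthesis}. Your explicit check $\widetilde\Psi^{\ell}=\Psi^{\ell\dagger}$ supplies the row-support claim, which the paper uses implicitly; it is verified in the appendix proof of \eqref{eq:group-conv-harmonic}.

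\textbf{Two small attributions.} The strong continuity of $\rho\mapsto\pis{s}(\rho)\psi$ comes from that of the left regular representation composed with the bounded $\proj{s}$, not from unitarity. The interchange you call Fubini's theorem is the commutation of the Bochner integral with a bounded functional. Neither point affects the argument.
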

\begin{proof}
Since $\pis{s}(\rho)\psi = \proj{s}\piG(\rho)\lift{s}\psi$ and $\proj{s}$ is bounded, it may be taken outside the integral, and $\int W(\rho)(\piG(\rho)\Phi)(g)\dmu(\rho) = (\widetilde\Phi\conv W)(g)$ since $\widetilde\Phi^{*}(g^{-1}\rho)=\Phi(\rho^{-1}g)$; the harmonic form follows from \eqref{eq:harmonic-Gs} and the adjoint identity from Theorem~\ref{thm:harmonic}.
\end{proof}

Proposition~\ref{prop:synthesis} recovers the Haar-measure synthesis step of the directional wavelet transforms of \cite{mcewen2007fast,wiaux2008exact,mcewen2013directional,mcewen2015directional}, whose synthesis kernels compensate the constant $8\pi^2/(2\ell+1)$ of \eqref{eq:spin-directional-harmonic} either through the dual kernel $\frac{2\ell+1}{8\pi^2}\psi_{\ell n}$ \cite{wiaux2008exact,mcewen2013directional} or through the admissibility condition \cite{mcewen2015directional}, so that the sum over scales of analysis followed by synthesis is the identity.  Consequently, the admissibility condition \eqref{eq:spin-admissibility} is the statement that the sum of the operators $A_j^{\dagger}A_j$ over the scales, with $A_j$ the analysis operators of Proposition~\ref{prop:directional} and Proposition~\ref{prop:axisymmetric} below, is the identity, \ie\ that the wavelets form a Parseval frame.

\subsection{Axisymmetric convolutions}
\label{sec:reductions:axisymmetric}

\begin{proposition}[Axisymmetric convolution]
\label{prop:axisymmetric}
For axisymmetric $\psi$, $\psi_{\ell m} = \psi_{\ell 0}\delta_{m0}$,
\begin{equation}
  f \odot \psi = \frac{1}{\sqrt{2\pi}}\, \Cop{0}{0}[\lift{0}\psi]\, f ,
  \qquad
  (f \odot \psi)_{\ell m} = \sqrt{\frac{4\pi}{2\ell+1}}\; f_{\ell m}\, \psi^{*}_{\ell 0} ,
\end{equation}
and $\lift{0}\psi$ satisfies the constraint \eqref{eq:two-sided} with $s = s' = 0$.  The same holds for the axisymmetric spin convolution of Definition~\ref{def:spin-axisymmetric}, which is $(2\pi)^{-1/2}\Cop{s}{0}[\lift{s}\Phi]$ with harmonic form \eqref{eq:spin-axisymmetric-harmonic}.  For a non-axisymmetric $\psi$ the left-hand side is the standard correlation of Definition~\ref{def:standard-correlation}, which is not of the form $\Cop{0}{0}$ (Sec.~\ref{sec:beyond}).
\end{proposition}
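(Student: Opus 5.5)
The plan is to work entirely in harmonic space and apply the lifted harmonic formula \eqref{eq:harmonic-ss-lifted} of Corollary~\ref{cor:constraint}, comparing the result with the known harmonic form \eqref{eq:axisymmetric-harmonic}. For the scalar case set $s=s'=0$. Corollary~\ref{cor:constraint} then gives $(\Cop{0}{0}[\lift{0}\psi]f)_{\ell m} = \sqrt{8\pi^2/(2\ell+1)}\, f_{\ell m}\psi^{*}_{\ell 0}$. Multiplying by $(2\pi)^{-1/2}$ turns the prefactor into $\sqrt{4\pi/(2\ell+1)}$, which is exactly \eqref{eq:axisymmetric-harmonic}. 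Both operators are bounded and have the same spherical harmonic coefficients, so they coincide on $L^2(\sphere)$.

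We need to justify \eqref{eq:axisymmetric-harmonic} itself, since the paper only cites it. Expand $\langle f,\Rot_\omega\psi\rangle$ using \eqref{eq:rotation-coeffs} with $\rho=(\varphi,\theta,0)$. Axisymmetry, $\psi_{\ell n}=\psi_{\ell 0}\delta_{n0}$, leaves the single term $\sum_{\ell m} f_{\ell m}\psi^{*}_{\ell 0}D^{\ell *}_{m0}(\varphi,\theta,0)$. By \eqref{eq:D-sY} with $n=0$, $D^{\ell *}_{m0}(\varphi,\theta,0)=\sqrt{4\pi/(2\ell+1)}\,Y_{\ell m}(\theta,\varphi)$, and reading off coefficients gives the result. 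An alternative is a direct real-space argument via Proposition~\ref{prop:directional}. There, $\Cop{0}{G}[\lift{0}\psi]f = f\circledast\psi$. For axisymmetric $\psi$ this function is independent of $\gamma$, because $\Rot_{(\varphi,\theta,\gamma)}\psi=\Rot_{(\varphi,\theta,0)}\psi$. Hence $\proj{0}$, which integrates over $\gamma$ against $(2\pi)^{-1/2}$, returns $\sqrt{2\pi}\,(f\odot\psi)$. This confirms the constant independently.

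For the constraint, Corollary~\ref{cor:constraint}(ii) requires $(\lift{0}\psi)^{\ell}_{n,0}=0$ for $n\neq 0$. By Lemma~\ref{lem:lift}(c), $(\lift{0}\psi)^{\ell}_{n m'}\propto\psi_{\ell n}\delta_{m',0}$, which vanishes for $n\ne0$ precisely by axisymmetry. This matches the final sentence of Corollary~\ref{cor:constraint}, and (iii) then gives \eqref{eq:two-sided}.

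The spin case runs in the same way with input spin $s$ and output spin $0$. By \eqref{eq:harmonic-ss-lifted} with $s'=0$ we get $(\Cop{s}{0}[\lift{s}\Phi]f)_{\ell m}=\sqrt{8\pi^2/(2\ell+1)}\,\sw{s}{f}_{\ell m}\sw{s}{\Phi}^{*}_{\ell0}$, and dividing by $\sqrt{2\pi}$ yields \eqref{eq:spin-axisymmetric-harmonic}. The only care needed is the spin analogue of the verification above: \eqref{eq:spin-harmonic-rotation} together with \eqref{eq:sY-D} at $n=0$ gives the scalar harmonic $Y_{\ell m}$ on the output side. For the last sentence of the proposition, a non-axisymmetric $\psi$ makes $f\circledast\psi$ depend on $\gamma$, and evaluating at $\gamma=0$ is Definition~\ref{def:standard-correlation}. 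Any $\Cop{0}{0}$ is a multiplier independent of $m$ by Theorem~\ref{thm:harmonic}. The harmonic coefficients of the standard correlation, $\sum_n \psi^{*}_{\ell n}\langle D^{\ell *}_{mn}(\cdot,\cdot,0),Y_{\ell' m}\rangle f_{\ell m}$, mix degrees whenever $n\ne0$, so they do not have this form. The detailed treatment is deferred to Sec.~\ref{sec:beyond}. The main obstacle is merely tracking the sign and normalisation constants. Since $s'=0$ removes the $(-1)^{s'}$ factor, I expect no real difficulty.
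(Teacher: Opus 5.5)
Your proof is correct, but your main route differs from the paper's.

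\textbf{The paper's route.} The paper works in real space. It evaluates Proposition~\ref{prop:directional} on the section, $\langle f,\Rot_\omega\psi\rangle=(\Cop{0}{G}[\lift{0}\psi]f)(s(\omega))$. Axisymmetry is condition (ii) of Corollary~\ref{cor:constraint}, so this output lies in $\Mackey{0}$. Lemma~\ref{lem:lift}(b) then converts section evaluation into $(2\pi)^{-1/2}\proj{0}$. The harmonic form is afterwards \emph{read off} from \eqref{eq:harmonic-ss-lifted}, not derived separately.

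\textbf{Your route.} You compute both sides in harmonic space. This obliges you to derive \eqref{eq:axisymmetric-harmonic} from \eqref{eq:rotation-coeffs} and \eqref{eq:D-sY}, and your derivation is right. You then conclude by coefficient matching. Your ``alternative'' argument is essentially the paper's proof. The direct observation $\Rot_{(\varphi,\theta,\gamma)}\psi=\Rot_{(\varphi,\theta,0)}\psi$ replaces the appeal to Corollary~\ref{cor:constraint} and Lemma~\ref{lem:lift}(b).

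\textbf{What each buys.} The paper's route explains the factor $(2\pi)^{-1/2}$ structurally: it is the gap between evaluating a Mackey function on the section and projecting it with the adjoint. It also shows that axisymmetry is exactly losslessness of the projection. That is why the non-axisymmetric case becomes an unconstrained section evaluation, the standard correlation. Your harmonic route is more elementary and avoids fibrewise representatives. It also checks the literature constant $\sqrt{4\pi/(2\ell+1)}$ independently against \eqref{eq:harmonic-ss-lifted}, rather than deducing one from the other.

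\textbf{The final clause.} Your claim that the coefficients ``mix degrees whenever $n\ne0$'' is loose. The $n$-terms can cancel at particular $(\ell,m)$, and your displayed coefficient omits the sum over the input degree. A clean argument is to test on $Y_{\ell m}$. Equality with a multiplier $c_\ell$ independent of $m$ would require $\sum_n\psi^{*}_{\ell n}d^{\ell}_{mn}(\theta)=c_\ell\sqrt{(2\ell+1)/(4\pi)}\,d^{\ell}_{m0}(\theta)$ for every $m$. Invertibility of $d^{\ell}(\theta)$ then forces $\psi_{\ell n}\propto\delta_{n0}$. The paper itself defers this clause to Sec.~\ref{sec:beyond}, so this is a refinement rather than a gap.
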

\begin{proof}
By Proposition~\ref{prop:directional} evaluated at the section, $\langle f,\Rot_\omega\psi\rangle = (\Cop{0}{G}[\lift{0}\psi]f)(s(\omega))$, and since the output lies in $\Mackey{0}$ when $\psi$ is axisymmetric (Corollary~\ref{cor:constraint}), Lemma~\ref{lem:lift}(b) gives $(\Cop{0}{G}[\lift{0}\psi]f)\circ s = (2\pi)^{-1/2}\proj{0}\Cop{0}{G}[\lift{0}\psi]f$; the constant follows from \eqref{eq:harmonic-ss-lifted}.  The spin case is identical with $s' = 0$.
\end{proof}

Proposition~\ref{prop:axisymmetric} recovers the axisymmetric convolution of \cite{mcewen2007fast,kennedy2011azimuthally,roddy2021sifting}, the axisymmetric wavelets and needlets of \cite{narcowich2006localized,leistedt2013s2let}, and the scaling coefficients of the spin wavelet transform of \cite{mcewen2015directional}, which are a spin $s \to 0$ convolution.  The factor $(2\pi)^{-1/2}$ is the difference between evaluating a Mackey function on the section and projecting it with the adjoint.

\begin{proposition}[Left convolution]
\label{prop:driscoll-healy}
For any $h \in L^2(\sphere)$ let $\Psi_h(g) = h^{*}(g^{-1}\eta)$.  Then
\begin{equation}
  f \star_{\mathrm{L}} h = \Cop{0}{0}[\Psi_h]\, f ,
  \qquad
  (f \star_{\mathrm{L}} h)_{\ell m} = 2\pi \sqrt{\frac{4\pi}{2\ell+1}}\; f_{\ell m}\, h_{\ell 0} .
\end{equation}
Only the entry $(\Psi_h)^{\ell}_{00}$ is read, so that only the zonal part of $h$ enters; the factor $2\pi$ is the fibre volume that appears because \eqref{eq:driscoll-healy} lifts $f$ and $h$ without the normalisation of Definition~\ref{def:lift}.  The marginalisation of the bilinear directional correlation over the third Euler angle used in \cite{kennedy2011azimuthally,esteves2018learning} is $\sqrt{2\pi}\,\proj{0}$.  Applied instead to the sesquilinear correlation of Definition~\ref{def:directional}, it conjugates the spherical kernel.
\end{proposition}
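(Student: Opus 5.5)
The plan is to work directly from Definition~\ref{def:group-conv} and Definition~\ref{def:lift}, rewriting the Driscoll--Healy integral \eqref{eq:driscoll-healy} as a group convolution of a lifted $f$ with $\Psi_h$ and a projection. Then I will read off the harmonic form from \eqref{eq:harmonic-ss} and the marginalisation statement from the definition of $\proj{0}$.

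\textbf{Real-space identity.} First compute $(\Psi_h\conv\lift{0}f)(g) = \int_{\SOthree}\Psi_h^{*}(g^{-1}\rho)\,(\lift{0}f)(\rho)\dmu(\rho)$. Since $\Psi_h^{*}(g^{-1}\rho)=h(\rho^{-1}g\eta)$ and $(\lift{0}f)(\rho)=f(\rho\eta)/\sqrt{2\pi}$, this equals $(2\pi)^{-1/2}\int f(\rho\eta)\,h(R_\rho^{-1}\omega)\dmu(\rho)$ with $\omega=g\eta$. It depends only on $g\eta$, so it already lies in $\Mackey{0}$. Applying $\proj{0}$, the fibre integral $\int_0^{2\pi}\dd\gamma$ of a $\gamma$-independent function gives $2\pi$. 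Combined with the two factors $(2\pi)^{-1/2}$, this yields exactly $(f\star_{\mathrm L}h)(\omega)$, which is the identity $f\star_{\mathrm L}h=\Cop{0}{0}[\Psi_h]f$. I will note that $\Psi_h\in\Lrot$, since $\lVert\Psi_h\rVert^2=2\pi\lVert h\rVert^2$, so Theorem~\ref{thm:harmonic} applies.

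\textbf{Harmonic form.} By \eqref{eq:harmonic-ss} with $s=s'=0$ we have $(f\star_{\mathrm L}h)_{\ell m}=f_{\ell m}(\Psi_h)^{\ell *}_{00}$, so only the entry $(0,0)$ is read. It remains to compute
\[
(\Psi_h)^{\ell}_{00}=\int h^{*}(g^{-1}\eta)\,D^\ell_{00}(g)\dmu(g).
\]
Substitute $g\mapsto g^{-1}$, which preserves Haar measure, and use $D^\ell_{00}(g^{-1})=D^{\ell *}_{00}(g)$ from unitarity. This gives $(\Psi_h)^\ell_{00}=\bigl(\int h(g\eta)D^\ell_{00}(g)\dmu(g)\bigr)^{*}$. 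Next use \eqref{eq:D-sY} with $n=0$, namely $D^\ell_{00}(\alpha,\beta,\gamma)=\sqrt{4\pi/(2\ell+1)}\,Y^{*}_{\ell0}(\beta,\alpha)$. The $\gamma$-integral contributes $2\pi$, so the inner integral is $2\pi\sqrt{4\pi/(2\ell+1)}\,h_{\ell0}$. Conjugating twice gives the stated multiplier $2\pi\sqrt{4\pi/(2\ell+1)}\,f_{\ell m}h_{\ell0}$. Since only $h_{\ell0}$ appears, only the zonal part of $h$ enters.

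\textbf{Remaining claims.} The fibre-volume remark is just the bookkeeping of $2\pi$ above. For the marginalisation claim, the bilinear correlation $\int f(\omega')h(R_\rho^{-1}\omega')\dmu(\omega')$ integrated over $\gamma$ is $\int_0^{2\pi}F(\alpha,\beta,\gamma)\dd\gamma=\sqrt{2\pi}\,(\proj{0}F)(\beta,\alpha)$, which follows immediately from \eqref{eq:lift-proj} with $s=0$. With the sesquilinear form of Definition~\ref{def:directional}, the same marginalisation yields $h^{*}$ in place of $h$.

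The main obstacle is verifying that the change of variables relating $\int f(\rho\eta)h(\rho^{-1}\omega)\dmu(\rho)$ to the group-convolution form is consistent, with the correct inverse placements and no stray conjugation. The conjugate in the definition of $\Psi_h$ must cancel the conjugate in Definition~\ref{def:group-conv}, and that is what I will check first.
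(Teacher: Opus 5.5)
Your proposal is correct and follows the paper's proof: the same substitution $\Psi_h^{*}(g^{-1}\rho)=h(\rho^{-1}g\eta)$, the $\gamma$-independent fibre integral under $\proj{0}$ giving exactly $f\star_{\mathrm{L}}h$, and the single entry $(\Psi_h)^{\ell}_{00}$ read through \eqref{eq:harmonic-ss}. The only difference is cosmetic. You evaluate $(\Psi_h)^{\ell}_{00}$ by the inversion $g\mapsto g^{-1}$ together with $D^{\ell}_{00}(g^{-1})=D^{\ell *}_{00}(g)$, whereas the paper locates $g^{-1}\eta$ at colatitude $\beta$ and longitude $\pi-\gamma$. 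Your check that $\lVert\Psi_h\rVert^2=2\pi\lVert h\rVert^2$, which the paper leaves implicit, usefully confirms that Theorem~\ref{thm:harmonic} applies.
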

\begin{proof}
By direct substitution, $(\Psi_h\conv\lift{0}f)(g) = (2\pi)^{-1/2}\int h(g'^{-1}g\eta)f(g'\eta)\dmu(g')$; applying $\proj{0}$ and noting that $R_z(\gamma)\eta = \eta$ gives \eqref{eq:driscoll-healy}, and the constant follows by evaluating $(\Psi_h)^{\ell}_{00}$ with \eqref{eq:D-sY}.  See Appendix~\ref{app:proofs}.
\end{proof}

Proposition~\ref{prop:driscoll-healy} recovers Theorem~3 of Kennedy \etal\ \cite{kennedy2011azimuthally}, with their factor of $1/2\pi$, the convolution of Esteves \etal\ \cite{esteves2018learning} and the convolution theorem of Bonev \etal\ \cite{bonev2023spherical}.  The two convolutions in Definitions~\ref{def:axisymmetric} and \ref{def:driscoll-healy} read the same kernel entry but differ in both normalisation and kernel conjugation: the axisymmetric correlation is sesquilinear, whereas the left convolution is bilinear, with the conjugation absorbed in $\Psi_h(g)=h^{*}(g^{-1}\eta)$.

\subsection{Spin-changing convolutions and steerable kernels}
\label{sec:reductions:spin}

\begin{proposition}[Spin-changing convolutions]
\label{prop:spin-change}
Every bounded multiplier $\sw{s'}{f}_{\ell m}=c_\ell\,\sw{s}{f}_{\ell m}$, for $\ell\ge\max(\lvert s\rvert,\lvert s'\rvert)$ and annihilating input degrees absent from the target, is $\Cop{s}{s'}[\Psi_T]$ in the Fourier-multiplier sense of Theorem~\ref{thm:classification}(b), with $\Psi_T{}^{\ell}_{-s',-s}=(-1)^{s+s'}c^{*}_{\ell}$.  The canonical kernel is generally a distribution and belongs to $L^2$ exactly when $\sum_\ell(2\ell+1)\lvert c_\ell\rvert^2<\infty$.  In particular:
\begin{enumerate}
  \item[(a)] the spin-weighted convolution of Esteves \etal\ \cite{esteves2020spin}, Eq.~(13), with a single input spin $s$ and output spin $s'$ has $c_\ell=\sw{s'}{k}{}^{\ell}_{s}$ in their notation, and their multi-spin convolution sums over input spins and stacks output spins.  Their cross-correlation, Eq.~(15), $g^t_{\ell m}=\sum_i f^i_{\ell m}k^{i*}_{\ell t}$, instead corresponds to $\Psi^\ell_{-t,-i}=(-1)^{i+t}k^i_{\ell t}$;
  \item[(b)] the spin needlets of Geller \& Marinucci \cite{geller2010spin,geller2011mixed} are $\Cop{s}{s}$ with $c_\ell = b\big(\sqrt{e_{\ell s}}/B^{j}\big)$, $e_{\ell s} = (\ell-\lvert s\rvert)(\ell+\lvert s\rvert+1)$, and their mixed needlets are the corresponding $\Cop{s}{0}$ and $\Cop{0}{s}$; mixed needlets connect spin $s$ with spin $0$ only;
  \item[(c)] write $s(x)^{-1}s(y)=R_z(a)R_y(b)R_z(c)$.  In the translated canonical angles $(a,b,-c)$, the radial-with-phases construction of Tessore \etal\ \cite{tessore2025euclid} has kernel $e^{\ii Sa}K(b)e^{\ii sc}$, where $K(b)=\sum_\ell(2\ell+1)K_\ell d^\ell_{Ss}(b)/(4\pi)$, and is $\Cop{s}{S}[\Psi_T]$ with $c_\ell=K_\ell$ by \eqref{eq:app-addition}.  The first angle differs by $\pi$ from the source's printed quadrant convention: its literal angles are $\alpha=a+\pi$, $\alpha'=-c$, giving $c_\ell=(-1)^S K_\ell$ in our conventions with the same radial coefficients;
  \item[(d)] the directional spin convolution of \cite{mcewen2015directional}, Definition~\ref{def:spin-directional}, is not itself of this form, its output being a function on $\SOthree$, but for a steerable kernel, as the directional spin wavelets are, it is by Proposition~\ref{prop:band-limited} below the tuple of the spin-changing convolutions $\Cop{s}{s'}[\lift{s}\psi]$ over $\lvert s'\rvert < N$, with $c_\ell = (-1)^{s'}\sqrt{8\pi^2/(2\ell+1)}\;\sw{s}{\psi}^{*}_{\ell,-s'}$ as in \eqref{eq:harmonic-ss-lifted}: the azimuthal order $-s'$ of the kernel is read to produce the output of spin $s'$, so that the directional spin convolution computes every spin-changing convolution of its kernel at once, and its axisymmetric case, the scaling coefficients of Definition~\ref{def:spin-axisymmetric}, is $(2\pi)^{-1/2}\Cop{s}{0}[\lift{s}\Phi]$, with multiplier $c_\ell=\sqrt{4\pi/(2\ell+1)}\;\sw{s}{\Phi}^{*}_{\ell0}$ as in Proposition~\ref{prop:axisymmetric}.
  \item[(e)] the spin spherical Fourier neural operator layer of the forthcoming article \cite{mulder2026spin}, which uses the same convention as here, learns the multiplier $c_\ell$ directly, as one harmonic weight per degree for each pair of spins $(s,s')$.
\end{enumerate}
\end{proposition}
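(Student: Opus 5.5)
The general statement follows from Theorem~\ref{thm:classification}. A bounded multiplier $(c_\ell)$ commutes with rotations, because by \eqref{eq:rotation-coeffs} rotations act on the index $m$ by the same $D^\ell$ for every spin. Parts (a) and (b) of that theorem then give the representation $\Cop{s}{s'}[\Psi_T]$. We set $\Psi_T{}^{\ell}_{-s',-s}=(-1)^{s+s'}c_\ell^{*}$, and \eqref{eq:harmonic-ss} returns $c_\ell$, since $(-1)^{2(s+s')}=1$. Square integrability follows from Parseval: $\lVert\Psi_T\rVert^2=\sum_\ell\frac{2\ell+1}{8\pi^2}\lvert c_\ell\rvert^2$. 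For each item, the plan is to compute the harmonic multiplier of the cited operator in our conventions and substitute it.

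Items (a), (b) and (e) only require reading off multipliers. For (a), the convolution of Esteves \etal\ restricted to a single input and output spin is $\sw{s'}{\hat g}{}^\ell_m=\sw{s}{\hat f}{}^\ell_m\,\sw{s'}{k}{}^\ell_s$, so $c_\ell$ is that coefficient. Their cross-correlation with input $i$ and output $t$ has $c_\ell=k^{i*}_{\ell t}$. Hence $\Psi^\ell_{-t,-i}=(-1)^{i+t}c_\ell^*=(-1)^{i+t}k^i_{\ell t}$. Summing over input spins and stacking output spins gives sums and tuples of $\Cop{}{}$ operators, by linearity. For (b), expanding the two-point needlet kernel against $\sw{s}{f}$ and using orthonormality of $\sw{s}{Y}_{\ell m}$ gives $c_\ell=b(\cdot)$. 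The mixed needlets replace one harmonic by a scalar one, which gives $\Cop{s}{0}$ or $\Cop{0}{s}$. Item (e) holds by definition.

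Item (d) combines Proposition~\ref{prop:directional}, Corollary~\ref{cor:fibre-decomposition} and \eqref{eq:harmonic-ss-lifted}. A steerable kernel has $\sw{s}{\psi}_{\ell n}=0$ for $\lvert n\rvert\ge N$. The output coefficients are $\propto\sw{s}{f}_{\ell m}\sw{s}{\psi}^*_{\ell n}$, so they vanish for $\lvert n\rvert\ge N$. Decomposing the output as $\sum_{\lvert s'\rvert<N}\lift{s'}\proj{s'}$ identifies its spin-$s'$ slot as $\Cop{s}{s'}[\lift{s}\psi]$, with the stated $c_\ell$. The axisymmetric case is Proposition~\ref{prop:axisymmetric}.

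The main obstacle is item (c). The plan is to use Corollary~\ref{cor:nogo}, specifically the addition theorem \eqref{eq:app-addition}:
\[
\sum_m\sw{S}{Y}_{\ell m}(x)\sw{s}{Y}^*_{\ell m}(y)=(-1)^{s+S}\tfrac{2\ell+1}{4\pi}D^\ell_{-S,-s}\bigl(s(x)^{-1}s(y)\bigr).
\]
With $s(x)^{-1}s(y)=R_z(a)R_y(b)R_z(c)$, \eqref{eq:wignerD} gives $e^{\ii Sa}d^\ell_{-S,-s}(b)e^{\ii sc}$. We then use $d^\ell_{-S,-s}=(-1)^{S-s}d^\ell_{Ss}$ to cancel the signs. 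This yields the kernel $e^{\ii Sa}K(b)e^{\ii sc}$, with $K$ expanded in $d^\ell_{Ss}$ and $c_\ell=K_\ell$. After writing the separation angle as $b$, Tessore's phases $e^{\ii S\alpha}e^{-\ii s\alpha'}$ match ours under $\alpha=a$ and $\alpha'=-c$. Their literal quadrant convention instead gives $\alpha=a+\pi$, which contributes an extra factor $e^{\ii S\pi}=(-1)^S$ and so $c_\ell=(-1)^SK_\ell$. The delicate work is tracking these signs and angle conventions consistently, and checking the sign of $d^\ell_{-S,-s}$ in the paper's convention.
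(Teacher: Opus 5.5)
Your proposal is correct and follows essentially the paper's own route. The general claim comes from Theorem~\ref{thm:classification}(a)--(b), items (a), (b) and (e) are read off as multipliers, and (d) is the fibre-decomposition argument of Proposition~\ref{prop:band-limited} combined with \eqref{eq:harmonic-ss-lifted}. Item (c) comes from the addition theorem \eqref{eq:app-addition} with $(-1)^{s+S}d^\ell_{-S,-s}=d^\ell_{Ss}$. The one step you assert rather than derive is the identification of Tessore's literal angles. The paper grounds it in the bearing interpretation stated after \eqref{eq:app-addition}: $a$ at $x$ towards $y$ and $\pi-c$ at $y$ towards $x$, both measured from $e_\theta$ towards $e_\varphi$.
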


Proposition~\ref{prop:spin-change} makes precise the sense in which a single layer may map a spin-$s$ field to a spin-$s'$ field without forming a feature map on $\SOthree$: the operator is a multiplier, evaluated with one forward spin-$s$ and one inverse spin-$s'$ spherical harmonic transform, and by Theorem~\ref{thm:classification} the family of such multipliers is the complete family of bounded equivariant linear maps between the fixed input and output spins.  Forming the feature map on $\SOthree$, as the directional spin convolution does, is conversely the way to obtain all of the spin-changing convolutions of a kernel in one operation.

\begin{proposition}[Azimuthally band-limited and steerable kernels]
\label{prop:band-limited}
If the column $m' = -s$ of $\Psi$ is supported on $\lvert n \rvert < N$ then
\begin{equation}
  \Cop{s}{G}[\Psi] = \sum_{\lvert s'\rvert < N} \lift{s'}\, \Cop{s}{s'}[\Psi] ,
\end{equation}
\ie\ the output is represented by at most $2N-1$ spin-changing components $\Cop{s}{s'}[\Psi]$, $\lvert s'\rvert<N$, with $2\min(\ell,N-1)+1$ admissible output slots at degree $\ell$.  Conversely, any finite tuple of bounded equivariant spin-changing maps from the same input spin assembles into a bounded map $\Cop{s}{G}$ with a generally distributional kernel.  A square integrable representative exists exactly when the sum of their Hilbert--Schmidt norms squared is finite.  At finite band-limit the assembled kernel is always square integrable.
\end{proposition}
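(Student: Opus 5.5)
The forward identity will be proved in Wigner space. Corollary~\ref{cor:fibre-decomposition} will then turn the column structure into a sum of lifted spin fields. By \eqref{eq:harmonic-sG}, the output $F=\Cop{s}{G}[\Psi]\,\sw{s}{f}$ has coefficients
\[
F^{\ell}_{mn}=(-1)^s\sqrt{8\pi^2/(2\ell+1)}\;\sw{s}{f}_{\ell m}\,\Psi^{\ell *}_{n,-s}.
\]
Under the hypothesis these vanish for $\lvert n\rvert\ge N$, so $F$ is azimuthally band-limited at $N$. Corollary~\ref{cor:fibre-decomposition} then gives $F=\sum_{\lvert s'\rvert<N}\lift{s'}\proj{s'}F$. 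Since $\proj{s'}F=\proj{s'}(\Psi\conv\lift{s}\sw{s}{f})=\Cop{s}{s'}[\Psi]\,\sw{s}{f}$ by Definition~\ref{def:lifted-conv}, the identity follows. For the slot count I will observe that $\proj{s'}F$ has no degree below $\lvert s'\rvert$, which is also the last statement of Theorem~\ref{thm:harmonic}. At degree $\ell$ only the values $\lvert s'\rvert\le\min(\ell,N-1)$ contribute, which gives $2\min(\ell,N-1)+1$ slots.

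For the converse, take bounded equivariant maps $T_{s'}:\Lsph{s}\to\Lsph{s'}$ for $\lvert s'\rvert<N$. By Theorem~\ref{thm:classification} each has a multiplier $c^{(s')}_\ell$ with $\sup_\ell\lvert c^{(s')}_\ell\rvert<\infty$. I will define $T=\sum_{s'}\lift{s'}T_{s'}$. It is bounded because the sum is finite and each lift is an isometry. Because the ranges $\Mackey{s'}$ are mutually orthogonal (Lemma~\ref{lem:lift}(a)), in fact $\lVert Tf\rVert^2=\sum_{s'}\lVert T_{s'}f\rVert^2$. Equivariance follows from Lemma~\ref{lem:lift}(e). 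By Theorem~\ref{thm:classification}(c) and Corollary~\ref{cor:projection-general}, $T=\Cop{s}{G}[\Psi]$ for the canonical kernel supported on column $-s$. Its entries are read off from $u^\ell_n$ via \eqref{eq:classification-kernels}; equivalently, inverting \eqref{eq:harmonic-ss} gives $\Psi^{\ell}_{-s',-s}=(-1)^{s+s'}c^{(s')*}_\ell$, and all other entries are zero. In general this is a distribution, since its coefficients are only bounded.

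The square-integrability claim is a Parseval computation:
\[
\lVert\Psi\rVert^2=\sum_\ell\frac{2\ell+1}{8\pi^2}\sum_{s'}\lvert c^{(s')}_\ell\rvert^2 .
\]
Each $T_{s'}$ has Hilbert--Schmidt norm squared $\sum_\ell(2\ell+1)\lvert c^{(s')}_\ell\rvert^2$, so this expression equals $(8\pi^2)^{-1}$ times the sum of the Hilbert--Schmidt norms squared. The main point to check is the meaning of ``a square integrable representative exists exactly when''. Other kernels representing the same map differ from the canonical one only in unread columns $m'\ne-s$. Adding nonzero entries there can only increase the $L^2$ norm, so the canonical kernel has minimal norm and the equivalence holds for every representative. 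At finite band-limit $L$ all the sums over $\ell$ are finite, so the kernel is automatically square integrable.
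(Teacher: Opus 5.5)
Your proposal is correct and follows essentially the same route as the paper: the forward identity comes from \eqref{eq:harmonic-sG}, Corollary~\ref{cor:fibre-decomposition} and $\proj{s'}\Cop{s}{G}[\Psi]=\Cop{s}{s'}[\Psi]$, and the converse from the canonical column $\Psi^{\ell}_{-s',-s}=(-1)^{s+s'}c^{(s')*}_\ell$, whose Parseval norm is $(8\pi^2)^{-1}$ times the sum of the Hilbert--Schmidt norms squared. The differences are minor: you reach the canonical kernel by checking equivariance and invoking Theorem~\ref{thm:classification}(c), where the paper writes the column down directly and reuses the forward identity, and you state explicitly why other representatives cannot have smaller $L^2$ norm, which the paper leaves implicit.
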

\begin{proof}
By \eqref{eq:harmonic-sG} the Wigner coefficients of $\Cop{s}{G}[\Psi]f$ vanish for $\lvert n\rvert \ge N$; apply Corollary~\ref{cor:fibre-decomposition} to the output and note that $\proj{s'}\Cop{s}{G}[\Psi] = \Cop{s}{s'}[\Psi]$ by definition.
\end{proof}

Proposition~\ref{prop:band-limited} expresses the finite-spin content behind the steerability result of Wiaux \etal\ \cite{wiaux2006fast}: a kernel with azimuthal band-limit $N$ produces at most $2N-1$ spin components, each obtained by a spin-changing convolution on the sphere.  The directional spin wavelets of \cite{mcewen2015directional} are steerable by construction and their coefficients at each scale form such a tuple.  This also relates the $\mathcal{O}(NL^3)$ group layers of Cobb \etal\ \cite{cobb2021efficient} to multi-spin layers in the sense of Esteves \etal\ \cite{esteves2020spin}, after translating the kernel conventions.  In particular, their cross-correlation is obtained at finite band-limit from projected directional convolutions by replacing their spherical kernels with $k'{}^i_{\ell n}=(-1)^n\sqrt{(2\ell+1)/(8\pi^2)}\,k^i_{\ell,-n}$, rather than using those kernels unchanged.

\subsection{Convolution on the rotation group}
\label{sec:reductions:group}

\begin{proposition}[Convolution on the rotation group]
\label{prop:group}
$\Cop{G}{G}[\Psi]$ is the group convolution of Definition~\ref{def:group-conv}, and by Theorem~\ref{thm:classification}(c) every bounded equivariant map $\Lrot \to \Lrot$ is of this form with a distributional kernel.
\end{proposition}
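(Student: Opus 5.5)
The first claim holds by definition, and the second is an application of Theorem~\ref{thm:classification}(c). By Definition~\ref{def:lifted-conv}, $\Cop{G}{G}[\Psi] = \Psi\conv\cdot$, which is literally the group convolution of Definition~\ref{def:group-conv}. For completeness I would also record its harmonic form: \eqref{eq:harmonic-GG} coincides with \eqref{eq:group-conv-harmonic}. To verify the latter, insert the Wigner expansions of $\Psi$ and $F$ into \eqref{eq:group-conv}. Then use $D^{\ell}(\rho^{-1}\rho') = D^{\ell}(\rho)^{\dagger}D^{\ell}(\rho')$ to factor the translated kernel, and apply the orthogonality relation of the $D$-functions to collapse the integral over $\rho'$. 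This produces $(\Psi\conv F)^{\ell} = F^{\ell}\Psi^{\ell\dagger}$ blockwise. Equivariance and boundedness for $\Psi\in\Lrot$ come directly from Theorem~\ref{thm:harmonic}.

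For the converse, let $T:\Lrot\to\Lrot$ be bounded and commute with $\piG(g)$ for all $g$. I would decompose $\Lrot$ by Peter--Weyl into the isotypic components $V_\ell = \operatorname{span}\{D^{\ell *}_{mn}\}$. Left translation acts on the first index by $D^{\ell}(g)$ and leaves the second index untouched, so $V_\ell$ is the direct sum of $2\ell+1$ copies of the degree-$\ell$ irreducible representation, one copy per column $n$. Schur's lemma, applied to the equivariant maps between copies of the same irreducible, forces $T$ to preserve each $V_\ell$ and to act within it by a $(2\ell+1)\times(2\ell+1)$ matrix mixing the columns, so that $(TF)^{\ell} = F^{\ell}B^{\ell}$. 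Since the Parseval weight $a_\ell$ is common to every entry of a block, boundedness is equivalent to $\sup_\ell\lVert B^{\ell}\rVert_{\mathrm{op}}<\infty$. This is exactly the $G\to G$ case of \eqref{eq:classification-blocks} and \eqref{eq:classification-norms}.

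Finally, I would set $\Psi^{\ell} = B^{\ell\dagger}$, which is \eqref{eq:classification-kernels}. If $\sum_\ell a_\ell\lVert B^\ell\rVert_F^2<\infty$, then $\Psi\in\Lrot$ and $T=\Cop{G}{G}[\Psi]$ by \eqref{eq:harmonic-GG}. Otherwise $\Psi$ is defined through its uniformly bounded Wigner blocks as a distribution, and the identity holds in the Fourier-multiplier sense of Theorem~\ref{thm:classification}(b).

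The only delicate point is this distributional interpretation: one must check that coefficients with polynomially bounded (here, bounded) operator norms define a distribution on $\SOthree$. For this I would take test functions with rapidly decaying Wigner coefficients, in the way the paper handles part (b), and I would not expect a new difficulty there. The identity operator, with $B^{\ell}=\Id$, is the standard example that genuinely requires it.
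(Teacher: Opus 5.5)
Your proposal is correct and follows the paper's route: the first claim is immediate from Definition~\ref{def:lifted-conv}, and the converse is the Peter--Weyl and Schur argument of Theorem~\ref{thm:classification}(c), which you re-derive rather than cite. That argument gives multiplicity $2\ell+1$ with one copy per Wigner column, blocks $B^\ell=\Psi^{\ell\dagger}$, the operator-norm bound, and a square-integrable kernel exactly in the Hilbert--Schmidt case, otherwise a distributional one. Your check of \eqref{eq:group-conv-harmonic}, expanding the kernel inside the correlation integral rather than substituting in the Wigner transform as the appendix does, is only a cosmetic difference.
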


Proposition~\ref{prop:group} covers the $\SOthree$ layer of Cohen \etal\ \cite{cohen2018spherical}, written there for real signals with the measure normalised to unity and the convolution theorem in the form $\hat f\hat\psi^{\dagger}$, the correlation of Kostelec \& Rockmore \cite{kostelec2008ffts}, the Clebsch--Gordan networks of Kondor \etal\ \cite{kondor2018clebsch}, whose linear maps are the per-degree blocks of \eqref{eq:harmonic-GG} acting across fragments, and the generalised convolution of Cobb \etal\ \cite{cobb2021efficient}, whose channel mixing is the multi-channel version of \eqref{eq:harmonic-GG}.  A lifted spin-$s$ input reduces $\Cop{G}{G}$ to $\Cop{s}{G}$; it gives the traditional directional convolution of Proposition~\ref{prop:directional} after replacing the kernel by the lifted spherical kernel with the same relevant input column.  At infinite band-limit this replacement may require a distributional spherical kernel.  For an input supported on $\lvert m'\rvert<N_{\rm in}$ and an effective kernel supported on $\lvert n\rvert<N_{\rm out}$ in those input columns, Corollary~\ref{cor:fibre-decomposition} gives an array with at most $(2N_{\rm out}-1)\times(2N_{\rm in}-1)$ spin-changing entries.  Setting $N_{\rm in}=N_{\rm out}=N$ gives the square array used in Sec.~\ref{sec:computation}.

\subsection{Spin-changing operators of mathematical physics}
\label{sec:reductions:physics}

The rotation-equivariant operators considered here are harmonic multipliers, with generally distributional kernels.  Bounded examples fall under Theorem~\ref{thm:classification}; differential examples use the domain \eqref{eq:multiplier-domain} instead.

\begin{proposition}[Spin-changing operators of mathematical physics]
\label{prop:physics}
The following classical operators have the multiplier form \eqref{eq:classification}, with \eqref{eq:multiplier-domain} understood for unbounded examples:
\begin{enumerate}
  \item[(a)] the spin raising operator $\eth : \mathcal{D}(\eth)\to\Lsph{s+1}$, with $c_\ell=\sqrt{(\ell-s)(\ell+s+1)}$, and the spin lowering operator $\bar\eth : \mathcal{D}(\bar\eth)\to\Lsph{s-1}$, with $c_\ell=-\sqrt{(\ell+s)(\ell-s+1)}$, have the first-order spin Sobolev domains $\mathcal{D}(\eth),\mathcal{D}(\bar\eth)\subset\Lsph{s}$ specified by \eqref{eq:multiplier-domain} \cite{newman1966note,goldberg1967spin};
  \item[(b)] the $E$ and $B$ components of a spin-$\pm 2$ polarisation field, $a_{E,\ell m} = -\tfrac12(a_{2,\ell m}+a_{-2,\ell m})$ and $a_{B,\ell m} = \tfrac{\ii}{2}(a_{2,\ell m}-a_{-2,\ell m})$ \cite{zaldarriaga1997all,kamionkowski1997statistics}, each the sum of a $\Cop{2}{0}$ and a $\Cop{-2}{0}$ operator with constant $c_\ell$; the scalars $\bar\eth^2(Q+\ii U)$ and $\eth^2(Q-\ii U)$ of \cite{zaldarriaga1997all} are the constituent $\Cop{2}{0}$ and $\Cop{-2}{0}$ maps with $c_\ell = \sqrt{(\ell+2)!/(\ell-2)!}$;
  \item[(c)] the spherical Kaiser--Squires relation of weak gravitational lensing between convergence and shear \cite{wallis2022mapping}: $\Cop{0}{2}$ with $c_\ell = \mathcal{D}_\ell = -\sqrt{(\ell+2)(\ell-1)/(\ell(\ell+1))}$ and its inverse $\Cop{2}{0}$ with $c_\ell = \mathcal{D}_\ell^{-1}$, for $\ell \ge 2$; the real-space forward map of \cite{wallis2022mapping} is synthesis with an axisymmetric spin-$2$ kernel, the adjoint of spin-$2$ to scalar analysis, rather than that forward analysis itself.  Its two-point kernel is \eqref{eq:nogo-kernel} with $(s,s')=(0,2)$;
  \item[(d)] the all-Stokes beam response of Challinor \etal\ \cite{challinor2000all}, \ie\ their Eq.~(39), corresponds to a sum of $\Cop{0}{G}$ maps on intensity $I$ and circular polarisation $V$, and $\Cop{\pm2}{G}$ maps on linear polarisation.  For an axisymmetric co-polar beam, the intensity kernel has azimuthal order $0$ and the linear-polarisation kernels have the prescribed orders $\mp2$ for output spins $\pm2$.  Their Eqs.~(47)--(50) give scalar and spin-$2$ smoothing multipliers $W_\ell$ and $\sw{2}{W}_\ell$; the response in Eq.~(46) depends on the third Euler angle through $\{1,\cos2\psi,\sin2\psi\}$, corresponding to the retained spins $0,\pm2$ and band-limit $N=3$ in Proposition~\ref{prop:band-limited}.
\end{enumerate}
\end{proposition}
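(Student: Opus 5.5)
The plan is to check, for each item, that the operator maps spin-$s$ harmonics to spin-$s'$ harmonics of the same degree and order with a coefficient depending only on $\ell$. Each such operator then has the multiplier form \eqref{eq:classification}. By Theorem~\ref{thm:classification}(b), extended to \eqref{eq:multiplier-domain} when the operator is unbounded, it is then $\Cop{s}{s'}[\Psi_T]$ with $\Psi_T{}^{\ell}_{-s',-s}=(-1)^{s+s'}c_\ell^{*}$. So in every case the work reduces to reading off $c_\ell$ from a harmonic-space identity that is already available.

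For (a), \eqref{eq:eth} gives the multipliers directly: $\eth\,\sw{s}{Y}_{\ell m}=\sqrt{(\ell-s)(\ell+s+1)}\,\sw{s+1}{Y}_{\ell m}$, and similarly for $\bar\eth$. The degrees $\ell=\lvert s\rvert$ that are absent from the target spin space get multiplier zero, which is consistent with this formula. Since $c_\ell\sim\ell$, the domain \eqref{eq:multiplier-domain} is the first-order Sobolev space. Closedness and rotation invariance of this domain follow from the remark after Theorem~\ref{thm:classification}, because rotations act unitarily degree by degree by \eqref{eq:rotation-coeffs} and so preserve $\sum_{\ell m}\lvert c_\ell\rvert^2\lvert f_{\ell m}\rvert^2$.

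For (b), the maps $a_{\pm2}\mapsto a_E$ and $a_{\pm2}\mapsto a_B$ are constant multipliers ($-\tfrac12$ and $\pm\tfrac{\ii}{2}$) from $\Lsph{\pm2}$ to $\Lsph{0}$ on $\ell\ge2$, so each is a sum of a $\Cop{2}{0}$ and a $\Cop{-2}{0}$ operator. Iterating \eqref{eq:eth} twice gives $\bar\eth^2\,\sw{2}{Y}_{\ell m}=\sqrt{(\ell+2)(\ell-1)}\sqrt{(\ell+1)\ell}\,Y_{\ell m}$. The two sign factors $(-1)^2$ cancel, which yields $c_\ell=\sqrt{(\ell+2)!/(\ell-2)!}$; the case $\eth^2$ on spin $-2$ is symmetric. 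For (c), the Kaiser--Squires relation is stated as a multiplier with $\mathcal{D}_\ell$, which is nonzero for $\ell\ge2$, so its inverse is also a multiplier. The two-point kernel then comes from Corollary~\ref{cor:nogo} with $(s,s')=(0,2)$. To identify the real-space forward map of \cite{wallis2022mapping} as synthesis, I will compare its axisymmetric spin-$2$ kernel with Proposition~\ref{prop:synthesis} and Proposition~\ref{prop:axisymmetric}, using the adjoint relation of Theorem~\ref{thm:harmonic}.

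For (d), I will rewrite Eq.~(39) of \cite{challinor2000all} as the directional convolution of Proposition~\ref{prop:directional}, applied to $I$, $V$ (spin $0$) and $Q\pm\ii U$ (spin $\pm2$) with the respective beam fields as kernels. For an axisymmetric co-polar beam, the beam harmonic coefficients are supported on $m=0$ for intensity and $m=\mp2$ for the polarisation kernels. Then \eqref{eq:harmonic-ss-lifted} and Proposition~\ref{prop:band-limited} identify the retained output spins as $0,\pm2$, which the band-limit $N=3$ covers, and this matches the $\{1,\cos2\psi,\sin2\psi\}$ dependence in their Eq.~(46). Items (a)--(c) are routine. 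I expect (d) to be the main obstacle, because it requires translating the conventions of Challinor \etal\ (Euler angles, handedness of the polarisation basis, beam normalisation) into those of Sec.~\ref{sec:prelim}. I would first match their Eqs.~(47)--(50) against \eqref{eq:spin-directional-harmonic} with $n=0,\mp2$ to fix the constants and signs, and only then check the azimuthal dependence.
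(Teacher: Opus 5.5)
Your proposal is correct and follows essentially the paper's route. Items (a) and (b) are read off from \eqref{eq:eth} and the $E$/$B$ coefficient formulas, together with the domain remark after Theorem~\ref{thm:classification}. Item (c) identifies the integral of Wallis \etal\ as the adjoint of the axisymmetric spin-$2$ to scalar analysis, and (d) is a convention-matching correspondence for which the paper supplies no separate argument. The one difference is the order of steps in (c): the paper computes the multiplier $\sqrt{4\pi/(2\ell+1)}\,\sw{2}{K}_{\ell 0}$ directly from \eqref{eq:rotation-coeffs} and \eqref{eq:D-sY} and then recognises the adjoint, whereas you recognise the adjoint first and conjugate \eqref{eq:spin-axisymmetric-harmonic}. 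For that step a direct Fubini computation of the adjoint is simpler than going through Proposition~\ref{prop:synthesis}, which concerns integrals over $\SOthree$ rather than over the sphere.
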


The identity, with $c_\ell\equiv1$, illustrates the distributional-kernel clause of Theorem~\ref{thm:classification}.  Spin raising instead has an unbounded polynomially growing multiplier and uses the separate domain \eqref{eq:multiplier-domain}; the second-order differential maps in part (b) have second-order spin Sobolev domains.  None of these operators has a square integrable canonical kernel.  Finally, the discrete--continuous convolution of Ocampo \etal\ \cite{ocampo2023scalable} replaces the integral in Definition~\ref{def:directional} by a quadrature on a sampling of the sphere; it is Proposition~\ref{prop:directional} up to quadrature error and inherits its equivariance to that accuracy, while its restricted form, with output on the sphere, is a section evaluation, to which we now turn.

\section{Beyond the framework}
\label{sec:beyond}

The families of Sec.~\ref{sec:review:anisotropic} are not fully $\SOthree$-equivariant in general, although they contain equivariant special cases.  Their non-equivariant members are not generalised lifted spherical convolutions, by Theorem~\ref{thm:harmonic}, but are described by the same tools: they either evaluate a lifted convolution on a section without the constraint of Corollary~\ref{cor:constraint}, or use an $m$-dependent diagonal harmonic multiplier.  We characterise full equivariance and establish guaranteed residual symmetries, without classifying the exact stabiliser of every kernel.

\subsection{Evaluation on a section}
\label{sec:beyond:section}

\begin{proposition}[Evaluation on a section]
\label{prop:section-evaluation}
Let $\sigma(x)=s(x)R_z(\chi_\sigma(x))$ be a measurable section and let $\Psi\in\Lrot$ satisfy $\Psi^\ell_{n,-s}=0$ for $\lvert n\rvert\ge N$.  Define $E_\sigma[\Psi]\,\sw{s}{f}=(\Psi\conv\lift{s}\,\sw{s}{f})\circ\sigma$, using the continuous representative of the $L^2*L^2$ correlation.  Then
\begin{equation}
  E_\sigma[\Psi]\, \sw{s}{f} = \sum_{\lvert s'\rvert < N} \frac{e^{-\ii s'\chi_\sigma}}{\sqrt{2\pi}}\; \Cop{s}{s'}[\Psi]\, \sw{s}{f} ,
  \label{eq:section-evaluation}
\end{equation}
where $N$ bounds the first index within the effective input column.  With the fixed coordinate-frame output action, the zero map is equivariant into every spin space for every section.  Otherwise $E_\sigma[\Psi]$ is equivariant into spin-$t$ fields if and only if $\Psi^\ell_{n,-s}=0$ for $n\ne-t$ and $e^{-\ii t\chi_\sigma}$ is constant almost everywhere.  The latter condition is automatic for $t=0$.

For $\sigma_0=s(\cdot)$, all polar rotations are symmetries.  For $\chi_\sigma=\pi-\varphi$, a polar rotation $R_z(a)$ is a symmetry exactly when $e^{\ii ta}=1$ for every active output spin $t$, where active means $\Psi^\ell_{-t,-s}\ne0$ for some $\ell$.  This criterion does not classify non-polar symmetries.
\end{proposition}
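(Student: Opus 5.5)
The plan is to obtain the expansion \eqref{eq:section-evaluation} first and then derive the equivariance and symmetry statements from it.

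\textbf{Expansion.} By \eqref{eq:harmonic-sG}, the Wigner coefficients of $F=\Cop{s}{G}[\Psi]\,\sw{s}{f}$ are $\sw{s}{f}_{\ell m}\Psi^{\ell*}_{n,-s}$ up to a constant, so they vanish for $\lvert n\rvert\ge N$. Proposition~\ref{prop:band-limited} then gives $F=\sum_{\lvert s'\rvert<N}\lift{s'}\Cop{s}{s'}[\Psi]\,\sw{s}{f}$. The sum is finite and $F$ is continuous as an $L^2*L^2$ correlation, so evaluation along $\sigma$ makes sense termwise on fibrewise-equivariant representatives. Definition~\ref{def:lift} gives $(\lift{s'}g)(s(x)R_z(\chi))=(2\pi)^{-1/2}e^{-\ii s'\chi}g(x)$, and substituting $\chi=\chi_\sigma(x)$ yields \eqref{eq:section-evaluation}. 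The only subtlety here is that the continuous representative agrees with the finite sum of Mackey pieces pointwise away from the poles, which form a null set.

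\textbf{Equivariance criterion.} The fixed-frame output action on spin-$t$ fields is $\pis{t}$. Each term $u_{s'}=\Cop{s}{s'}[\Psi]f$ is a spin-$s'$ field that transforms by $\pis{s'}$ under Theorem~\ref{thm:harmonic}, and its harmonic content sits in the column $s'$.

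\emph{Sufficiency.} Suppose only $s'=t$ is active and $e^{-\ii t\chi_\sigma}=c$ is constant. Then $E_\sigma[\Psi]=c(2\pi)^{-1/2}\Cop{s}{t}[\Psi]$, which is equivariant.

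\emph{Necessity.} Assume $E_\sigma[\Psi]\pis{s}(g)=\pis{t}(g)E_\sigma[\Psi]$ for all $g$, and $E_\sigma[\Psi]\neq0$. Write the output as $\sum_{s'}\phi_{s'}u_{s'}$ with $\phi_{s'}=e^{-\ii s'\chi_\sigma}/\sqrt{2\pi}$. Equivariance of each $u_{s'}$ turns the condition into
\[
\sum_{s'}\bigl(\phi_{s'}\,\pis{s'}(g)u_{s'}-\pis{t}(g)(\phi_{s'}u_{s'})\bigr)=0 .
\]
The explicit form \eqref{eq:induced-explicit} shows that $\pis{t}(g)(\phi u)=(\phi\circ g^{-1})\,e^{\ii(s'-t)\vartheta}\,\pis{s'}(g)u$, with $\vartheta$ the twist angle.

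I would test this identity on inputs concentrated in a single degree $\ell$ and $m$, using that $\{\Rot_g u\}$ spans the full degree-$\ell$ spin-$s'$ space. The aim is to conclude that, for each active $s'$, the multiplier $\phi_{s'}$ must satisfy $\phi_{s'}(g^{-1}x)e^{\ii(s'-t)\vartheta(x,g)}=\phi_{s'}(x)$ for all $g$ and almost every $x$. Taking $g$ in the stabiliser of $x$ makes $\vartheta$ range over the whole circle, which forces $s'=t$. Then transitivity forces $\phi_t$ to be constant almost everywhere.

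\textbf{Main obstacle.} Separating the different $s'$ terms cleanly is the step I expect to be hardest, since they are multiplied by non-constant phases and therefore mix harmonically. My first attempt is to average the identity against the character $e^{\ii k\vartheta}$ over the stabiliser subgroup at a point, which should isolate each $s'$. For $t=0$ the constancy of $e^{-\ii t\chi_\sigma}$ is automatic, and the zero map is trivially equivariant.

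\textbf{Polar symmetries.} For $\sigma_0$ we have $\chi_\sigma=0$ and $R_z(a)$ rotates along the coordinate lines, so the twist is trivial: Lemma~\ref{lem:twist-composition} gives $\twist(x,R_z(a))=e$ because $R_z(a)s(x)=s(R_z(a)x)$. Every term then commutes with $R_z(a)$. For $\chi_\sigma=\pi-\varphi$, a shift $\varphi\mapsto\varphi+a$ multiplies the $s'$ term by $e^{\ii s'a}$ relative to the rotated output. Equivariance into the target action then requires $e^{\ii ta}=1$ for each active $t$, where I read this as comparing each active spin with the scalar action on the output. I would verify the exact sign against \eqref{eq:section-change} and leave the classification of non-polar symmetries open, as the statement does.
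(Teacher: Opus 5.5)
Your derivation of \eqref{eq:section-evaluation}, the sufficiency direction and the $\sigma_0$ case are fine and match the paper. The necessity direction, and with it the ``exactly when'' half of the polar criterion, has a genuine gap at exactly the step you flag: nothing in the proposal actually separates the spin components.

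Testing on a single $\sw{s}{Y}_{\ell m}$ and using that rotated copies span the degree-$\ell$ space does not help. One input drives every component $u_{s'}$ through the same $D^{\ell}(g)$. The missing idea is a pointwise independence. By \eqref{eq:sY-D} at $\gamma=0$, for fixed $x$ the vectors $(\sw{q}{Y}_{\ell m}(x))_{m=-\ell}^{\ell}$, $\lvert q\rvert\le\ell$, are nonzero multiples of the distinct columns $n=-q$ of the unitary matrix with entries $D^{\ell *}_{mn}(s(x))$. They are therefore linearly independent.

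The paper's proof also never pushes the output action through the non-constant phases. Equivariance and Theorem~\ref{thm:classification} make $E_\sigma[\Psi]$ a scalar $\lambda_\ell$ on each degree. Testing on all $\sw{s}{Y}_{\ell m}$ gives \eqref{eq:section-evaluation-test} on a common full-measure set. Comparing columns there yields $a_{\ell q}=0$ for $q\ne t$ and $e^{-\ii t\chi_\sigma(x)}=\lambda_\ell/a_{\ell t}$.

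For $\chi_\sigma=\pi-\varphi$, polar rotations have trivial twist and act by coordinate rotation on every spin. The test is then $\sum_q\phi_q(x)(1-e^{-\ii qa})\,u_q(R_z(-a)x)=0$, and the same column comparison gives $a_{\ell q}(1-e^{-\ii qa})=0$. You assert this per-spin condition without the separation that makes it necessary.

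Your suggested remedy, averaging over the stabiliser of a point, does not work as stated, and neither does your later step forcing $s'=t$. For each fixed $g$ the identity holds only for almost every $x$, and the phases $e^{-\ii s'\chi_\sigma}$ are merely measurable. Moreover $\{(g,x):gx=x\}$ is null in $\SOthree\times\sphere$, so you cannot take $g$ in the stabiliser of $x$.

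The idea can be repaired into a genuinely different proof:
\begin{itemize}
\item By Fubini the identity holds for a.e.\ $(g,x)$.
\item Use the Haar-preserving reparametrisation $g^{-1}=s(y)R_z(\gamma)s(x)^{-1}$, for which $g^{-1}x=y$ and the twist is $R_z(\gamma)$. It turns the identity into $\sum_{q}\bigl[\phi_q(x)e^{-\ii q\gamma}-\phi_q(y)e^{-\ii t\gamma}\bigr]u_q(y)=0$ for a.e.\ $(x,y,\gamma)$.
\item Its Fourier coefficients in $\gamma$ give $u_q=0$ for $q\ne t$ and $(\phi_t(x)-\phi_t(y))\,u_t(y)=0$. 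This yields both conclusions at once, without Schur's lemma.
\end{itemize}
This only works with the twist as an independent integration variable. For a single polar rotation there is no twist variable to average over. There you would have to rotate the input by arbitrary $g$ before averaging, or use the column independence above.
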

\begin{proof}
Corollary~\ref{cor:fibre-decomposition} and Definition~\ref{def:lift} give the finite sum, which is bounded since each component is bounded and the phase has unit modulus.  If it is equivariant into spin $t$, Theorem~\ref{thm:classification} applies.  Testing each input harmonic and comparing the independent Wigner columns representing the different output spins forces all components except $t$ to vanish and, for a nonzero map, forces its phase to be constant almost everywhere.  The converse follows from equivariance of $\Cop{s}{t}$.  Polar rotations act by coordinate rotation for every spin; substituting $\chi_\sigma=\pi-\varphi$ gives the stated phase criterion.  The degreewise argument and the zero-map case are detailed in Appendix~\ref{app:proofs:beyond}.
\end{proof}

When only spin $t$ is present, $E_\sigma=e^{-\ii t\chi_\sigma}E_{\sigma_0}$ already expresses the output in the $\sigma$ frame.  Returning it to the coordinate frame multiplies by $e^{+\ii t\chi_\sigma}$; reading it in the $\sigma$ frame does not remove the phase.

The standard correlation of Definition~\ref{def:standard-correlation} is $E_{\sigma_0}[\lift{0}\psi]$.  The restricted discrete--continuous convolution of Ocampo \etal\ \cite{ocampo2023scalable} has this form up to quadrature.  Wiaux \etal\ \cite{wiaux2006fast} expand the standard correlation as $\sum_n\sum_{\ell m}(-1)^n\sqrt{4\pi/(2\ell+1)}\,\psi^{*}_{\ell,-n}f_{\ell m}\,\sw{n}{Y}_{\ell m}$, which is \eqref{eq:section-evaluation} with \eqref{eq:harmonic-ss-lifted}, constants included: it sums the spin components in the coordinate frame.  Polar equivariance is guaranteed, but full rotation equivariance is not present in general.  Ocampo \etal\ derive the limiting rotation behaviour as $\beta\to0$, corresponding to the vanishing twist for polar rotations, away from the coordinate singularities.

The commutative anisotropic convolution of Definition~\ref{def:commutative-anisotropic} uses $\chi_\sigma=\pi-\varphi$ and a bilinear pairing.  It is commutative \cite{sadeghi2012commutative} but not fully rotation-equivariant in general.  Its polar symmetries depend on the active orders: even orders retain a half-turn, and orders divisible by $q$ retain the cyclic rotations through multiples of $2\pi/q$.  A zonal kernel retains full scalar equivariance, while non-polar symmetries of other kernels require separate analysis.  Proposition~\ref{prop:section-evaluation} gives the precise obstruction to combining nonzero-spin components into one equivariant scalar output.

\subsection{Operators diagonal in the spherical harmonic basis}
\label{sec:beyond:diagonal}

\begin{proposition}[Harmonic multiplication and sifting]
\label{prop:diagonal}
Let $T$ act on $L^2(\sphere)$ by $(Tf)_{\ell m}=g_{\ell m}f_{\ell m}$ with a bounded coefficient sequence $g$.  This is harmonic multiplication, or sifting with the coefficientwise-conjugated filter; the filter may be a distribution.  Then $T$ commutes with every polar rotation, and is $\SOthree$-equivariant if and only if $g_{\ell m}$ is independent of $m$ for each $\ell$.  In that case $T=\Cop{0}{0}[\Psi_T]$ with $c_\ell=g_{\ell0}$ in the Fourier-multiplier sense of Theorem~\ref{thm:classification}, with a generally distributional zonal representative.  If $g_{\ell m}=g_{\ell,-m}$, $T$ also commutes with $R_y(\pi)$, and thus with the embedded $O(2)$ normaliser of polar rotations.
\end{proposition}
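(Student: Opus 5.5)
The plan is to work entirely in harmonic space, using the rotation law \eqref{eq:rotation-coeffs} with $s=0$, $(\Rot_\rho f)_{\ell m}=\sum_n D^\ell_{mn}(\rho) f_{\ell n}$. Then $T$ commutes with $\Rot_\rho$ exactly when, for every $\ell$ and $m,n$,
\begin{equation*}
  g_{\ell m}\, D^\ell_{mn}(\rho) = D^\ell_{mn}(\rho)\, g_{\ell n}.
\end{equation*}
That is, the diagonal matrix $G^\ell=\mathrm{diag}(g_{\ell m})$ must commute with $D^\ell(\rho)$. Boundedness of $T$ is just $\sup\lvert g_{\ell m}\rvert<\infty$ by Parseval, so everything reduces to degreewise matrix statements.

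\textbf{Polar rotations.} For $\rho=(a,0,0)$ we have $D^\ell_{mn}=e^{-\ii m a}\delta_{mn}$ by \eqref{eq:wignerD}, since $d^\ell_{mn}(0)=\delta_{mn}$. This matrix is diagonal, so it commutes with $G^\ell$.

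\textbf{Full equivariance.} If $g_{\ell m}=c_\ell$ is independent of $m$, then $G^\ell=c_\ell I$ commutes with everything. Conversely, suppose $G^\ell$ commutes with $D^\ell(\rho)$ for every $\rho$. Because $D^\ell$ is irreducible, Schur's lemma forces $G^\ell$ to be a scalar. A more concrete route is to note that $d^\ell_{mn}(\beta)\ne0$ for generic $\beta$, for instance small $\beta>0$ where $d^\ell_{m,m\pm1}\neq 0$. The displayed identity then gives $g_{\ell m}=g_{\ell n}$ for adjacent orders, and hence for all orders.

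\textbf{Identification as a lifted convolution.} In the equivariant case $T$ is a bounded equivariant map $\Lsph{0}\to\Lsph{0}$. Theorem~\ref{thm:classification}(b) and Proposition~\ref{prop:spin-change} then give $T=\Cop{0}{0}[\Psi_T]$ with $c_\ell=g_{\ell0}$, where $\Psi_T{}^\ell_{00}=c_\ell^*$ and all other coefficients vanish. This kernel is zonal, being invariant under $R_z$ on both sides by \eqref{eq:two-sided}, and it lies in $L^2$ only under the Hilbert--Schmidt condition. The sifting statement is only a relabelling: $(f\circledcirc g)_{\ell m}=f_{\ell m}g^*_{\ell m}$ is the same operator with coefficient sequence $g^*_{\ell m}$.

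\textbf{The $R_y(\pi)$ symmetry.} I will use the identity $d^\ell_{mn}(\pi)=(-1)^{\ell+n}\delta_{m,-n}$, or an equivalent form from Appendix~\ref{app:conventions}. With it, $D^\ell(R_y(\pi))$ is an antidiagonal matrix, and the commutation condition reads $g_{\ell m}=g_{\ell,-m}$. This condition holds by hypothesis. Combining it with the polar rotations, $T$ commutes with the group generated by $R_z(a)$ and $R_y(\pi)$. This group is the normaliser $O(2)$ of the polar $\SOtwo$, embedded in $\SOthree$.

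The main obstacle is only bookkeeping. I need the correct sign and phase convention for $d^\ell_{mn}(\pi)$, although any phase is harmless because only the antidiagonal support matters. I also need the distributional reading when $g$ is not square summable, which is already covered by Theorem~\ref{thm:classification}(b).
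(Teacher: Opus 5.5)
Your proof is correct and follows essentially the same route as the paper: polar rotations act diagonally on the coefficients, Schur's lemma gives full equivariance and the identification $c_\ell=g_{\ell0}$ (the paper obtains both by citing Theorem~\ref{thm:classification} with $s=s'=0$), and the antidiagonal form of $D^\ell(R_y(\pi))$ from \eqref{eq:app-d-poles} gives the embedded $O(2)$ normaliser. Your alternative argument through adjacent orders, using $d^\ell_{m,m+1}(\beta)\neq0$ for small $\beta>0$, is a valid elementary substitute for Schur's lemma but is not needed.
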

\begin{proof}
The full-equivariance criterion is Theorem~\ref{thm:classification} with $s=s'=0$.  Polar rotations act diagonally by $e^{-\ii ma}$ and commute with any diagonal multiplier, while $R_y(\pi)$ exchanges $m$ and $-m$ up to a sign, so commutes when the corresponding multipliers agree.
\end{proof}

Proposition~\ref{prop:diagonal} is the characterisation of isotropic operators of Kennedy \etal\ \cite{kennedy2011azimuthally}.  An $L^2$ sifting filter gives a Hilbert--Schmidt diagonal operator, with norm squared $\sum_{\ell m}\lvert g_{\ell m}\rvert^2$; general bounded diagonal operators allow distributional filters.  The sifting filter and the equivalent zonal convolution kernel are different representations of the operator.

\subsection{The desiderata revisited}
\label{sec:beyond:desiderata}

Roddy \& McEwen \cite{roddy2021sifting} ask for directional kernels, spherical output and the $\mathcal{O}(L^3)$ cost of spherical harmonic transforms.  Their comparison includes isotropic, left, directional, commutative anisotropic and sifting constructions, and motivates sifting by the limitations of the preceding constructions considered there.  Sifting meets the three criteria, as does coordinate-section standard correlation, whose $\mathcal{O}(L^3)$ grid evaluation is derived in Appendix~\ref{app:proofs:standard-correlation}.  The cited commutative anisotropic algorithm costs $\mathcal{O}(L^3\log L)$ \cite{sadeghi2012commutative} and does not meet the strict third criterion.

Full rotation equivariance adds a different restriction: from one fixed input spin to one fixed output spin, Corollary~\ref{cor:nogo} permits only a radial kernel with the prescribed frame phases, not independent directionality.  Standard correlation and sifting are therefore not fully equivariant in general, although each has equivariant special cases and guaranteed polar symmetry.  A tuple of output spins retains directional information equivariantly, at an $\mathcal{O}(NL^3)$ upper bound for at most $2N-1$ components.

\section{Computation}
\label{sec:computation}

In this section we describe how the operators of the article are computed exactly and efficiently for band-limited signals.  We first present harmonic-space algorithms for the four generalised lifted spherical convolutions and their computational complexity, which follow from the part of the kernel that each operator reads and from the representation retained at its output.  We then describe \texttt{s2conv}, the open-source code in which the operators are implemented.  We finally present numerical experiments that verify the reductions of Sec.~\ref{sec:reductions} and the equivariance of every operator to machine precision, and that confirm the predicted scaling of the computation time with the band-limit.

\subsection{Harmonic-space algorithms and their complexity}
\label{sec:computation:algorithms}

The lifted input need not be formed explicitly: for band-limited signals every operator of Table~\ref{tab:taxonomy} is computed in harmonic space by spin spherical harmonic and Wigner transforms, by appealing to the sampling theorems on the sphere \cite{mcewen2011novel} and on the rotation group \cite{mcewen2015novel}, which are exact for band-limited signals.  Let $L$ be the band-limit and $1\le N\le L$ an azimuthal band-limit.  We assume that the Wigner coefficients of the kernel are available.  We also consider the cost of the transforms of the sampled input and output.  Working memory is given by the arrays of harmonic coefficients held during the computation, with the Wigner $d$-functions of the transforms generated on the fly; the kernel parameters are counted separately, and fully precomputed Wigner $d$-functions for a transform of fixed spin require $\mathcal{O}(L^3)$ further storage \cite{price2024differentiable}.  The four operators of Definition~\ref{def:lifted-conv} are computed as follows.
\begin{enumerate}
  \item[$\Cop{s}{s'}$:] a forward spin-$s$ spherical harmonic transform, multiplication of the coefficients by $c_\ell=(-1)^{s+s'}\Psi^{\ell *}_{-s',-s}$, and an inverse spin-$s'$ transform, at a cost of $\mathcal{O}(L^3)$ \cite{mcewen2011novel}, with $\mathcal{O}(L^2)$ working memory and at most $L$ kernel parameters.
  \item[$\Cop{s}{G}$:] a forward spin-$s$ transform, multiplication by $\Psi^{\ell *}_{n,-s}$ for $\lvert n\rvert<N$, and an inverse Wigner transform, which comprises at most $2N-1$ inverse spin transforms and a Fourier series in $\gamma$ \cite{mcewen2015novel,price2024differentiable}, at a cost of $\mathcal{O}(NL^3)$, with $\mathcal{O}(NL^2)$ storage for the output and $\mathcal{O}(NL)$ kernel parameters; $N=L$ allows an unrestricted column.
  \item[$\Cop{G}{s'}$:] for an input supported on $\lvert m'\rvert<N$, a forward Wigner transform, contraction with the entries $\Psi^{\ell *}_{-s',m'}$ of the row $-s'$, and an inverse spin-$s'$ transform, at a cost of $\mathcal{O}(NL^3)$, with $\mathcal{O}(NL^2)$ storage for the input and $\mathcal{O}(NL)$ kernel parameters.
  \item[$\Cop{G}{G}$:] a forward Wigner transform, a matrix product for each degree, and an inverse Wigner transform.  With input azimuthal band-limit $N_{\rm in}$ and kernel row bound $N_{\rm out}$ as in Sec.~\ref{sec:reductions:group}, the transforms cost $\mathcal{O}((N_{\rm in}+N_{\rm out})L^3)$ and the products $\mathcal{O}(N_{\rm in}N_{\rm out}L^2)$, with $\mathcal{O}(N_{\rm in}N_{\rm out}L)$ kernel parameters and $\mathcal{O}((N_{\rm in}+N_{\rm out})L^2)$ storage; taking both bounds equal to $N$ gives a cost of $\mathcal{O}(NL^3+N^2L^2)$, and $N=L$ the unrestricted $\mathcal{O}(L^4)$.
\end{enumerate}
These costs are collected in Table~\ref{tab:complexity}, together with those of the constructions of Sec.~\ref{sec:beyond}, which are formed from the same transforms: the evaluation on a general section is computed component by component in $\mathcal{O}(NL^3)$, each spin component being accumulated into the single $\mathcal{O}(L^2)$ output array as it is computed, so that the tuple of $\mathcal{O}(NL^2)$ coefficients need not be held at once, while the standard correlation on grid values, harmonic multiplication and the sifting convolution cost $\mathcal{O}(L^3)$ and the commutative anisotropic convolution $\mathcal{O}(L^3\log L)$ \cite{sadeghi2012commutative}.  Note that a single spin-changing convolution costs $\mathcal{O}(L^3)$, as does any finite tuple of spins, and that further structure of the kernel may permit other algorithms.  Furthermore, by Proposition~\ref{prop:composition} consecutive operators require no intermediate transforms: a chain of sphere-to-sphere convolutions is a single multiplier, and a tuple of spins is stored as $2N-1$ arrays of harmonic coefficients between which no transform is required.  All of the harmonic transforms required are available with exact quadrature on the sphere \cite{mcewen2011novel} and on the rotation group \cite{mcewen2015novel}, and are implemented differentiably on hardware accelerators, with precomputed or on-the-fly Wigner $d$-functions, in the \texttt{s2fft} code \cite{price2024differentiable} (\url{https://github.com/astro-informatics/s2fft}).

\begin{table}[t]
  \caption{Algorithmic complexity for band-limit $L$ and azimuthal band-limit $N\le L$: the scaling of the computation time, the working memory and the number of kernel parameters of each operator, with the part of the kernel that it reads.  Kernel coefficients are supplied and the transforms are those of \cite{mcewen2011novel,mcewen2015novel}, implemented in \cite{price2024differentiable}.  Memory is the working memory of Sec.~\ref{sec:computation:algorithms}, the arrays of input and output coefficients held during the computation with Wigner $d$-functions generated on the fly; parameter storage is listed separately, and fully precomputed Wigner $d$-functions increase the memory.  Counts per degree are further limited by $2\ell+1$, and ``--'' means not applicable.}
  \label{tab:complexity}
  \scriptsize\setlength{\tabcolsep}{3pt}
  \centering
  \resizebox{\textwidth}{!}{%
  \begin{tabular}{lllll}
    \toprule
    Operator & Effective kernel support & Compute & Memory & Parameters \\
    \midrule
    $\Cop{s}{s'}$ & one entry per $\ell$ & $L^3$ & $L^2$ & $L$ \\
    $\Cop{s}{G}$, $\lvert n\rvert<N$ & at most $2N-1$ entries per $\ell$ & $NL^3$ & $NL^2$ & $NL$ \\
    $\Cop{s}{G}$, $N=L$ & one column per $\ell$ & $L^4$ & $L^3$ & $L^2$ \\
    $\Cop{G}{s'}$, input $\lvert m'\rvert<N$ & row $-s'$, $\lvert m'\rvert<N$ & $NL^3$ & $NL^2$ & $NL$ \\
    $\Cop{G}{G}$, input $\lvert m'\rvert<N$ & $\lvert n\rvert,\lvert m'\rvert<N$ & $NL^3+N^2L^2$ & $NL^2$ & $N^2L$ \\
    $\Cop{G}{G}$, $N=L$ & full block per $\ell$ & $L^4$ & $L^3$ & $L^3$ \\
    General-section evaluation, accumulated & col.\ $-s$, $\lvert n\rvert<N$ & $NL^3$ & $L^2$ & $NL$ \\
    Standard correlation, grid values & one column per $\ell$ & $L^3$ & $L^2$ & $L^2$ \\
    Commutative anisotropic \cite{sadeghi2012commutative} & one column per $\ell$ & $L^3\log L$ & -- & $L^2$ \\
    Harmonic multiplication / sifting & diagonal in $(\ell,m)$ & $L^3$ & $L^2$ & $L^2$ \\
    \bottomrule
  \end{tabular}}
\end{table}

\subsection{Implementation: the \texttt{s2conv} code}
\label{sec:computation:software}

We have implemented the algorithms described above in \texttt{s2conv}, an open-source code that accompanies this article and is publicly available at \url{https://github.com/astro-informatics/s2conv} under the MIT licence.  The code is written in \texttt{JAX} and relies on the \texttt{s2fft} code \cite{price2024differentiable} to compute spin spherical harmonic and Wigner transforms, so that every operator is differentiable, may be compiled, runs on hardware accelerators (\eg\ GPUs, TPUs) and is exact for band-limited signals.  Since the operators are differentiable, their kernel coefficients may also be learned, as in the forthcoming article \cite{mulder2026spin}.  The code follows the article closely.  A single function implements the four operators of Definition~\ref{def:lifted-conv} for a kernel given by its Wigner coefficients, in the harmonic forms of Theorem~\ref{thm:harmonic}, reading only the entry, column, row or block of the kernel that the operator uses, so that the complexity of Table~\ref{tab:complexity} is realised.  The lift, the adjoint projection and the decomposition of a function on the rotation group into its spin components (Corollary~\ref{cor:fibre-decomposition}) are also provided, together with the kernels of Sec.~\ref{sec:reductions}, namely the kernel lifted from a spin field, the kernel $\Psi_h$ of the left convolution, the canonical kernel of a multiplier (Theorem~\ref{thm:classification}) and the flipped kernel $\widetilde\Psi$ of an adjoint.  Furthermore, the classical convolutions of Sec.~\ref{sec:review} are provided in the closed forms of Sec.~\ref{sec:reductions}, the operators of mathematical physics of Proposition~\ref{prop:physics} as multipliers, and the constructions of Sec.~\ref{sec:beyond}, namely the standard correlation and the commutative anisotropic convolution as evaluations on a section, and harmonic multiplication and the sifting convolution as diagonal operators.  The harmonic-space operators act on arrays of coefficients in the layout of \texttt{s2fft} and broadcast over leading axes, so that batches, channels or spins may be carried as leading dimensions or mapped over.  The transforms may be computed with Wigner $d$-functions generated on the fly, with the $\mathcal{O}(L^2)$ working memory of Table~\ref{tab:complexity}, or precomputed and stored, which trades $\mathcal{O}(L^3)$ memory for speed, as \texttt{s2fft} allows.

\subsection{Numerical validation}
\label{sec:computation:validation}

In order to verify the reductions of Sec.~\ref{sec:reductions} and the equivariance of the operators, and to assess their computation time, we perform the following numerical experiments with the benchmark scripts distributed with the code.  We measure the rotational equivariance of an operator $T$ by the protocol of Cobb \etal\ \cite{cobb2021efficient} and Ocampo \etal\ \cite{ocampo2023scalable}.  We simulate band-limited test signals $f_i$ whose harmonic or Wigner coefficients are drawn from the standard normal distribution, draw rotations $g_j$ from the Haar measure of $\SOthree$, and compute the relative errors
\begin{equation}
  \epsilon_{ij} = \frac{\lVert T\,\pi(g_j)f_i - \pi'(g_j)\,Tf_i\rVert}{\lVert \pi'(g_j)\,Tf_i\rVert},
  \label{eq:equivariance-error}
\end{equation}
where $\pi$ and $\pi'$ are the representations acting on the input and on the output, $\pis{s}$ on a spin-$s$ field and $\piG$ on a function on the rotation group, and where the norms are those of $\Lsph{s}$ or $\Lrot$, which are computed by Parseval's identity and are exact for band-limited outputs.  The numerator is the equivariance error, the difference between rotating the input and rotating the output, and the denominator is the norm of the output, which a rotation does not change.  For an evaluation on a section, whose output is not band-limited, the comparison is made through the spin components of Proposition~\ref{prop:section-evaluation}, which is exact and considerably faster than a comparison point by point.  The errors are averaged over 100 random signals and 100 random rotations for each operator.

For each operator we also compute a reduction error, the largest relative difference, over 100 random signals, between its closed form and its form as a generalised lifted spherical convolution, \ie\ Definition~\ref{def:lifted-conv} composed literally from the lift, the group convolution in Wigner space and the adjoint projection.  This verifies the reductions of Sec.~\ref{sec:reductions} together with their constants, for example the $\sqrt{4\pi/(2\ell+1)}$ of the axisymmetric convolution and the $2\pi$ of the left convolution, and, for the multipliers of Proposition~\ref{prop:physics}, the canonical kernel of Theorem~\ref{thm:classification}.

The operators tested are the four operators of Definition~\ref{def:lifted-conv} with random kernels, for the spins $0 \to 0$, $2 \to 1$ and $-1 \to 2$ between spin fields, $0$ and $2$ to the rotation group, the rotation group to spins $0$ and $2$, and the rotation group to itself, with functions on the rotation group azimuthally band-limited at $N = 5$; the classical convolutions of Sec.~\ref{sec:review}, namely the axisymmetric convolution, the left convolution, the directional convolution of a spin-$2$ field with a steerable kernel and the synthesis; the operators of mathematical physics, namely $\eth$ from spin $1$ to spin $2$, $\bar\eth$ from spin $1$ to spin $0$, the $E$ mode of a spin-$2$ field and the Kaiser--Squires relation; and, for contrast, the standard correlation, the commutative anisotropic convolution, the sifting convolution and harmonic multiplication, which Sec.~\ref{sec:beyond} shows are not equivariant under general rotations.

\begin{table}[t]
  \caption{Reduction errors and equivariance errors of the operators of \texttt{s2conv} at band-limit $L = 128$, in double precision, from the benchmark script of the code.  The equivariance error is the mean of \eqref{eq:equivariance-error} over 100 random signals and 100 random rotations, and the reduction error the largest relative difference between the closed form of the operator and Definition~\ref{def:lifted-conv} composed from the lift, the group convolution and the projection, over 100 random signals (``--'' where the operator is the definition itself or has no lifted form).  The random kernels of the lifted operators, the directional kernels and the functions on the rotation group are azimuthally band-limited at $N = 5$.  The rule separates the equivariant operators from the constructions of Sec.~\ref{sec:beyond}.}
  \label{tab:equivariance}
  \small
  \centering
\begin{tabular}{lcc}
\toprule
Operator & Reduction error & Equivariance error \\
\midrule
$\mathcal{C}_{0\to0}[\Psi]$ & $1.4\times10^{-16}$ & $9.2\times10^{-16}$ \\
$\mathcal{C}_{2\to1}[\Psi]$ & $1.4\times10^{-16}$ & $9.0\times10^{-16}$ \\
$\mathcal{C}_{-1\to2}[\Psi]$ & $1.4\times10^{-16}$ & $8.7\times10^{-16}$ \\
$\mathcal{C}_{0\to G}[\Psi]$ & $1.3\times10^{-16}$ & $1.2\times10^{-15}$ \\
$\mathcal{C}_{2\to G}[\Psi]$ & $1.3\times10^{-16}$ & $1.2\times10^{-15}$ \\
$\mathcal{C}_{G\to0}[\Psi]$ & $2.0\times10^{-16}$ & $1.3\times10^{-15}$ \\
$\mathcal{C}_{G\to2}[\Psi]$ & $2.1\times10^{-16}$ & $1.3\times10^{-15}$ \\
$\mathcal{C}_{G\to G}[\Psi]$ & $0$ & $1.6\times10^{-15}$ \\
axisymmetric convolution & $1.7\times10^{-16}$ & $8.3\times10^{-16}$ \\
left convolution & $0$ & $7.6\times10^{-16}$ \\
directional convolution & $1.5\times10^{-16}$ & $1.1\times10^{-15}$ \\
directional synthesis & $1.6\times10^{-16}$ & $1.2\times10^{-15}$ \\
$\eth$ & $0$ & $9.6\times10^{-16}$ \\
$\bar\eth$ & $0$ & $9.6\times10^{-16}$ \\
$E$ mode from $Q + \ii U$ & $0$ & $9.7\times10^{-16}$ \\
Kaiser--Squires & $0$ & $8.8\times10^{-16}$ \\
\midrule
standard correlation & $3.8\times10^{-16}$ & $1.3\times10^{0}$ \\
commutative anisotropic convolution & -- & $1.3\times10^{0}$ \\
sifting convolution & -- & $1.4\times10^{0}$ \\
harmonic multiplication & -- & $1.4\times10^{0}$ \\
\bottomrule
\end{tabular}

\end{table}

Results at $L = 128$ are presented in Table~\ref{tab:equivariance}; the results at $L = 32$ and $64$ are distributed with the code.  All numerical experiments are performed in double precision on an NVIDIA GB10 graphics processing unit.  The equivariance error of every equivariant operator is of order $10^{-15}$, \ie\ close to machine precision.  The reduction errors are of order $10^{-16}$, or vanish identically where the closed form and the lifted form read the same coefficients, which verifies the reductions of Sec.~\ref{sec:reductions} with their constants.  In contrast, the standard correlation, the commutative anisotropic convolution, the sifting convolution and harmonic multiplication exhibit equivariance errors of order unity, as expected, since Sec.~\ref{sec:beyond} shows that these constructions are equivariant only under rotations about the polar axis.

In order to assess the scaling of computation time with band-limit, we time one operator of each of the four kinds, $\Cop{2}{1}$ and, with $N = 5$, $\Cop{2}{G}$, $\Cop{G}{2}$ and $\Cop{G}{G}$, with random kernels, for band-limits from $L = 16$ to $L = 256$.  Each operator is run end to end, including its transforms, after compilation, and the computation time is measured by the median of five runs in double precision.  The timings are performed on a single processor core, one Cortex-X925 core of the GB10 with the multithreading of the compiler disabled, so that they count operations rather than the occupancy of a graphics processing unit.  Results are plotted in Fig.~\ref{fig:scaling}.  With $N$ fixed, the complexity of each of the four operators is the $\mathcal{O}(L^3)$ of its transforms (Table~\ref{tab:complexity}), and, as expected, the measured computation time is indeed found to scale as $\mathcal{O}(L^3)$.  The operators on the rotation group are more costly than the operator between spin fields since each of their Wigner transforms runs over $2N-1 = 9$ azimuthal orders.

\begin{figure}[t]
  \centering
  \includegraphics{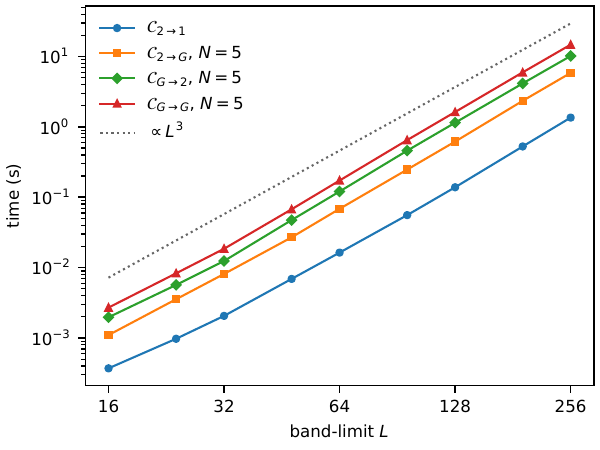}
  \caption{Computation time of the four generalised lifted spherical convolutions against band-limit, for $\Cop{2}{1}$ and, with $N = 5$, $\Cop{2}{G}$, $\Cop{G}{2}$ and $\Cop{G}{G}$, with random kernels, each run end to end with its transforms and measured by the median of five compiled runs in double precision on a single Cortex-X925 core of an NVIDIA GB10 with multithreading disabled.  As expected, computation time scales as $\mathcal{O}(L^3)$, the complexity of the transforms (Table~\ref{tab:complexity}), which is shown by the dotted line.}
  \label{fig:scaling}
\end{figure}

\section{Physical illustrations}
\label{sec:illustrations}

In this section we apply the generalised lifted spherical convolution to fields measured on the Earth and on the sky, in order to illustrate in practice the operators that it provides, which map between fields of different spin exactly and efficiently and which respect the geometry of each field.  Many of the fields of geophysics and cosmology are not scalars.  Wind, an ocean current, the deflection of the vertical and the horizontal magnetic field are tangent vector fields, of spin $\pm1$, while the polarisation of the cosmic microwave background, the shear of weak gravitational lensing and the horizontal part of a gravity-gradient tensor are fields of spin $\pm2$, each describing a headless direction at every point.  The components of such a field are defined only with respect to the local frame $(e_\theta, e_\varphi)$ of Sec.~\ref{sec:prelim:spin}, which turns from point to point and is undefined at the poles of the coordinates.  Treating the components as independent scalar fields, for example by smoothing $v_\theta$ and $v_\varphi$, or $Q$ and $U$, separately, combines values expressed in different frames, and the result then depends on the choice of coordinates and is incorrect near their poles.  In each of the examples we adopt the lifted convolutions whose input and output spins are those of the physical quantities involved.  A tangent vector field with northward and eastward components $(v,u)$ is the spin-$1$ field $v_\theta + \ii v_\varphi = -v + \ii u$, since $e_\theta$ points south, and a polarisation with Stokes parameters $(Q,U)$ is the spin-$2$ field $Q + \ii U$.  The fields are truncated at the band-limit $L = 64$ for these illustrations, and the fields of the Earth are multiplied by the taper $e^{-\ell(\ell+1)/L^2}$, which limits the ringing of the truncation at coastlines and jets.  The examples are computed with \texttt{s2conv}, in a notebook distributed with the code, from which the figures of this section are taken.

We restrict our attention to the lifted convolution itself, which, since its kernels may be distributions, also includes differential operators such as $\eth$ and $\bar\eth$ and their compositions.  Note that convolution is often one component of other analyses, such as wavelet transforms on the sphere, convolutional neural networks and Fourier neural operators, which build on the operators illustrated here but which we do not consider further.

The fields considered are shown in Fig.~\ref{fig:fields-overview} of Sec.~\ref{sec:prelim:spin}.  Throughout this section, maps of the whole sphere use the Mollweide projection, centred on the meridian of longitude zero with north at the top, the fields of the Earth in geographic coordinates and the Planck maps in Galactic coordinates, and views of one hemisphere use the orthographic projection, centred on the point given in each caption.  A signed quantity is shown as colour on a diverging scale, red for positive and blue for negative values and white for zero, and a non-negative quantity, such as the modulus of a spin field, on a sequential scale from light to dark.  Spin fields are drawn in the tangent plane with the glyphs of Fig.~\ref{fig:spin-fields}: a spin-$1$ field $\sw{1}{f}$ as an arrow at the angle $\arg(\sw{1}{f})$ from $e_\theta$ towards $e_\varphi$, and a spin-$2$ field $\sw{2}{f}$ as a headless stick at the angle $\arg(\sw{2}{f})/2$, at quasi-uniform points, with lengths that grow with the modulus of the field.  Coastlines are drawn as black lines.

\subsection{Topography: slopes and ridges}
\label{sec:illustrations:topography}

The elevation of the land and the depth of the ocean floor, the topography and bathymetry $f$ of ETOPO 2022 \cite{noaa2022etopo}, form a scalar field, but the two quantities that describe the shape of the relief are not scalars: the slope is a vector and the orientation of a mountain range is a headless direction.  We obtain the slope by the lifted convolution $\Cop{0}{1}$ whose kernel is the multiplier $\sqrt{\ell(\ell+1)}$ of the spin raising operator, Proposition~\ref{prop:physics}(a).  Since $\eth f = -(\partial_\theta + \ii\sin^{-1}\theta\,\partial_\varphi)f$ for a scalar field, by \eqref{eq:eth-def}, the spin-$1$ field $\eth f/R_\oplus$, where $R_\oplus$ is the radius of the Earth, is the vector $-\nabla f$, which points downhill and whose modulus is the slope.

To locate mountain ranges we consider a steerable ridge detector $\psi$, a kernel elongated along one direction with a zero-mean profile across it, like a second derivative, truncated to the azimuthal orders $\lvert m\rvert < 5$.  Its directional convolution with the field, $f \circledast \psi$, is a function on the rotation group that gives the response of the detector at every position and every orientation $\gamma$.  Since the kernel is steerable, this function is the tuple of the nine spin fields $\sw{s'}{g} = \Cop{0}{s'}[\lift{0}\psi]f$, $\lvert s'\rvert < 5$, by Proposition~\ref{prop:band-limited}, from which the response at any orientation follows by Proposition~\ref{prop:section-evaluation} as $\sum_{s'}(2\pi)^{-1/2}e^{-\ii s'\gamma}\,\sw{s'}{g}$.  The largest response over $\gamma$ measures the strength of a ridge, and the spin-$2$ component $\sw{2}{g}$, the part of the response that varies with orientation as $\cos 2\gamma$, like that of a line, is itself the orientation of the ridges as a spin-$2$ field: the response is largest when the detector lies at the angle $\arg(\sw{2}{g})/2$, along the ridge.  No sampling of orientations is required, and since both outputs are rotation-equivariant, ranges are detected alike whatever their orientation and position, near the equator or near the poles.  We apply the detector to the land elevation $f_+ = \max(f,0)$, since the continental margins, a step of several kilometres, would otherwise dominate its response.  Figure~\ref{fig:topography} shows the results: the arrows run down the flanks of the Tibetan Plateau, the ridge strength picks out the Himalaya, the Andes, the Rocky Mountains, the highlands of East Africa and the ice sheets of Greenland and Antarctica, and where the ridge strength is large the sticks lie along the ranges, from east to west along the Himalaya and from north to south along the Andes.

\begin{figure}[tp]
  \centering
  \begin{subfigure}[t]{0.32\textwidth}
    \centering
    \includegraphics[width=\linewidth]{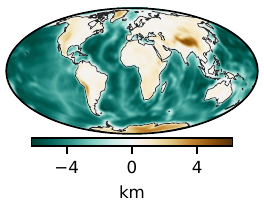}
    \caption{Topography and bathymetry $f$, in km.}
  \end{subfigure}\hfill
  \begin{subfigure}[t]{0.32\textwidth}
    \centering
    \includegraphics[width=\linewidth]{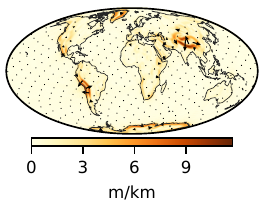}
    \caption{Slope $\lvert\eth f_+\rvert/R_\oplus$, in m\,km$^{-1}$, with $\eth f_+$ drawn as arrows pointing downhill.}
  \end{subfigure}\hfill
  \begin{subfigure}[t]{0.32\textwidth}
    \centering
    \includegraphics[width=\linewidth]{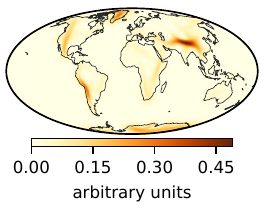}
    \caption{Ridge strength $\max_\gamma (f_+ \circledast \psi)(\cdot,\gamma)$.}
  \end{subfigure}\\[1ex]
  \begin{subfigure}[t]{0.32\textwidth}
    \centering
    \includegraphics[width=\linewidth]{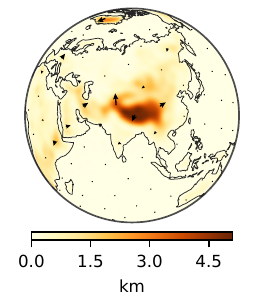}
    \caption{Himalaya and Tibetan Plateau: $f_+$, in km, with the arrows of (b).}
  \end{subfigure}\hfill
  \begin{subfigure}[t]{0.32\textwidth}
    \centering
    \includegraphics[width=\linewidth]{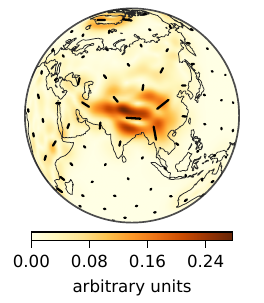}
    \caption{The same view: $\lvert\sw{2}{g}\rvert$ with the ridge orientation drawn as sticks.}
  \end{subfigure}\hfill
  \begin{subfigure}[t]{0.32\textwidth}
    \centering
    \includegraphics[width=\linewidth]{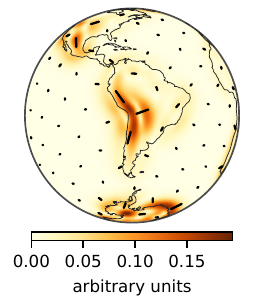}
    \caption{The Andes: as (e).}
  \end{subfigure}
  \caption{The topography of the Earth, its slope and its ridges, from lifted convolutions of spin $0$ to spins $1$ and $2$.  The ridge strength (c) is the largest response of the ridge detector over all orientations, and the ridge orientation (e), (f) is the spin-$2$ component $\sw{2}{g}$ of the response, drawn as headless sticks at the angle $\arg(\sw{2}{g})/2$ with its modulus shown as colour.  The orthographic views are centred on $32^\circ$N, $85^\circ$E (d), (e) and on $20^\circ$S, $68^\circ$W (f).}
  \label{fig:topography}
\end{figure}

\subsection{Gravity: one potential and its derivatives of spin 0, 1 and 2}
\label{sec:illustrations:gravity}

The geoid is the surface of constant gravity potential that coincides with mean sea level over the oceans, and its height $N$ above the reference ellipsoid is a scalar field, which we take from the Earth Gravitational Model EGM2008 \cite{pavlis2012egm2008}.  By Bruns' formula \cite{heiskanen1967physical}, $N$ is the disturbing potential $T$, the difference between the gravity potential of the Earth and that of the ellipsoid, divided by the normal gravity $\gamma$, and the quantities that geodesy measures are derivatives of $T$ of increasing spin.  The gravity anomaly, measured by gravimeters, is the scalar field with coefficients $\Delta g_{\ell m} = \frac{\gamma}{R_\oplus}(\ell-1)N_{\ell m}$ in the spherical approximation, \ie\ the lifted convolution $\Cop{0}{0}$ with the multiplier $\frac{\gamma}{R_\oplus}(\ell-1)$.  The deflection of the vertical, the angle between the direction of gravity and the normal to the ellipsoid, measured by astronomical positioning, is the vector $-\nabla N/R_\oplus$, \ie\ the spin-$1$ field $\varepsilon = \eth N/R_\oplus$ given by $\Cop{0}{1}$ with the multiplier $\sqrt{\ell(\ell+1)}/R_\oplus$.  The horizontal trace-free part of the tensor of second derivatives of $T$, measured by the gradiometer of the GOCE satellite, is the spin-$2$ field $\Gamma = \frac12\eth^2 T/R_\oplus^2$ given by $\Cop{0}{2}$.

The classical integral formulae of physical geodesy are lifted convolutions between these spins.  The formula of Vening Meinesz \cite{heiskanen1967physical}, which gives the deflection of the vertical from gravity anomalies, is the spin-changing convolution $\Cop{0}{1}$ with the multiplier $\sqrt{\ell(\ell+1)}/(\gamma(\ell-1))$.  The recovery of the geoid from deflections of the vertical, astrogeodetic levelling, is the lifted convolution $\Cop{1}{0}$ that inverts $\Cop{0}{1}$, and the recovery of the geoid from its gradients is the lifted convolution $\Cop{2}{0}$ that inverts $\frac12\eth^2$, the operator that also underlies the Kaiser--Squires relation of weak lensing.  Each of these relations is a single multiplier in harmonic space and holds to machine precision for band-limited fields: in Fig.~\ref{fig:gravity} the formula of Vening Meinesz reproduces $\varepsilon$ from $\Delta g$ to a relative error of $10^{-16}$, and the geoid is recovered from $\varepsilon$ and from $\Gamma$ to relative errors of $2\times10^{-16}$ and $10^{-16}$, apart from its monopole and dipole, which the derivatives do not determine.

\begin{figure}[tp]
  \centering
  \begin{subfigure}[t]{0.32\textwidth}
    \centering
    \includegraphics[width=\linewidth]{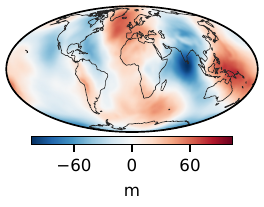}
    \caption{Geoid height $N$, in m.}
  \end{subfigure}\hfill
  \begin{subfigure}[t]{0.32\textwidth}
    \centering
    \includegraphics[width=\linewidth]{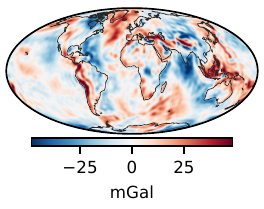}
    \caption{Gravity anomaly $\Delta g = \Cop{0}{0}N$, in mGal.}
  \end{subfigure}\hfill
  \begin{subfigure}[t]{0.32\textwidth}
    \centering
    \includegraphics[width=\linewidth]{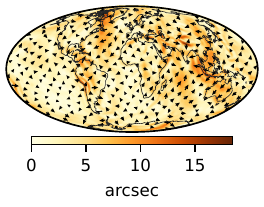}
    \caption{Deflection of the vertical $\varepsilon = \Cop{0}{1}N$, in arcseconds, drawn as arrows.}
  \end{subfigure}\\[1ex]
  \begin{subfigure}[t]{0.32\textwidth}
    \centering
    \includegraphics[width=\linewidth]{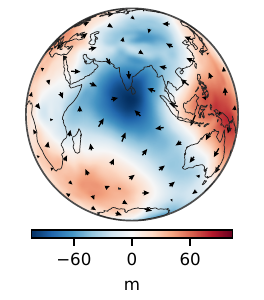}
    \caption{Indian Ocean: $N$ with $\varepsilon$ drawn as arrows.}
  \end{subfigure}\hfill
  \begin{subfigure}[t]{0.32\textwidth}
    \centering
    \includegraphics[width=\linewidth]{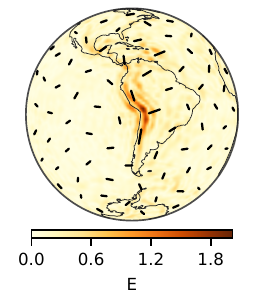}
    \caption{The Andes: gravity gradients $\Gamma = \Cop{0}{2}N$, in E\"otv\"os, drawn as sticks.}
  \end{subfigure}\hfill
  \begin{subfigure}[t]{0.32\textwidth}
    \centering
    \includegraphics[width=\linewidth]{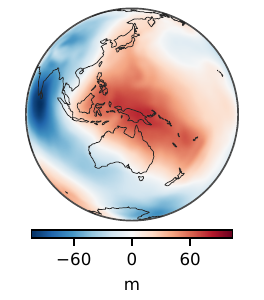}
    \caption{Indonesia and New Guinea: $N$ recovered from $\varepsilon$ by $\Cop{1}{0}$.}
  \end{subfigure}
  \caption{The gravity field of the Earth from EGM2008, as fields of spin $0$, $1$ and $2$ related by lifted convolutions.  The gravity anomaly, in mGal ($10^{-5}$ m s$^{-2}$), is shown on a scale saturated at the 99.5th percentile of its modulus; the arrows of the deflection point down the slope of the geoid; and the sticks of the gradients, in E\"otv\"os ($10^{-9}$ s$^{-2}$), lie along the direction in which the second derivative of the disturbing potential is largest.  The orthographic views are centred on $5^\circ$S, $80^\circ$E (d), $20^\circ$S, $75^\circ$W (e) and $5^\circ$S, $140^\circ$E (f).  The deflections point into the geoid low of the Indian Ocean and away from the geoid high over New Guinea, and the gradients are largest along the Andes and the adjacent trench.}
  \label{fig:gravity}
\end{figure}

\subsection{The geomagnetic field: a potential field of spin 0 and 1, continued to the core}
\label{sec:illustrations:geomagnetic}

The main magnetic field of the Earth, generated by the dynamo in the fluid outer core, is the gradient of a potential, $\bm{B} = -\nabla V$, which the International Geomagnetic Reference Field provides as spherical harmonic coefficients up to degree 13 \cite{alken2021igrf}.  With $v = V/R_\oplus$ on the surface of the Earth, the radial component of the field is the scalar field $B_r = \Cop{0}{0}[\ell+1]\,v$, and the horizontal component is the spin-$1$ field $B_\theta + \ii B_\varphi = \eth v = \Cop{0}{1}[\sqrt{\ell(\ell+1)}]\,v$, the field that aligns a compass needle.  The declination, the angle by which the needle deviates east of geographic north, is the direction of the horizontal field measured from north, and the inclination, the angle by which the field dips below the horizontal, follows from the ratio of $-B_r$ to the modulus of the horizontal field.  Each degree $\ell$ of a potential field of internal origin decays with radius as $r^{-(\ell+2)}$, so that the field at the boundary between the core and the mantle, at radius $r_c = 3485$ km, follows from that at the surface by the multiplier $(R_\oplus/r_c)^{\ell+2}$, the downward continuation.  The multiplier preserves spin, and we apply it to the radial field as $\Cop{0}{0}$ and to the horizontal vector field as $\Cop{1}{1}$, directly and without reference to its components.  Downward continuation amplifies the coefficients of degree 13 by $(R_\oplus/r_c)^{15} \approx 8500$ and reveals the structure of the field at the top of the core (Fig.~\ref{fig:geomagnetic}): intense patches of flux at high latitudes, and patches of reversed flux beneath the South Atlantic and southern Africa, which weaken the surface field in the South Atlantic Anomaly.

\begin{figure}[tp]
  \centering
  \begin{subfigure}[t]{0.32\textwidth}
    \centering
    \includegraphics[width=\linewidth]{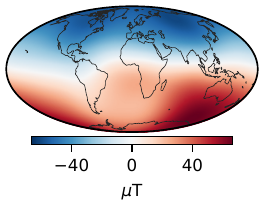}
    \caption{Radial field $B_r = \Cop{0}{0}v$ at the surface, in $\mu$T.}
  \end{subfigure}\hfill
  \begin{subfigure}[t]{0.32\textwidth}
    \centering
    \includegraphics[width=\linewidth]{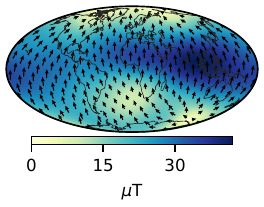}
    \caption{Horizontal field $\eth v = \Cop{0}{1}v$, in $\mu$T, drawn as arrows.}
  \end{subfigure}\hfill
  \begin{subfigure}[t]{0.32\textwidth}
    \centering
    \includegraphics[width=\linewidth]{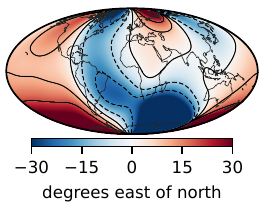}
    \caption{Declination, in degrees east of geographic north.}
  \end{subfigure}\\[1ex]
  \begin{subfigure}[t]{0.32\textwidth}
    \centering
    \includegraphics[width=\linewidth]{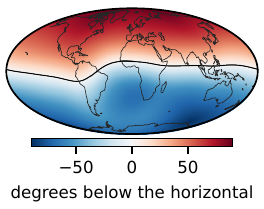}
    \caption{Inclination, in degrees below the horizontal.}
  \end{subfigure}\hfill
  \begin{subfigure}[t]{0.32\textwidth}
    \centering
    \includegraphics[width=\linewidth]{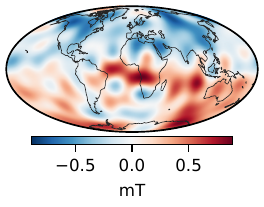}
    \caption{Radial field continued by $\Cop{0}{0}$ to the boundary between the core and the mantle, in mT.}
  \end{subfigure}\hfill
  \begin{subfigure}[t]{0.32\textwidth}
    \centering
    \includegraphics[width=\linewidth]{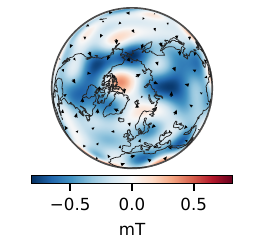}
    \caption{The same boundary from above the North pole, with the horizontal field continued by $\Cop{1}{1}$ drawn as arrows.}
  \end{subfigure}
  \caption{The geomagnetic field of IGRF-14 in 2025, as fields of spin $0$ and $1$, and its downward continuation to the core.  The radial field is blue where it points into the Earth and red where it points out, and the arrows of the horizontal field point in the direction of a compass needle.  The declination is saturated at $\pm30^\circ$, with contours every $10^\circ$, among which those of $0^\circ$ are the agonic lines, and the magnetic equator is the contour of zero inclination.  In (f) the coastlines are projected radially onto the boundary for orientation.}
  \label{fig:geomagnetic}
\end{figure}

\subsection{Wind: vorticity, divergence and smoothing}
\label{sec:illustrations:wind}

We consider the near-surface wind in January, averaged over 1991--2020, from the NCEP/NCAR reanalysis \cite{kalnay1996ncep}.  With northward and eastward components $(v,u)$, and since $e_\theta$ points south, the wind is the spin-$1$ field $\sw{1}{f} = -v + \ii u$.  For a spin-$1$ field the spin lowering operator gives the divergence and the vorticity, the radial component of the curl, together,
\begin{equation}
  \bar\eth\,\sw{1}{f} = -R_\oplus\big(\nabla\cdot\bm{v} + \ii\,\hat{\bm{r}}\cdot\nabla\times\bm{v}\big),
  \label{eq:helmholtz}
\end{equation}
so that the single lifted convolution $\Cop{1}{0}$ yields both as the real and imaginary parts of one scalar field, exactly for the band-limited field, without finite differences and without special treatment of the poles.  Since $\eth\bar\eth$ is the Laplacian on spin-$1$ fields, applying the lifted convolution $\Cop{0}{1}$ of $\eth\nabla^{-2}$ to the divergence and to the vorticity separately gives the Helmholtz decomposition of the wind into a curl-free part, the flow into and out of regions of convergence, and a divergence-free part, the rotational flow of the trade winds and the westerlies, which sum to the wind to a relative error of $10^{-16}$ (Fig.~\ref{fig:wind}).

Smoothing the wind by a Gaussian is the spin-preserving lifted convolution $\Cop{1}{1}$ whose kernel is the Gaussian multiplier $e^{-\ell(\ell+1)\sigma^2/2}$.  A common alternative smooths the components $v_\theta$ and $v_\varphi$ as though they were two scalar fields.  Since the frame $(e_\theta, e_\varphi)$ turns around the poles of the coordinates, this averages vectors expressed in different frames, and near a pole the components vary rapidly even where the wind itself is uniform.  The poles of the coordinates are a matter of convention, and to expose the dependence of each method on that convention we place one pole at the location of the strongest smoothed wind and smooth in both systems of coordinates (Fig.~\ref{fig:wind-smoothing}).  The spin-$1$ smoothing, which is rotation-equivariant, gives the same result in either system to a relative error of $6\times10^{-16}$.  The component-wise smoothing gives a plausible result in geographic coordinates, where the strongest winds lie far from the poles, but with the pole placed in the strong flow it combines the components across the turning frame and destroys the flow near the pole, with a relative error of ten per cent over the sphere: its result depends on the choice of coordinates, which is the error that the spin formalism removes.

\begin{figure}[tp]
  \centering
  \begin{subfigure}[t]{0.32\textwidth}
    \centering
    \includegraphics[width=\linewidth]{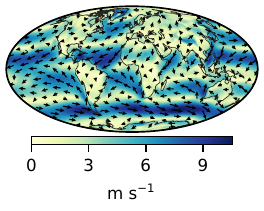}
    \caption{Wind speed $\lvert\sw{1}{f}\rvert$, in m\,s$^{-1}$, with the wind drawn as arrows.}
  \end{subfigure}\hfill
  \begin{subfigure}[t]{0.32\textwidth}
    \centering
    \includegraphics[width=\linewidth]{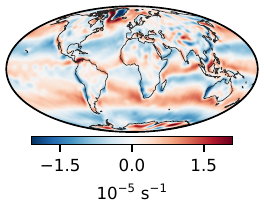}
    \caption{Vorticity $-\mathrm{Im}(\bar\eth\,\sw{1}{f})/R_\oplus$, in $10^{-5}$ s$^{-1}$.}
  \end{subfigure}\hfill
  \begin{subfigure}[t]{0.32\textwidth}
    \centering
    \includegraphics[width=\linewidth]{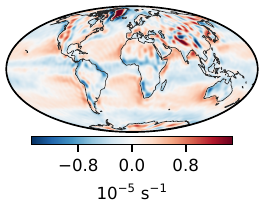}
    \caption{Divergence $-\mathrm{Re}(\bar\eth\,\sw{1}{f})/R_\oplus$, in $10^{-5}$ s$^{-1}$.}
  \end{subfigure}\\[1ex]
  \begin{subfigure}[t]{0.32\textwidth}
    \centering
    \includegraphics[width=\linewidth]{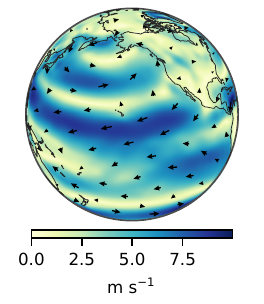}
    \caption{North Pacific: the wind.}
  \end{subfigure}\hfill
  \begin{subfigure}[t]{0.32\textwidth}
    \centering
    \includegraphics[width=\linewidth]{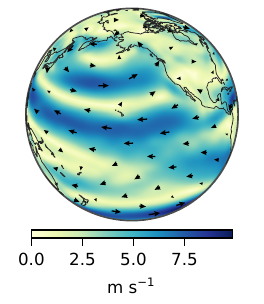}
    \caption{Its divergence-free part.}
  \end{subfigure}\hfill
  \begin{subfigure}[t]{0.32\textwidth}
    \centering
    \includegraphics[width=\linewidth]{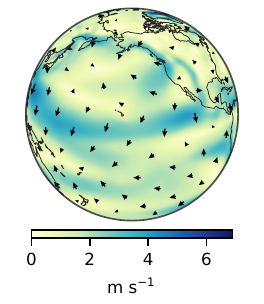}
    \caption{Its curl-free part.}
  \end{subfigure}
  \caption{The January near-surface wind and its vorticity, divergence and Helmholtz decomposition, from the lifted convolutions $\Cop{1}{0}$ and $\Cop{0}{1}$.  The vorticity and the divergence are shown on scales saturated at the 99.5th percentile of their moduli, red for anticlockwise rotation and for outflow and blue for clockwise rotation and for inflow, and the orthographic views are centred on $20^\circ$N, $150^\circ$W.  The vorticity is positive in the cyclonic flow around the Aleutian and Icelandic lows and negative in the anticyclones of the northern subtropics, and the divergence is negative along the intertropical convergence zone, into which the trade winds of both hemispheres flow.}
  \label{fig:wind}
\end{figure}

\begin{figure}[tp]
  \centering
  \begin{subfigure}[t]{0.48\textwidth}
    \centering
    \includegraphics[width=\linewidth]{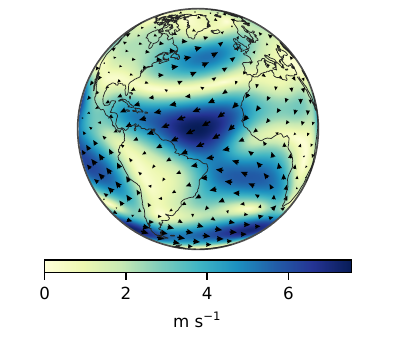}
    \caption{Smoothing by the spin-$1$ convolution $\Cop{1}{1}$, which is the same in either system of coordinates.}
    \label{fig:wind-smoothing:a}
  \end{subfigure}\hfill
  \begin{subfigure}[t]{0.48\textwidth}
    \centering
    \includegraphics[width=\linewidth]{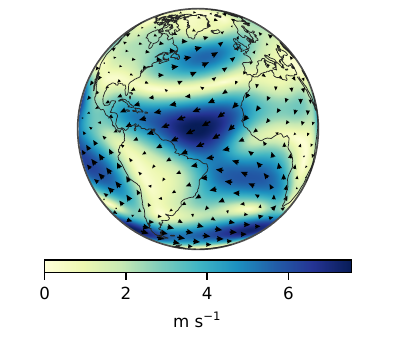}
    \caption{Smoothing of the components $v_\theta$ and $v_\varphi$ as scalar fields in geographic coordinates.}
  \end{subfigure}\\[1ex]
  \begin{subfigure}[t]{0.48\textwidth}
    \centering
    \includegraphics[width=\linewidth]{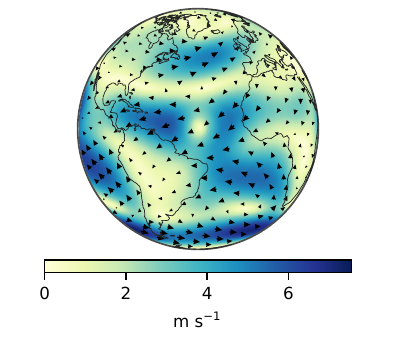}
    \caption{Smoothing of the components with the pole of the coordinates at the centre of the view.}
    \label{fig:wind-smoothing:c}
  \end{subfigure}\hfill
  \begin{subfigure}[t]{0.48\textwidth}
    \centering
    \includegraphics[width=\linewidth]{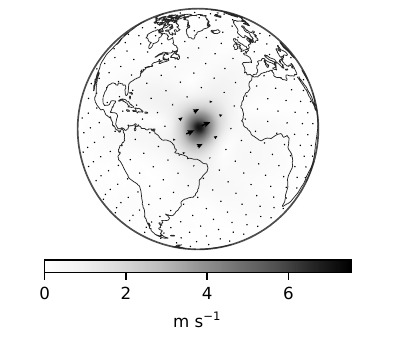}
    \caption{Modulus of the difference between (\subref{fig:wind-smoothing:c}) and (\subref{fig:wind-smoothing:a}), with the difference drawn as arrows.}
  \end{subfigure}
  \caption{Smoothing the wind as a spin-$1$ field and as two scalar components.  Orthographic views centred on the strongest wind after smoothing, at $12^\circ$N, $40^\circ$W, of the wind smoothed by a Gaussian of width $8/L$ radians, with the speed shown as colour on a common scale, in m s$^{-1}$, and the wind drawn as arrows.  With the pole at the centre, the component-wise smoothing destroys the flow within about the width of the smoothing of the pole, where the frame turns fastest.}
  \label{fig:wind-smoothing}
\end{figure}

\subsection{Ocean currents: gyres and upwelling}
\label{sec:illustrations:ocean}

The surface currents of the ocean in 2025, averaged over the year at a depth of 5 m, from the Global Ocean Data Assimilation System \cite{behringer2004godas}, form a spin-$1$ field like the wind, which we set to zero on land.  The lifted convolution $\Cop{1}{0}$ again gives their vorticity and divergence.  The stream function $\psi = R_\oplus^2\nabla^{-2}(\hat{\bm{r}}\cdot\nabla\times\bm{v})$ of the divergence-free part, obtained from the vorticity by the inverse Laplacian, has as its contours the streamlines of that flow, along which the rotational part of the current runs with higher values of $\psi$ to its right.  The streamlines trace the subtropical gyres, which circulate clockwise in the northern hemisphere and anticlockwise in the southern hemisphere, and the Antarctic Circumpolar Current (Fig.~\ref{fig:ocean}).  The divergence of the surface flow marks upwelling, where water rises from below to replace surface water that is carried away.  The trade winds drive a transport of the surface water away from the equator on both sides, so that the surface flow diverges along the equator, in the cold tongues of the eastern Pacific and Atlantic, and along the coasts of Peru and north-west Africa, where the winds blow along the coast.

\begin{figure}[tp]
  \centering
  \begin{subfigure}[t]{0.48\textwidth}
    \centering
    \includegraphics[width=\linewidth]{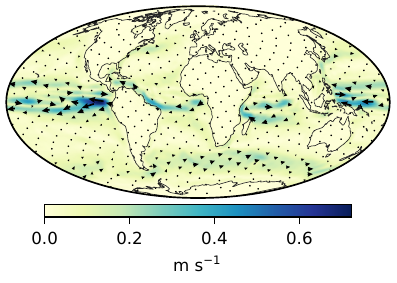}
    \caption{Current speed $\lvert\sw{1}{f}\rvert$, in m\,s$^{-1}$, with the currents drawn as arrows.}
  \end{subfigure}\hfill
  \begin{subfigure}[t]{0.48\textwidth}
    \centering
    \includegraphics[width=\linewidth]{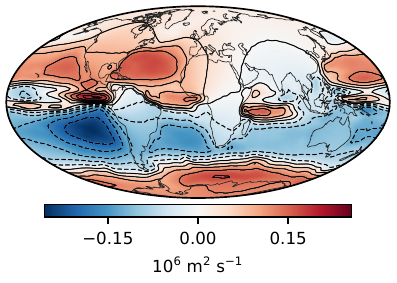}
    \caption{Stream function $\psi$ of the divergence-free part, in $10^6$ m$^2$ s$^{-1}$, with its contours, the streamlines.}
  \end{subfigure}\\[1ex]
  \begin{subfigure}[t]{0.48\textwidth}
    \centering
    \includegraphics[width=\linewidth]{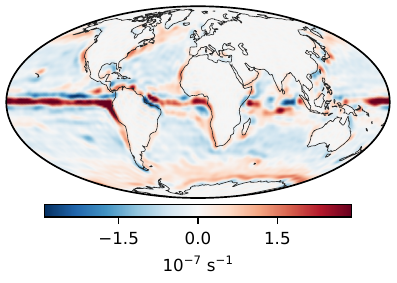}
    \caption{Divergence $-\mathrm{Re}(\bar\eth\,\sw{1}{f})/R_\oplus$, in $10^{-7}$ s$^{-1}$.}
  \end{subfigure}\hfill
  \begin{subfigure}[t]{0.48\textwidth}
    \centering
    \includegraphics[width=\linewidth]{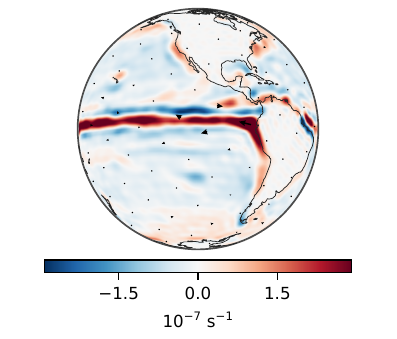}
    \caption{Eastern equatorial Pacific: the divergence, with the currents drawn as arrows.}
  \end{subfigure}
  \caption{The surface currents of the ocean in 2025, their stream function and their divergence, from the lifted convolution $\Cop{1}{0}$.  The divergence is shown on a scale saturated at the 99.5th percentile of its modulus, red where the surface flow diverges and water upwells and blue where it converges, and the orthographic view is centred on $5^\circ$S, $110^\circ$W.  The fastest currents are the westward equatorial currents; the subtropical gyres appear as maxima of $\psi$ in the northern hemisphere and minima in the southern hemisphere; and the divergence is concentrated along the equator in the Pacific and Atlantic and along the coast of Peru.}
  \label{fig:ocean}
\end{figure}

\subsection{Polarisation of the cosmic microwave background}
\label{sec:illustrations:cmb}

The linear polarisation of the cosmic microwave background, described by the Stokes parameters $Q$ and $U$, is the spin-$2$ field $Q + \ii U$.  Its physical content is carried by two scalar fields, the $E$ and $B$ modes, which do not depend on the frame and which separate the contributions of different physical processes: density perturbations in the early Universe generate only $E$ modes, while $B$ modes are generated by the gravitational lensing of $E$ modes and would be generated by primordial gravitational waves.  By Proposition~\ref{prop:physics}(b), $a_E = -\frac12(a_2 + a_{-2})$ and $a_B = \frac{\ii}{2}(a_2 - a_{-2})$, where $a_{\pm2}$ are the coefficients of $Q \pm \ii U$, so that each mode is a sum of the lifted convolutions $\Cop{2}{0}$ and $\Cop{-2}{0}$.  Conversely, the polarisation follows from the two modes by the lifted convolutions $\Cop{0}{\pm2}$, as $a_{\pm2} = -(a_E \pm \ii a_B)$, so that the decomposition may be inverted exactly.

We consider the Planck 2018 SMICA maps of the temperature and polarisation of the cosmic microwave background \cite{planck2020smica}, in Galactic coordinates, so that the plane of the Galaxy lies along the equator of the maps.  We decompose the polarisation into its $E$ and $B$ modes and synthesise the polarisation again from the two modes (Fig.~\ref{fig:cmb}).  On the large scales considered here, the polarisation measured by Planck combines the signal with noise and residual Galactic emission, most evident along the Galactic plane, which an analysis would mask; the $B$ mode, dominated on these scales by noise and by these residuals, has a root mean square of 79 per cent of that of the $E$ mode.  The decomposition and its inverse are single multipliers in harmonic space, $-\frac12$ and $\pm\frac{\ii}{2}$ one way and $-1$ and $\mp\ii$ the other, so that the polarisation synthesised from the two modes reproduces the measured polarisation to the precision of the arithmetic.  Starting from the map of the polarisation, computing the maps of the $E$ and $B$ modes and synthesising the map of the polarisation again, including all the harmonic transforms, recovers the map to a relative error of $1.3\times10^{-13}$, set by the machine precision of the transforms.

\begin{figure}[tp]
  \centering
  \begin{subfigure}[t]{0.32\textwidth}
    \centering
    \includegraphics[width=\linewidth]{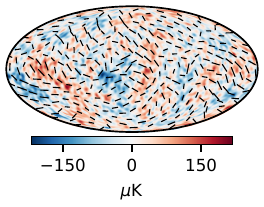}
    \caption{Temperature anisotropy, in $\mu$K, with the polarisation drawn as sticks.}
  \end{subfigure}\hspace{0.04\textwidth}
  \begin{subfigure}[t]{0.32\textwidth}
    \centering
    \includegraphics[width=\linewidth]{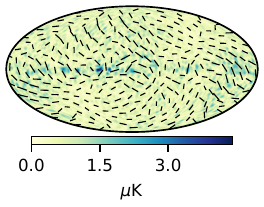}
    \caption{Polarised intensity $\lvert Q + \ii U\rvert$, in $\mu$K, with the same sticks.}
  \end{subfigure}\\[1ex]
  \begin{subfigure}[t]{0.32\textwidth}
    \centering
    \includegraphics[width=\linewidth]{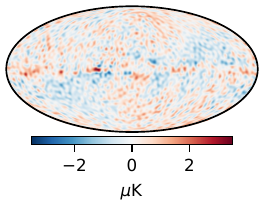}
    \caption{$E$ mode, in $\mu$K.}
  \end{subfigure}\hfill
  \begin{subfigure}[t]{0.32\textwidth}
    \centering
    \includegraphics[width=\linewidth]{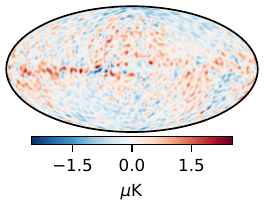}
    \caption{$B$ mode, in $\mu$K.}
  \end{subfigure}\hfill
  \begin{subfigure}[t]{0.32\textwidth}
    \centering
    \includegraphics[width=\linewidth]{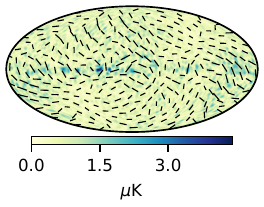}
    \caption{Polarised intensity synthesised from the $E$ and $B$ modes, with its polarisation drawn as sticks.}
  \end{subfigure}
  \caption{The temperature and polarisation of the cosmic microwave background measured by Planck (top), and the $E$ and $B$ modes of the polarisation, from the lifted convolutions $\Cop{2}{0} + \Cop{-2}{0}$, with the polarisation synthesised again from the two modes by $\Cop{0}{2}$ (bottom), in Galactic coordinates.  The sticks are drawn along the direction of polarisation, at the angle $\arg(Q + \ii U)/2$, after smoothing by the spin-$2$ convolution $\Cop{2}{2}$ with a Gaussian of width $4/L$ radians, so that they follow its large-scale pattern.  Both modes show residual Galactic emission along the equator of the maps.}
  \label{fig:cmb}
\end{figure}

\subsection{Weak gravitational lensing: shear and convergence}
\label{sec:illustrations:lensing}

The images of distant galaxies are distorted by the gravitational deflection of their light by the matter along the line of sight.  The convergence $\kappa$, the projected density of matter weighted by the geometry of the lensing, is a scalar field, and the shear $\gamma$, the anisotropic stretching of the images, is a spin-$2$ field.  The two are related by the spherical Kaiser--Squires relation \cite{kaiser1993mapping,wallis2022mapping}, which is the lifted convolution $\gamma = \Cop{0}{2}[\mathcal{D}]\kappa$ with $\mathcal{D}_\ell = -\sqrt{(\ell+2)(\ell-1)/(\ell(\ell+1))}$, Proposition~\ref{prop:physics}(c), and whose inverse $\kappa = \Cop{2}{0}[\mathcal{D}^{-1}]\gamma$ recovers the convergence, apart from its monopole and dipole, from the shear, on the full sphere and without the planar approximation of the original method.  Since the galaxies are stretched tangentially around concentrations of mass, the sticks of the shear circle the overdense regions and point radially away from the underdense regions.

We simulate the convergence as a shifted lognormal field \cite{hilbert2011lognormal}, which captures the skewness of the distribution of matter, with a realistic power spectrum: the Limber integral of the non-linear matter power spectrum computed by \texttt{CAMB} \cite{lewis2000camb} with \texttt{HMcode} \cite{mead2021hmcode} for the Planck 2018 cosmology, for a Euclid-like distribution of source galaxies with median redshift 0.9.  We compute the shear from the convergence and recover the convergence from the shear (Fig.~\ref{fig:lensing}); since the simulated convergence has no monopole or dipole, the recovery is exact, to a relative error of $10^{-16}$.

\begin{figure}[tp]
  \centering
  \begin{subfigure}[t]{0.48\textwidth}
    \centering
    \includegraphics[width=\linewidth]{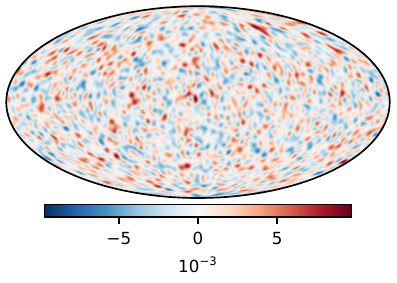}
    \caption{Simulated convergence $\kappa$.}
    \label{fig:lensing:a}
  \end{subfigure}\hfill
  \begin{subfigure}[t]{0.48\textwidth}
    \centering
    \includegraphics[width=\linewidth]{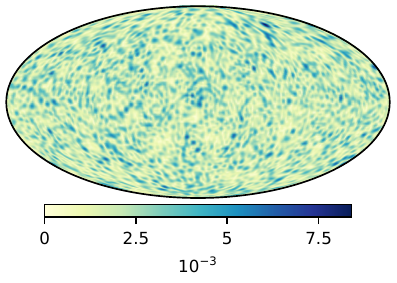}
    \caption{Modulus of the shear $\gamma = \Cop{0}{2}[\mathcal{D}]\kappa$.}
  \end{subfigure}\\[1ex]
  \begin{subfigure}[t]{0.48\textwidth}
    \centering
    \includegraphics[width=\linewidth]{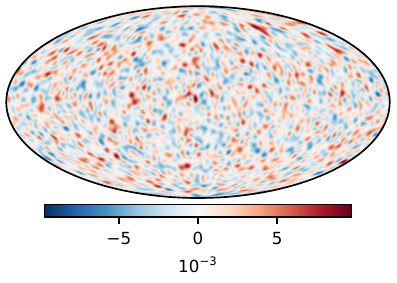}
    \caption{Convergence recovered from the shear by $\Cop{2}{0}[\mathcal{D}^{-1}]$, on the scale of (\subref{fig:lensing:a}).}
  \end{subfigure}\hfill
  \begin{subfigure}[t]{0.48\textwidth}
    \centering
    \includegraphics[width=\linewidth]{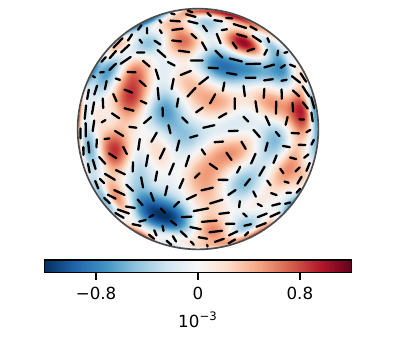}
    \caption{Smoothed convergence, with the smoothed shear drawn as sticks.}
  \end{subfigure}
  \caption{The convergence and shear of weak gravitational lensing, related by the lifted convolutions $\Cop{0}{2}$ and $\Cop{2}{0}$.  The convergence is red for overdense and blue for underdense regions.  The orthographic view (d), centred on $20^\circ$N, $34^\circ$E in the coordinates of the simulation, shows the convergence smoothed by a Gaussian of width $8/L$ radians, with the shear smoothed in the same way drawn as headless sticks along the direction in which the images are stretched, at the angle $\arg(\gamma)/2$; the smoothing is a multiplier and commutes with $\Cop{0}{2}[\mathcal{D}]$, so that the smoothed shear is the shear of the smoothed convergence.  The sticks circle the overdense regions and point away from the underdense regions.}
  \label{fig:lensing}
\end{figure}

In summary, in each example we have applied lifted convolutions $\Cop{s}{s'}$ whose input and output spins are those of the physical quantities concerned.  Derivatives raise or lower the spin, as do their compositions, such as the formula of Vening Meinesz and the Kaiser--Squires relation; smoothing and the continuation of potential fields preserve it; the decomposition of polarisation into $E$ and $B$ modes lowers it; and the steerable ridge detector maps a scalar field to the tuple of spin fields that makes up a function on the rotation group.  Each operator is a single multiplier of the harmonic coefficients, computed exactly and efficiently for band-limited fields, invertible where the physics allows, as the round trips of the geoid, the polarisation and the convergence confirm to the precision of the arithmetic, and rotation-equivariant, so that its result does not depend on the choice of coordinates.  The comparison with the smoothing of the components of the wind as scalar fields shows the error that the spin formalism removes.

\section{Conclusions}
\label{sec:discussion}

We have developed a unified description of convolution on the sphere by lifting to the rotation group.  A spin-$s$ field is lifted to $\SOthree$ through the irreducible representation of spin $s$ of the stabiliser $\SOtwo$, convolved there with an arbitrary kernel, and projected back to the sphere through the adjoint of the lift for a possibly different spin $s'$.  In spin spherical harmonic space the resulting operator multiplies each coefficient by a single conjugated Wigner coefficient of the kernel, up to a sign.  Every rotation-equivariant spherical convolution considered is recovered by constraining the kernel, choosing the input and output spins, and including or omitting the lift and the projection.  The constants of the literature follow from the different ways in which a kernel on the sphere is identified with a kernel on the rotation group.

The construction also settles what an equivariant spherical convolution can be.  Every bounded rotation-equivariant map between spin fields is a multiplier of this form, and the equivariant maps to and from the rotation group are read from a single column or row of the kernel.  A function on the rotation group azimuthally band-limited at $N$ is a tuple of $2N-1$ spin fields, so that a directional convolution with a steerable kernel is a finite tuple of spin-changing convolutions, computed in $\mathcal{O}(NL^3)$ operations.  A key result of the article is that, for maps between fields of fixed input and output spin, directional kernels, output on the sphere and equivariance cannot be achieved simultaneously.  The anisotropic constructions of the literature that keep their output on the sphere are evaluations of a lifted convolution on a section of the bundle or multipliers that depend on the azimuthal order, and they are equivariant only in special cases.

The geometric description yields two further key insights.  The spin-weighted functions of physics, defined by the phase that they acquire under a rotation of the local frame, are the sections of the line bundles associated to the quotient $\SOthree/\SOtwo$, or equivalently Mackey functions on the rotation group.  Making this identification explicit links the physics and the bundle literatures.  The phase that accompanies the rotation of a spin field is thereby expressed through the geometry of the bundle: it is the character of the stabiliser evaluated on the inverse twist of the section under the rotation, so that the rotation of a spin field is precisely the representation induced from that character.

All of the convolutions may be computed exactly and efficiently by spin spherical harmonic and Wigner transforms, at a cost that follows from the part of the kernel that each operator reads.  We have implemented every operator of the article in the open-source \texttt{s2conv} code, whose numerical tests confirm the reductions and the equivariance of the operators to machine precision, and whose computation time scales as expected.  We have applied the operators to the relief, gravity, magnetic field, winds and currents of the Earth and to the polarisation of the cosmic microwave background and weak gravitational lensing.

In addition to unifying existing spherical convolutions, the framework also suggests new ones.  In a forthcoming article \cite{mulder2026spin} we build on the operators presented to construct artificial intelligence (AI) architectures based on spin spherical Fourier neural operators, in which the multipliers are learned and in which a lift, a pointwise non-linearity on the rotation group and a projecting convolution yield equivariant layers for spin fields.

\begin{acknowledgments}
This work was funded under the embedded CSE programme of the ARCHER2 UK National Supercomputing Service (\url{https://www.archer2.ac.uk}), hosted by the University of Edinburgh and funded by EPSRC, through the GPU eCSE project \emph{Accelerated and differentiable spherical transforms} (GPU-eCSE01-41).
\end{acknowledgments}

\section*{Author declarations}
\subsection*{Conflict of interest}
The author has no conflicts to disclose.

\subsection*{Use of large language models}
Large language models (LLMs) were used in the writing of this manuscript.  The author takes full responsibility for the accuracy of the material presented.

\section*{Data availability}
No new observational data were generated for this article.  The measured fields of Sec.~\ref{sec:illustrations} are publicly available from the sources cited there, and the only data generated are the simulated convergence and shear of Sec.~\ref{sec:illustrations:lensing}.  The data of all of the examples, the simulation included, are distributed with the open-source \texttt{s2conv} code (\url{https://github.com/astro-informatics/s2conv}), together with the notebook that produces the figures and the benchmark scripts that produce the numerical results of Sec.~\ref{sec:computation:validation}.

\appendix

\section{Wigner functions and spin harmonics}
\label{app:conventions}

Every constant in the article depends on the choices fixed in Sec.~\ref{sec:prelim}.  This appendix collects the identities for Wigner functions and spin harmonics that the proofs use.

\paragraph{Wigner $d$-functions.}
The real functions $d^{\ell}_{mn}$ in \eqref{eq:wignerD} are those of Varshalovich \emph{et al.}~\cite{varshalovich1988quantum}, fixed by
\begin{equation}
  d^{\ell}_{\ell n}(\beta) = \sqrt{\frac{(2\ell)!}{(\ell+n)!\,(\ell-n)!}}\; \Bigl(-\sin\tfrac{\beta}{2}\Bigr)^{\ell-n} \Bigl(\cos\tfrac{\beta}{2}\Bigr)^{\ell+n},
  \qquad
  d^{\ell}_{m0}(\beta) = \sqrt{\frac{(\ell-m)!}{(\ell+m)!}}\; P^{m}_{\ell}(\cos\beta),
  \label{eq:app-d-normalisation}
\end{equation}
so that $d^{1}_{10}(\beta) = -\sin\beta/\sqrt{2}$ and $d^{1}_{11}(\beta) = \cos^2(\beta/2)$ \cite{price2024differentiable}.  They satisfy the symmetries
\begin{equation}
  d^{\ell}_{mn}(\beta) = (-1)^{m-n}\, d^{\ell}_{nm}(\beta) = d^{\ell}_{-n,-m}(\beta) = (-1)^{m-n}\, d^{\ell}_{mn}(-\beta) = (-1)^{\ell+m}\, d^{\ell}_{m,-n}(\pi-\beta),
  \label{eq:app-d-symmetries}
\end{equation}
for which the $\ell=1$ entries above provide simple sign checks, and take the values
\begin{equation}
  d^{\ell}_{mn}(0) = \delta_{mn},
  \qquad
  d^{\ell}_{mn}(\pi) = (-1)^{\ell-n}\, \delta_{m,-n} = (-1)^{\ell+m}\, \delta_{m,-n}
  \label{eq:app-d-poles}
\end{equation}
at the ends of the interval (for $\ell = 1$: $d^{1}_{00}(\pi) = \cos\pi = -1$ and $d^{1}_{1,-1}(\pi) = \sin^2(\pi/2) = 1$).  Inserting \eqref{eq:app-d-symmetries} in \eqref{eq:wignerD} gives the conjugation symmetry of the $D$-functions and, from the homomorphism property with $D^{\ell}_{mn}(R_z(\chi)) = e^{-\ii m\chi}\delta_{mn}$, their behaviour under the stabiliser,
\begin{equation}
  D^{\ell *}_{mn}(\rho) = (-1)^{m-n}\, D^{\ell}_{-m,-n}(\rho),
  \qquad
  D^{\ell}_{mn}\bigl(R_z(\chi_1)\,\rho\, R_z(\chi_2)\bigr) = e^{-\ii m\chi_1}\, D^{\ell}_{mn}(\rho)\, e^{-\ii n\chi_2},
  \label{eq:app-D-conj}
\end{equation}
the second of which is the identity behind Corollary~\ref{cor:constraint}.  At the poles, \eqref{eq:app-d-poles} gives $D^{\ell}_{mn}(\alpha,0,\gamma) = \delta_{mn}\,e^{-\ii m(\alpha+\gamma)}$ and $D^{\ell}_{mn}(\alpha,\pi,\gamma) = (-1)^{\ell-n}\,\delta_{m,-n}\,e^{-\ii m(\alpha-\gamma)}$.

\paragraph{Spin raising and lowering.}
Acting on a spin-$s$ field, the operators in \eqref{eq:eth} are \cite{newman1966note,goldberg1967spin,price2024differentiable}
\begin{equation}
  \eth = -\sin^{s}\theta\, \Bigl( \partial_\theta + \frac{\ii}{\sin\theta}\, \partial_\varphi \Bigr) \sin^{-s}\theta,
  \qquad
  \bar\eth = -\sin^{-s}\theta\, \Bigl( \partial_\theta - \frac{\ii}{\sin\theta}\, \partial_\varphi \Bigr) \sin^{s}\theta,
  \label{eq:app-eth}
\end{equation}
and the spin harmonics are generated from \eqref{eq:Ylm} by
\begin{equation}
  \sw{s}{Y}_{\ell m} =
  \begin{cases}
    \sqrt{\dfrac{(\ell-s)!}{(\ell+s)!}}\; \eth^{s}\, Y_{\ell m}, & 0 \le s \le \ell, \\[2.5ex]
    (-1)^{s} \sqrt{\dfrac{(\ell+s)!}{(\ell-s)!}}\; \bar\eth^{-s}\, Y_{\ell m}, & -\ell \le s \le 0,
  \end{cases}
  \label{eq:app-sY-closed}
\end{equation}
the sign $(-1)^s$ in the second line being what fixes the phase of the negative-spin harmonics.  Setting $\gamma = 0$ in \eqref{eq:sY-D} gives the explicit form
\begin{equation}
  \sw{s}{Y}_{\ell m}(\theta,\varphi) = (-1)^{s} \sqrt{\frac{2\ell+1}{4\pi}}\; d^{\ell}_{m,-s}(\theta)\; e^{\ii m\varphi},
  \label{eq:app-sY-d}
\end{equation}
which reduces to \eqref{eq:Ylm} for $s = 0$ by \eqref{eq:app-d-normalisation} and shows directly that $\sw{s}{Y}_{\ell m}$ exists only for $\ell \ge \lvert s\rvert$.

\paragraph{Values at the poles.}
By \eqref{eq:app-sY-d} and \eqref{eq:app-d-poles},
\begin{equation}
  \sw{s}{Y}_{\ell m}(0,\varphi) = (-1)^{s} \sqrt{\frac{2\ell+1}{4\pi}}\; \delta_{m,-s}\; e^{-\ii s\varphi},
  \qquad
  \sw{s}{Y}_{\ell m}(\pi,\varphi) = (-1)^{\ell} \sqrt{\frac{2\ell+1}{4\pi}}\; \delta_{m,s}\; e^{\ii s\varphi} :
  \label{eq:app-sY-poles}
\end{equation}
at the north pole only the order $m = -s$ survives and at the south pole only $m = s$, with the windings $e^{\mp\ii s\varphi}$ of Remark~\ref{rem:no-global-section}.  For the constrained kernel of Corollary~\ref{cor:constraint} the same values give $d^{\ell}_{-s',-s}(0) = \delta_{ss'}$ and $d^{\ell}_{-s',-s}(\pi) = (-1)^{\ell+s}\,\delta_{s',-s}$, the pole conditions of the constrained kernel.

\paragraph{Conjugation.}
From \eqref{eq:app-sY-d} and the first two symmetries in \eqref{eq:app-d-symmetries},
\begin{equation}
  \sw{s}{Y}^{*}_{\ell m} = (-1)^{s+m}\; \sw{-s}{Y}_{\ell,-m},
  \label{eq:app-sY-conj}
\end{equation}
which extends $Y^{*}_{\ell m} = (-1)^m Y_{\ell,-m}$.  Consequently the complex conjugate of a spin-$s$ field $\sw{s}{f}$ is the spin-$(-s)$ field with coefficients $(\sw{s}{f}^{*})_{\ell m} = (-1)^{s+m}\, \sw{s}{f}^{*}_{\ell,-m}$, the relation between the two members of a real field of spin $s$ (Sec.~\ref{sec:lifting:fields}).

\paragraph{Addition theorem.}
For $x, y \in \sphere$ and any spins $s, s'$,
\begin{equation}
  \sum_{m=-\ell}^{\ell} \sw{s'}{Y}_{\ell m}(x)\; \sw{s}{Y}^{*}_{\ell m}(y)
  = (-1)^{s+s'}\, \frac{2\ell+1}{4\pi}\; D^{\ell}_{-s',-s}\bigl(s(x)^{-1} s(y)\bigr),
  \label{eq:app-addition}
\end{equation}
with $s(\cdot)$ the section \eqref{eq:section}.  The proof in Appendix~\ref{app:proofs} inserts \eqref{eq:sY-D} for both harmonics and uses $D^\ell(s(x))^\dagger D^\ell(s(y))=D^\ell(s(x)^{-1}s(y))$.  For noncoincident, nonantipodal points, write the relative rotation as $R_z(a)R_y(b)R_z(c)$.  Then $b$ is the separation, and the bearings measured from $e_\theta$ towards $e_\varphi$ are $a$ at $x$ towards $y$ and $\pi-c$ at $y$ towards $x$.  Using $(-1)^{s+s'}d^\ell_{-s',-s}=d^\ell_{s's}$, the right-hand side of \eqref{eq:app-addition} becomes $(2\ell+1)e^{\ii s'a}d^\ell_{s's}(b)e^{\ii sc}/(4\pi)$.  The literal quadrant-resolved angles of Tessore \etal\ \cite{tessore2025euclid} are instead $\alpha=a+\pi$ and $\alpha'=-c$, the distinction described in Proposition~\ref{prop:spin-change}.  For $s=s'=0$ this is the Legendre addition theorem, and for equal spins and $y=x$ it gives $\sum_m\lvert\sw{s}{Y}_{\ell m}(x)\rvert^2=(2\ell+1)/(4\pi)$.

\section{Proofs}
\label{app:proofs}

We collect here the proofs deferred from the main text, in the order in which the results appear.  Throughout, fields on the sphere are square integrable, kernels are square integrable functions on $\SOthree$ unless they are explicitly said to be distributions, and identities between functions on $\SOthree$ or the sphere hold almost everywhere; since $\Psi \conv F$ is continuous for $\Psi, F \in \Lrot$, by the strong continuity of the regular representation, its evaluation on a section is nevertheless well defined.  We use freely that left translation acts on the first Wigner index, $\big(F(\rho^{-1}\,\cdot)\big)^{\ell}_{mn} = \sum_{m'} D^{\ell}_{mm'}(\rho)\,F^{\ell}_{m'n}$, and right translation on the second, $\big(F(\cdot\,\rho)\big)^{\ell}_{mn} = \sum_{n'} F^{\ell}_{mn'}\,D^{\ell}_{n'n}(\rho^{-1})$, both of which follow from \eqref{eq:wigner-transform}, the change of variables $g \mapsto \rho g$ or $g \mapsto g\rho^{-1}$ and the representation property $D^{\ell}(\rho g) = D^{\ell}(\rho)D^{\ell}(g)$.

\subsection{The twist}
\label{app:proofs:twist}

\begin{proof}[Proof of Lemma~\ref{lem:twist-composition}]
Inserting $s(g_2x)\,s(g_2x)^{-1}$ into the definition \eqref{eq:twist} and regrouping,
\begin{align}
  \twist(x,g_1g_2) &= s(g_1g_2x)^{-1}\,g_1g_2\,s(x) \notag\\
  &= \big[s(g_1g_2x)^{-1}\,g_1\,s(g_2x)\big]\big[s(g_2x)^{-1}\,g_2\,s(x)\big]
  = \twist(g_2x,g_1)\,\twist(x,g_2) ,
\end{align}
where the first bracket is the twist of $g_1$ at the point $g_2x$ and the second the twist of $g_2$ at $x$.  On the sphere, for $0<\beta<\pi$, the rotation $R(\alpha,\beta,\gamma)$ sends the North pole to the point $(\beta,\alpha)$, so that $s(R(\alpha,\beta,\gamma)\eta) = R(\alpha,\beta,0)$ and
\begin{equation}
  \twist(R(\alpha,\beta,\gamma)) = R(\alpha,\beta,0)^{-1}R(\alpha,\beta,\gamma) = R_y(\beta)^{-1}R_z(\alpha)^{-1}R_z(\alpha)R_y(\beta)R_z(\gamma) = R_z(\gamma) ,
\end{equation}
as claimed.  At $\beta=0$, the choice $s(\eta)=e$ instead gives $\twist(R(\alpha,0,\gamma))=R_z(\alpha+\gamma)$; at $\beta=\pi$, the choice $s(-\eta)=R_y(\pi)$ gives $R_z(\gamma-\alpha)$.  These endpoint choices do not affect any $L^2$ identity.
\end{proof}

\subsection{The lift, its adjoint and the induced representation}
\label{app:proofs:lift}

\begin{proof}[Proof of Lemma~\ref{lem:lift}]
(a) By Fubini's theorem, $\lVert\lift{s}f\rVert^2 = \frac{1}{2\pi}\int_0^{2\pi}\dd\gamma\int_{\sphere}\lvert f\rvert^2\dmu = \lVert f\rVert^2$, so that $\lift{s}$ is an isometry, and
\begin{align}
  \langle\lift{s}f, F\rangle &= \int_{\sphere}\dmu(\beta,\alpha)\, f(\beta,\alpha)\,\Big[\frac{1}{\sqrt{2\pi}}\int_0^{2\pi}e^{\ii s\gamma}F(\alpha,\beta,\gamma)\dd\gamma\Big]^{*} \notag\\
  &= \langle f, \proj{s}F\rangle
\end{align}
identifies \eqref{eq:lift-proj} as the adjoint.  Next, $\proj{s}\lift{s}f = f$ since $\frac{1}{2\pi}\int_0^{2\pi}e^{\ii s\gamma}e^{-\ii s\gamma}\dd\gamma = 1$.  Since $R(\alpha,\beta,\gamma)R_z(\gamma') = R(\alpha,\beta,\gamma+\gamma')$, the lift $\lift{s}f$ satisfies \eqref{eq:Ms}; conversely, $F \in \Mackey{s}$ has a fibrewise equivariant representative with $F(\alpha,\beta,\gamma) = e^{-\ii s\gamma}F(\alpha,\beta,0)$ for almost every $(\alpha,\beta,\gamma)$, so that $F = \lift{s}(\sqrt{2\pi}\,F\circ s)$ with $\sqrt{2\pi}\,F\circ s \in L^2(\sphere)$ by Fubini's theorem, and the range of $\lift{s}$ is all of $\Mackey{s}$.  Finally,
\begin{equation}
  (\lift{s}\proj{s}F)(\alpha,\beta,\gamma) = e^{-\ii s\gamma}\,\frac{1}{2\pi}\int_0^{2\pi}e^{\ii s\gamma'}F(\alpha,\beta,\gamma')\dd\gamma'
\end{equation}
is the term in $e^{-\ii s\gamma}$ of the Fourier series of $F(\alpha,\beta,\cdot)$ in $\gamma$, which for almost every $(\alpha,\beta)$ is a square integrable function on the circle.  Hence $\lift{s}\proj{s}$ is the orthogonal projector onto $\Mackey{s}$, the subspaces $\Mackey{s}$ are mutually orthogonal, and $\sum_s\lift{s}\proj{s}F = F$ in $\Lrot$ by Plancherel's theorem on the circle and Fubini's theorem, which is the decomposition $\Lrot = \widehat{\bigoplus}_s\Mackey{s}$.

(b) Using the fibrewise equivariant representative of $F \in \Mackey{s}$,
\begin{align*}
  (\proj{s}F)(\beta,\alpha)
  &= \frac{1}{\sqrt{2\pi}}\int_0^{2\pi}e^{\ii s\gamma}e^{-\ii s\gamma}F(\alpha,\beta,0)\dd\gamma \\
  &= \sqrt{2\pi}\,F(\alpha,\beta,0).
\end{align*}

(d) By \eqref{eq:sY-D}, $\lift{s}\sw{s}{Y}_{\ell m} = (2\pi)^{-1/2}e^{-\ii s\gamma}\sw{s}{Y}_{\ell m}(\beta,\alpha) = (-1)^s\sqrt{(2\ell+1)/8\pi^2}\;D^{\ell *}_{m,-s}$, which has unit norm by the orthogonality relation of the Wigner $D$-functions.

(c) By (d) and the orthogonality relation, the Wigner coefficients of $\lift{s}\sw{s}{Y}_{\ell m}$ are $\int\lift{s}\sw{s}{Y}_{\ell m}\,D^{\ell'}_{m'n'}\dmu = (-1)^s\sqrt{8\pi^2/(2\ell+1)}\,\delta_{\ell\ell'}\delta_{mm'}\delta_{n',-s}$, and the first relation of \eqref{eq:lift-harmonic} follows by linearity and the continuity of $\lift{s}$.  For the projection, $(\proj{s}F)_{\ell m} = \langle\proj{s}F,\sw{s}{Y}_{\ell m}\rangle = \langle F,\lift{s}\sw{s}{Y}_{\ell m}\rangle = (-1)^s\sqrt{(2\ell+1)/8\pi^2}\int F\,D^{\ell}_{m,-s}\dmu$, which is the second relation.

(e) This is Proposition~\ref{prop:intertwining}, proved next.
\end{proof}

\begin{proof}[Proof of Proposition~\ref{prop:intertwining}]
Since $(\piG(g)F)(g'h) = F(g^{-1}g'h) = \rho^{(s)}(h^{-1})(\piG(g)F)(g')$ for $F \in \Mackey{s}$, the left regular representation preserves $\Mackey{s}$, and $\pis{s}(g) = \proj{s}\piG(g)\lift{s}$ is therefore a representation: by Lemma~\ref{lem:lift}(a), $\pis{s}(g)\pis{s}(g') = \proj{s}\piG(g)\,\lift{s}\proj{s}\,\piG(g')\lift{s} = \proj{s}\piG(g)\piG(g')\lift{s} = \pis{s}(gg')$, since $\lift{s}\proj{s}$ is the identity on the range of $\piG(g')\lift{s}$, which lies in $\Mackey{s}$.  It is unitary since $\pis{s}(g)^{\dagger} = \proj{s}\piG(g)^{\dagger}\lift{s} = \proj{s}\piG(g^{-1})\lift{s} = \pis{s}(g^{-1})$, and it intertwines, $\lift{s}\pis{s}(g) = \lift{s}\proj{s}\piG(g)\lift{s} = \piG(g)\lift{s}$, again because $\piG(g)\lift{s}f \in \Mackey{s}$; taking adjoints gives $\pis{s}(g)\proj{s} = \proj{s}\piG(g)$.

For the explicit form, write $(\lift{s}f)(g) = (2\pi)^{-1/2}\rho^{(s)}(\twist(g)^{-1})f(gH)$ with $\twist(g) = s(gH)^{-1}g$, which is Definition~\ref{def:lift} by Lemma~\ref{lem:twist-composition}.  Since $\piG(g)\lift{s}f \in \Mackey{s}$, Lemma~\ref{lem:lift}(b) gives
\begin{equation}
  (\pis{s}(g)f)(x) = \sqrt{2\pi}\,(\lift{s}f)(g^{-1}s(x)) = \rho^{(s)}\big(\twist(g^{-1}s(x))^{-1}\big)\,f(g^{-1}x),
\end{equation}
and $\twist(g^{-1}s(x)) = s(g^{-1}x)^{-1}g^{-1}s(x) = \twist(x,g^{-1})$ by \eqref{eq:twist}, which is \eqref{eq:induced-explicit}.  The same conclusion follows from Lemma~\ref{lem:twist-composition} with $x = x_0$, $g_1 = g^{-1}$ and $g_2 = g'$, which splits $\twist(g^{-1}g')$ into $\twist(g'H,g^{-1})\,\twist(g')$.

On the sphere, away from the coordinate singularities, with $g = R_\rho$ and $x = \omega$, the rotation $R_\rho^{-1}R_\omega$ sends the North pole to $R_\rho^{-1}\omega$, so that its Euler angles are $(\alpha',\beta',\vartheta)$ with $(\beta',\alpha') = R_\rho^{-1}\omega$, \ie\ $R_\rho^{-1}R_\omega = R(\alpha',\beta',0)R_z(\vartheta) = s(R_\rho^{-1}\omega)R_z(\vartheta)$.  Hence $\twist(\omega,R_\rho^{-1}) = s(R_\rho^{-1}\omega)^{-1}R_\rho^{-1}s(\omega) = R_z(\vartheta)$, and \eqref{eq:induced-explicit} reads $(\pis{s}(R_\rho)f)(\omega) = e^{-\ii s\vartheta}f(R_\rho^{-1}\omega)$, which is \eqref{eq:spin-rotation}.

Finally, \eqref{eq:rotation-coeffs} follows by lifting: by Lemma~\ref{lem:lift}(c) and the action of left translation on the first Wigner index,
\begin{align}
  (\pis{s}(\rho)f)_{\ell m} &= (-1)^s\sqrt{\tfrac{2\ell+1}{8\pi^2}}\,\big(\piG(\rho)\lift{s}f\big)^{\ell}_{m,-s}
  = (-1)^s\sqrt{\tfrac{2\ell+1}{8\pi^2}}\sum_{m'}D^{\ell}_{mm'}(\rho)\,(\lift{s}f)^{\ell}_{m',-s} \notag\\
  &= \sum_{m'}D^{\ell}_{mm'}(\rho)\,\sw{s}{f}_{\ell m'} ,
\end{align}
the constants cancelling, which holds with the same matrices $D^{\ell}(\rho)$ for every spin.
\end{proof}

\begin{proof}[Proof of Corollary~\ref{cor:fibre-decomposition}]
The decomposition and its uniqueness are Lemma~\ref{lem:lift}(a).  By \eqref{eq:lift-harmonic}, $(\lift{s}\proj{s}F)^{\ell}_{mn} = (-1)^s\sqrt{8\pi^2/(2\ell+1)}\,(\proj{s}F)_{\ell m}\,\delta_{n,-s} = F^{\ell}_{m,-s}\,\delta_{n,-s}$, so that the component of spin $s$ carries exactly the Wigner column $n = -s$; hence $F^{\ell}_{mn} = 0$ for all $\lvert n\rvert \ge N$ if and only if the components with $\lvert s\rvert \ge N$ vanish.
\end{proof}

\subsection{Convolution on the rotation group}
\label{app:proofs:group}

\begin{proof}[Proof of \eqref{eq:group-conv-harmonic}, and of the equivariance, boundedness and adjoint of $\Psi\conv$]
Substituting $\rho = \rho' h^{-1}$, \ie\ $h = \rho^{-1}\rho'$, in the Wigner transform of \eqref{eq:group-conv}, which is permitted since the Haar measure is bi-invariant, and expanding $D^{\ell}_{mn}(\rho'h^{-1}) = \sum_k D^{\ell}_{mk}(\rho')\,D^{\ell *}_{nk}(h)$ by the representation property and unitarity,
\begin{align}
  (\Psi\conv F)^{\ell}_{mn}
  &= \int_{\SOthree}\!\!\int_{\SOthree} \Psi^{*}(h)\,F(\rho')\,D^{\ell}_{mn}(\rho'h^{-1})\dmu(h)\dmu(\rho') \notag\\
  &= \sum_k \Big[\int_{\SOthree} F(\rho')D^{\ell}_{mk}(\rho')\dmu(\rho')\Big]\Big[\int_{\SOthree}\Psi^{*}(h)\,D^{\ell *}_{nk}(h)\dmu(h)\Big]
  = \sum_k F^{\ell}_{mk}\,\Psi^{\ell *}_{nk} ,
\end{align}
where the interchange of integrals is justified by Fubini's theorem since $\Psi, F \in \Lrot \subset L^1(\SOthree)$.  For equivariance,
\begin{align}
  (\Psi\conv\piG(g)F)(\rho) &= \int\Psi^{*}(\rho^{-1}\rho')\,F(g^{-1}\rho')\dmu(\rho') \notag\\
  &= \int\Psi^{*}(\rho^{-1}g\rho'')\,F(\rho'')\dmu(\rho'') = (\Psi\conv F)(g^{-1}\rho) .
\end{align}
For boundedness, in the notation $F^{\ell}$ for the $(2\ell+1)\times(2\ell+1)$ block of coefficients, \eqref{eq:group-conv-harmonic} reads $(\Psi\conv F)^{\ell} = F^{\ell}\Psi^{\ell\dagger}$, hence by Parseval's identity $\lVert\Psi\conv F\rVert^2 = \sum_\ell\frac{2\ell+1}{8\pi^2}\lVert F^{\ell}\Psi^{\ell\dagger}\rVert_{\mathrm{F}}^2 \le \sup_\ell\lVert\Psi^{\ell}\rVert^2_{\mathrm{op}}\,\lVert F\rVert^2$, and $\lVert\Psi^{\ell}\rVert^2_{\mathrm{op}} \le \lVert\Psi^{\ell}\rVert^2_{\mathrm{F}} \le \frac{8\pi^2}{2\ell+1}\lVert\Psi\rVert^2$, where $\lVert\cdot\rVert_{\mathrm{F}}$ denotes the Frobenius norm.  For the adjoint, by Fubini's theorem
\begin{equation}
  \langle\Psi\conv F, G\rangle = \int_{\SOthree} F(g')\,\Big[\int_{\SOthree}\Psi(g^{-1}g')\,G(g)\dmu(g)\Big]^{*}\dmu(g') ,
\end{equation}
and $\int\Psi(g^{-1}g')G(g)\dmu(g) = \int\widetilde\Psi^{*}(g'^{-1}g)\,G(g)\dmu(g) = (\widetilde\Psi\conv G)(g')$ with $\widetilde\Psi(k) = \Psi^{*}(k^{-1})$, so that $(\Psi\conv)^{\dagger} = \widetilde\Psi\conv$.  As a check, the Wigner coefficients of the flipped kernel are $\widetilde\Psi^{\ell}_{mn} = \int\Psi^{*}(g^{-1})D^{\ell}_{mn}(g)\dmu(g) = \int\Psi^{*}(g)\,D^{\ell *}_{nm}(g)\dmu(g) = \Psi^{\ell *}_{nm}$, \ie\ $\widetilde\Psi^{\ell} = \Psi^{\ell\dagger}$, and $(\widetilde\Psi\conv G)^{\ell} = G^{\ell}\Psi^{\ell}$ is indeed the adjoint of $F^{\ell} \mapsto F^{\ell}\Psi^{\ell\dagger}$ with respect to the Frobenius inner product.
\end{proof}

\subsection{Harmonic form, kernel constraint and composition}
\label{app:proofs:harmonic}

\begin{proof}[Proof of Theorem~\ref{thm:harmonic}]
By \eqref{eq:lift-harmonic}, $(\lift{s}\sw{s}{f})^{\ell}_{mm'} = (-1)^s\sqrt{8\pi^2/(2\ell+1)}\,\sw{s}{f}_{\ell m}\delta_{m',-s}$, so that the sum over $m'$ in \eqref{eq:group-conv-harmonic} collapses to the single term $m' = -s$, which gives \eqref{eq:harmonic-sG}.  Applying $\proj{s'}$ by \eqref{eq:lift-harmonic} reads the column $n = -s'$ and multiplies by $(-1)^{s'}\sqrt{(2\ell+1)/8\pi^2}$, so that the degree-dependent constants cancel and \eqref{eq:harmonic-ss} follows; \eqref{eq:harmonic-Gs} is $\proj{s'}$ applied to \eqref{eq:group-conv-harmonic}, and \eqref{eq:harmonic-GG} is \eqref{eq:group-conv-harmonic} itself.  Degrees $\ell < \lvert s\rvert$ are absent from $\Lsph{s}$ and degrees $\ell < \lvert s'\rvert$ from $\Lsph{s'}$, whence the restriction on $\ell$.  Boundedness follows from that of $\lift{s}$, $\proj{s'}$ and $\Psi\conv$.  For equivariance,
\begin{align}
  \Cop{s}{s'}[\Psi]\,\pis{s}(g) &= \proj{s'}\big(\Psi\conv\lift{s}\pis{s}(g)\,\cdot\big) = \proj{s'}\big(\Psi\conv\piG(g)\lift{s}\,\cdot\big) \notag\\
  &= \proj{s'}\piG(g)\big(\Psi\conv\lift{s}\,\cdot\big) = \pis{s'}(g)\,\Cop{s}{s'}[\Psi]
\end{align}
by Lemma~\ref{lem:lift}(e) and the equivariance of $\Psi\conv$; the other three cases drop the lift, the projection, or both.  The adjoint identities follow from $(\Psi\conv)^{\dagger} = \widetilde\Psi\conv$ and $(\lift{s})^{\dagger} = \proj{s}$: for instance $(\Cop{s}{G}[\Psi])^{\dagger} = (\Psi\conv\lift{s})^{\dagger} = \proj{s}\,\widetilde\Psi\conv = \Cop{G}{s}[\widetilde\Psi]$.
\end{proof}

\begin{proof}[Proof of Corollary~\ref{cor:constraint}]
By Lemma~\ref{lem:lift}(a) and (c), a function on $\SOthree$ lies in $\Mackey{s'}$ if and only if its Wigner coefficients vanish outside the column $n = -s'$.  By \eqref{eq:harmonic-sG}, the coefficients of $\Psi\conv\lift{s}f$ are $(-1)^s\sqrt{8\pi^2/(2\ell+1)}\,\sw{s}{f}_{\ell m}\,\Psi^{\ell *}_{n,-s}$, which vanish for all $n \ne -s'$ and all $f$ if and only if $\Psi^{\ell}_{n,-s} = 0$ for all $n \ne -s'$ and all $\ell \ge \lvert s\rvert$, as is seen by taking $f = \sw{s}{Y}_{\ell m}$; this is the equivalence of (i) and (ii).  For (iii), the Wigner coefficients of the right-isotypic part are
\begin{align}
  \big(\Psi^{(s)}\big)^{\ell}_{nm'} &= \frac{1}{2\pi}\int_0^{2\pi}e^{\ii s\gamma}\int_{\SOthree}\Psi(gR_z(\gamma))\,D^{\ell}_{nm'}(g)\dmu(g)\dd\gamma \notag\\
  &= \frac{1}{2\pi}\int_0^{2\pi}e^{\ii(s+m')\gamma}\dd\gamma\;\Psi^{\ell}_{nm'} = \delta_{m',-s}\,\Psi^{\ell}_{nm'} ,
\end{align}
by the substitution $g' = gR_z(\gamma)$ and $D^{\ell}_{nm'}(g'R_z(-\gamma)) = D^{\ell}_{nm'}(g')e^{\ii m'\gamma}$: the right-isotypic part is the column $m' = -s$ of $\Psi$, which is exactly the part read by \eqref{eq:harmonic-sG}.  Since $D^{\ell *}_{nm'}(R_z(\chi_1)gR_z(\chi_2)) = e^{\ii n\chi_1}\,D^{\ell *}_{nm'}(g)\,e^{\ii m'\chi_2}$, the Wigner expansion of $\Psi^{(s)}$ satisfies \eqref{eq:two-sided} if and only if its coefficients vanish unless $n = -s'$ and $m' = -s$, which is (ii).  In that case $\Psi\conv\lift{s}f \in \Mackey{s'}$ and Lemma~\ref{lem:lift}(b) gives $\Cop{s}{s'}[\Psi]f = \sqrt{2\pi}\,(\Psi\conv\lift{s}f)\circ s$; the Wigner expansion \eqref{eq:wigner-transform} restricted to the entry $(-s',-s)$, with $D^{\ell *}_{-s',-s}(\alpha,\beta,\gamma) = e^{-\ii s'\alpha}d^{\ell}_{-s',-s}(\beta)e^{-\ii s\gamma}$, is \eqref{eq:constrained-kernel}.  Finally, for any $\Psi=\lift{s}\psi$, \eqref{eq:lift-harmonic} gives $\Psi^{\ell}_{-s',-s} = (-1)^s\sqrt{8\pi^2/(2\ell+1)}\,\sw{s}{\psi}_{\ell,-s'}$, and substituting into \eqref{eq:harmonic-ss} yields \eqref{eq:harmonic-ss-lifted}.
\end{proof}

\begin{proof}[Proof of Proposition~\ref{prop:composition}]
In the block notation of the proof of \eqref{eq:group-conv-harmonic}, $(\Psi_2\conv(\Psi_1\conv F))^{\ell} = F^{\ell}\Psi_1^{\ell\dagger}\Psi_2^{\ell\dagger} = F^{\ell}(\Psi_2^{\ell}\Psi_1^{\ell})^{\dagger}$, which is \eqref{eq:composition-GG} provided a kernel $\Psi_3 \in \Lrot$ with blocks $\Psi_2^{\ell}\Psi_1^{\ell}$ exists; it does, namely $\Psi_3 = \widetilde\Psi_1\conv\Psi_2$, since $(\widetilde\Psi_1\conv\Psi_2)^{\ell} = \Psi_2^{\ell}\widetilde\Psi_1^{\ell\dagger} = \Psi_2^{\ell}\Psi_1^{\ell}$ by $\widetilde\Psi_1^{\ell} = \Psi_1^{\ell\dagger}$, and $\Psi_3 \in \Lrot$ by the boundedness of $\widetilde\Psi_1\conv$.  The cases with an outer lift or projection, or both, follow by composing with $\lift{s}$ on the right or $\proj{s''}$ on the left.  When the intermediate stage is a spin-$s'$ field, the operator $\lift{s'}\proj{s'}$ is inserted between the two convolutions; by Corollary~\ref{cor:fibre-decomposition} its action on Wigner coefficients is $F^{\ell} \mapsto F^{\ell}E_{-s'}$, with $E_{-s'}$ the diagonal matrix with a single unit entry at $-s'$, so that the composed coefficients are $F^{\ell}\Psi_1^{\ell\dagger}E_{-s'}\Psi_2^{\ell\dagger} = F^{\ell}(\Psi_2^{\ell}E_{-s'}\Psi_1^{\ell})^{\dagger}$, and the block $\Psi_2^{\ell}E_{-s'}\Psi_1^{\ell}$ has entries $(\Psi_2)^{\ell}_{n,-s'}(\Psi_1)^{\ell}_{-s',m'}$.  Reading the entry $(-s'',-s)$ gives the spin-to-spin case, with $c^{(3)}_\ell = (-1)^{s+s''}\big[(\Psi_2)^{\ell}_{-s'',-s'}(\Psi_1)^{\ell}_{-s',-s}\big]^{*} = c^{(2)}_\ell c^{(1)}_\ell$ by \eqref{eq:harmonic-ss}, and reading the column $m' = -s$ gives the remaining case.
\end{proof}

\subsection{Classification}
\label{app:proofs:classification}

\begin{proof}[Proof of Theorem~\ref{thm:classification}]
By Lemma~\ref{lem:lift}(a) and (d), $\lift{s}$ is a unitary equivalence between $\Lsph{s}$ and $\Mackey{s} = \widehat{\bigoplus}_{\ell\ge\lvert s\rvert}V^{(s)}_\ell$, where $V^{(s)}_\ell = \mathrm{span}\{D^{\ell *}_{m,-s} : \lvert m\rvert\le\ell\}$ is the image of the degree-$\ell$ spin harmonics, and by the action of left translation on the first Wigner index, $\piG(\rho)\,D^{\ell *}_{m,-s} = \sum_k D^{\ell}_{km}(\rho)\,D^{\ell *}_{k,-s}$, so that each $V^{(s)}_\ell$ is invariant and carries the irreducible representation of $\SOthree$ of degree $\ell$, in the same matrices for every $s$; this is \eqref{eq:rotation-coeffs}.  Thus $\Lsph{s}$ decomposes into irreducible representations without multiplicity, each degree $\ell \ge \lvert s\rvert$ occurring exactly once.  This is Frobenius reciprocity \cite{mackey1952induced,folland2016course} in the present case: the multiplicity of the degree-$\ell$ representation in $\Ind_{\SOtwo}^{\SOthree}\rho^{(s)}$ equals the multiplicity of $\rho^{(s)}$ in the restriction of the degree-$\ell$ representation to $\SOtwo$, which is one for $\lvert s\rvert \le \ell$ and zero otherwise since $D^{\ell}_{mn}(R_z(\gamma)) = e^{-\ii n\gamma}\delta_{mn}$ contains each character with $\lvert n\rvert \le \ell$ exactly once.

Let $T$ be bounded and equivariant.  Since irreducible representations of different degree are inequivalent, $T$ maps the degree-$\ell$ component of $\Lsph{s}$ into the degree-$\ell$ component of $\Lsph{s'}$, and annihilates it if $\ell < \lvert s'\rvert$.  For $\ell \ge \max(\lvert s\rvert,\lvert s'\rvert)$ the two components are equivalent irreducible representations, realised in the bases $\sw{s}{Y}_{\ell m}$ and $\sw{s'}{Y}_{\ell m}$ by the same matrices $D^{\ell}(\rho)$, so that by Schur's lemma the restriction of $T$ is a scalar $c_\ell$ times the identity in these bases, which is \eqref{eq:classification}; the sequence is unique, and $\lVert T\rVert = \sup_\ell\lvert c_\ell\rvert$ by Parseval's identity, so that $T$ is bounded if and only if the sequence is.  Conversely, a bounded sequence defines a bounded $T$ by \eqref{eq:classification}, which commutes with $\pis{s}$ by \eqref{eq:rotation-coeffs}.

(a) If $T = \Cop{s}{s'}[\Psi]$ with $\Psi \in \Lrot$, then $c_\ell = (-1)^{s+s'}\Psi^{\ell *}_{-s',-s}$ by \eqref{eq:harmonic-ss}, and $\sum_\ell\frac{2\ell+1}{8\pi^2}\lvert c_\ell\rvert^2 \le \lVert\Psi\rVert^2 < \infty$ by Parseval's identity.  Conversely, if $\sum_\ell(2\ell+1)\lvert c_\ell\rvert^2 < \infty$, the function $\Psi_T$ with Wigner coefficients $\Psi_T{}^{\ell}_{-s',-s} = (-1)^{s+s'}c^{*}_{\ell}$ and all others zero lies in $\Lrot$, with $\lVert\Psi_T\rVert^2 = \sum_\ell\frac{2\ell+1}{8\pi^2}\lvert c_\ell\rvert^2$, and $\Cop{s}{s'}[\Psi_T] = T$ by \eqref{eq:harmonic-ss}.  Since the Hilbert--Schmidt norm of $T$ is $\lVert T\rVert^2_{\mathrm{HS}} = \sum_{\ell m}\lvert c_\ell\rvert^2 = \sum_\ell(2\ell+1)\lvert c_\ell\rvert^2$, the condition is that $T$ be Hilbert--Schmidt.  Any $\Psi$ with $\Cop{s}{s'}[\Psi] = T$ has the same entries $\Psi^{\ell}_{-s',-s}$, all other entries being unread, and $\Psi_T$ is the unique such kernel whose entire Wigner block is supported on that entry, equivalently whose entire kernel satisfies the two-sided law of \eqref{eq:two-sided}.  Constraining only $\Psi^{(s)}$ would still permit arbitrary unread input columns and would not give uniqueness.

(b) For a bounded sequence, $\Psi_T$ is a distribution on $\SOthree$ with bounded Wigner coefficients, and $\Cop{s}{s'}[\Psi_T]$ is defined by \eqref{eq:harmonic-ss} on the dense subspace of band-limited fields, on which it coincides with $T$; since $\lvert c_\ell\rvert$ is bounded it extends by continuity to $\Lsph{s}$, and the extension is $T$.  This is the sense in which every bounded equivariant map is a lifted convolution with a distributional kernel.

(c) The left regular representation decomposes as $\Lrot = \widehat{\bigoplus}_\ell\bigoplus_{\lvert n\rvert\le\ell}V_\ell^{(n)}$, with $V^{(n)}_\ell = \mathrm{span}\{D^{\ell *}_{mn} : \lvert m\rvert\le\ell\}$ the $n$-th Wigner column, each column carrying the degree-$\ell$ representation in the same matrices; the degree-$\ell$ representation thus occurs with multiplicity $2\ell+1$, one copy per column.  Schur's lemma gives a vector of coefficients $u^\ell$ from a spin field into the Wigner columns, a vector $v^\ell$ from those columns into a spin field, and an arbitrary matrix $B^\ell$ acting on the multiplicity index for a group-to-group map, as in \eqref{eq:classification-blocks}.  Conversely these coefficient maps commute degree by degree with the action on the first index.  For the first case Parseval's identity gives
\begin{equation}
  \lVert T_{s\to G}f\rVert^2
  = \sum_{\ell\ge\lvert s\rvert} a_\ell\lVert u^\ell\rVert_2^2\sum_m\lvert\sw{s}{f}_{\ell m}\rvert^2 .
\end{equation}
For the second, Cauchy--Schwarz gives
\begin{equation}
  \lVert T_{G\to s'}F\rVert^2
  \le \sup_{\ell\ge\lvert s'\rvert}\frac{\lVert v^\ell\rVert_2^2}{a_\ell}
  \sum_\ell a_\ell\lVert F^\ell\rVert_{\mathrm{F}}^2 ,
\end{equation}
with equality on a single row of a single block proportional to $(v^\ell)^*$; taking degrees approaching the supremum gives the exact norm.  Similarly, $\lVert F^\ell B^\ell\rVert_{\mathrm{F}}\le\lVert F^\ell\rVert_{\mathrm{F}}\lVert B^\ell\rVert_{\mathrm{op}}$, with equality attained on a suitable row, gives the third norm in \eqref{eq:classification-norms}.  Thus those uniform bounds are necessary and sufficient for extension from finite harmonic sums to the Hilbert direct sums; entrywise bounds alone do not suffice.

Comparing with Theorem~\ref{thm:harmonic} gives \eqref{eq:classification-kernels}.  The norm bounds make the canonical kernel blocks uniformly bounded in operator norm, and hence define distributions.  Summing the squared images of orthonormal bases, with the factor $a_\ell^{-1/2}$ for a unit-norm Wigner coefficient, gives
\begin{align}
  \lVert T_{s\to s'}\rVert_{\mathrm{HS}}^2 &= \sum_\ell d_\ell\lvert c_\ell\rvert^2, &
  \lVert T_{s\to G}\rVert_{\mathrm{HS}}^2 &= \sum_\ell d_\ell a_\ell\lVert u^\ell\rVert_2^2, \notag\\
  \lVert T_{G\to s'}\rVert_{\mathrm{HS}}^2 &= \sum_\ell \frac{d_\ell}{a_\ell}\lVert v^\ell\rVert_2^2, &
  \lVert T_{G\to G}\rVert_{\mathrm{HS}}^2 &= \sum_\ell d_\ell\lVert B^\ell\rVert_{\mathrm{F}}^2 .
  \label{eq:classification-hs}
\end{align}
Each sum is over the degrees present in both spaces.  By Parseval's identity each is $8\pi^2\lVert\Psi_T\rVert^2$ for the canonical kernel with unread entries set to zero, so a square integrable representative exists exactly when the map is Hilbert--Schmidt.  Adding unread coefficients cannot remove a divergence in the canonical part.
\end{proof}

For completeness, the unbounded extension \eqref{eq:multiplier-domain} is dense because it contains all finite harmonic sums.  If $f_j\to f$ and $Tf_j\to h$ in the respective $L^2$ spaces, convergence of every coefficient gives $h_{\ell m}=c_\ell f_{\ell m}$; since $h\in L^2$, $f\in\mathcal D(T)$ and $Tf=h$, proving closedness.  Rotations preserve each degree norm and hence the domain, and commute with its scalar multiplier.  Polynomial growth gives a distributional kernel by the rapid decay of smooth test-function coefficients; arbitrary growth need not do so.  The first- and second-order Sobolev assertions follow from the respective growth rates $\lvert c_\ell\rvert\asymp\ell$ and $\ell^2$, with finitely many low modes immaterial to the domain.

\begin{proof}[Proof of Corollary~\ref{cor:projection-general}]
This is Theorem~\ref{thm:classification}(c), the row and column cases.  By \eqref{eq:lift-harmonic}, $\proj{s'}$ reads the column $n = -s'$ with the constant $(-1)^{s'}\sqrt{(2\ell+1)/8\pi^2}$, which is \eqref{eq:harmonic-Gs} with $\Psi^{\ell *}_{-s',m'} = \delta_{m',-s'}$, and $\lift{s}$ is \eqref{eq:harmonic-sG} with $\Psi^{\ell *}_{n,-s} = \delta_{n,-s}$.
\end{proof}

\begin{proof}[Proof of Corollary~\ref{cor:nogo}]
By \eqref{eq:classification},
\begin{align}
  (Tf)(x) &= \sum_\ell c_\ell\sum_m\sw{s'}{Y}_{\ell m}(x)\,\langle f,\sw{s}{Y}_{\ell m}\rangle = \int_{\sphere}K(x,y)\,f(y)\dmu(y), \notag\\
  K(x,y) &= \sum_\ell c_\ell\sum_m\sw{s'}{Y}_{\ell m}(x)\,\sw{s}{Y}^{*}_{\ell m}(y) ,
\end{align}
the series converging in the sense of distributions.  By \eqref{eq:sY-D} at $\gamma = 0$, $\sw{s'}{Y}_{\ell m}(x) = (-1)^{s'}\sqrt{(2\ell+1)/4\pi}\,D^{\ell *}_{m,-s'}(s(x))$ and $\sw{s}{Y}^{*}_{\ell m}(y) = (-1)^{s}\sqrt{(2\ell+1)/4\pi}\,D^{\ell}_{m,-s}(s(y))$, so that, by $D^{\ell *}_{mn}(g) = D^{\ell}_{nm}(g^{-1})$ and the representation property,
\begin{align}
  \sum_m\sw{s'}{Y}_{\ell m}(x)\,\sw{s}{Y}^{*}_{\ell m}(y) &= (-1)^{s+s'}\frac{2\ell+1}{4\pi}\sum_m D^{\ell}_{-s',m}\big(s(x)^{-1}\big)\,D^{\ell}_{m,-s}\big(s(y)\big) \notag\\
  &= (-1)^{s+s'}\frac{2\ell+1}{4\pi}\,D^{\ell}_{-s',-s}\big(s(x)^{-1}s(y)\big) ,
\end{align}
which is the addition theorem for spin spherical harmonics and gives \eqref{eq:nogo-kernel}.  The rotation $s(x)^{-1}s(y)$ sends the North pole to $s(x)^{-1}y$, the point $y$ expressed in the frame of $x$, whose colatitude is the geodesic distance $\beta_{xy}$ between $x$ and $y$; writing its Euler angles as $(\alpha_{xy},\beta_{xy},\gamma_{xy})$, the kernel reads $D^{\ell}_{-s',-s}(\alpha_{xy},\beta_{xy},\gamma_{xy}) = e^{\ii s'\alpha_{xy}}\,d^{\ell}_{-s',-s}(\beta_{xy})\,e^{\ii s\gamma_{xy}}$, in which only the $d$-function depends on the distance and the two exponentials are the phases of a kernel with output spin $s'$ in $x$ and dual input spin $-s$ in $y$: under a frame change, $K'(x,y)=e^{-\ii s'\chi(x)}K(x,y)e^{+\ii s\chi(y)}$ by \eqref{eq:section-change}.  For $s = s' = 0$, $d^{\ell}_{00}(\beta) = P_\ell(\cos\beta)$ and $\cos\beta_{xy} = x\cdot y$.
\end{proof}

\subsection{Reductions}
\label{app:proofs:reductions}

\begin{proof}[Proof of Proposition~\ref{prop:directional}]
By Definition~\ref{def:group-conv}, $(\lift{s}\psi\conv\lift{s}f)(g) = \langle\lift{s}f, \piG(g)\lift{s}\psi\rangle = \langle\lift{s}f, \lift{s}\pis{s}(g)\psi\rangle = \langle f, \pis{s}(g)\psi\rangle$, by Lemma~\ref{lem:lift}(e) and the isometry of the lift, and $\pis{s}(g) = \Rot_g$ by Proposition~\ref{prop:intertwining}, so that this is \eqref{eq:spin-directional}.  In harmonic space, \eqref{eq:harmonic-sG} with $\Psi = \lift{s}\psi$ and \eqref{eq:lift-harmonic} give $(-1)^s\sqrt{8\pi^2/(2\ell+1)}\;\sw{s}{f}_{\ell m}\;\big[(-1)^s\sqrt{8\pi^2/(2\ell+1)}\;\sw{s}{\psi}_{\ell n}\big]^{*} = \frac{8\pi^2}{2\ell+1}\,\sw{s}{f}_{\ell m}\,\sw{s}{\psi}^{*}_{\ell n}$.
\end{proof}

\begin{proof}[Proof of Proposition~\ref{prop:synthesis}]
Since $\pis{s}(\rho)\psi = \proj{s}\piG(\rho)\lift{s}\psi$ and $\proj{s}$ is bounded,
\begin{equation}
  \int_{\SOthree} W(\rho)\,\pis{s}(\rho)\psi\dmu(\rho) = \proj{s}\int_{\SOthree} W(\rho)\,\piG(\rho)\Phi\dmu(\rho),
\end{equation}
where $\Phi = \lift{s}\psi$ and
the integrals are Bochner integrals in $\Lrot$.  Evaluating at $g$ and noting that $\widetilde\Phi^{*}(g^{-1}\rho) = \Phi(\rho^{-1}g)$,
\begin{equation}
  \int_{\SOthree} W(\rho)\,\Phi(\rho^{-1}g)\dmu(\rho) = \int_{\SOthree}\widetilde\Phi^{*}(g^{-1}\rho)\,W(\rho)\dmu(\rho) = (\widetilde\Phi\conv W)(g) ,
\end{equation}
so that the synthesis is $\Cop{G}{s}[\widetilde\Phi]W$, which is $(\Cop{s}{G}[\Phi])^{\dagger}W$ by Theorem~\ref{thm:harmonic}.  In harmonic space, $\widetilde\Phi^{\ell}_{-s,m'} = \Phi^{\ell *}_{m',-s} = (-1)^s\sqrt{8\pi^2/(2\ell+1)}\,\sw{s}{\psi}^{*}_{\ell m'}$ by \eqref{eq:lift-harmonic}, so that \eqref{eq:harmonic-Gs} gives
\begin{equation}
  (\Cop{G}{s}[\widetilde\Phi]W)_{\ell m} = (-1)^s\sqrt{\tfrac{2\ell+1}{8\pi^2}}\sum_{m'}W^{\ell}_{mm'}\,(-1)^s\sqrt{\tfrac{8\pi^2}{2\ell+1}}\,\sw{s}{\psi}_{\ell m'} = \sum_{m'}W^{\ell}_{mm'}\,\sw{s}{\psi}_{\ell m'} ,
\end{equation}
the constants cancelling.
\end{proof}

\begin{proof}[Proof of Proposition~\ref{prop:axisymmetric}]
By Proposition~\ref{prop:directional} evaluated at $g = s(\omega) = R_\omega$, $\langle f,\Rot_\omega\psi\rangle = (\Cop{0}{G}[\lift{0}\psi]f)(s(\omega))$.  For an axisymmetric kernel, $\psi_{\ell m} = \psi_{\ell 0}\delta_{m0}$, the kernel $\lift{0}\psi$ has, by \eqref{eq:lift-harmonic}, Wigner support at the single entry $(0,0)$ of each degree, so that it satisfies condition (ii) of Corollary~\ref{cor:constraint} with $s = s' = 0$, and hence $\Cop{0}{G}[\lift{0}\psi]f \in \Mackey{0}$; Lemma~\ref{lem:lift}(b) then gives $(\Cop{0}{G}[\lift{0}\psi]f)\circ s = (2\pi)^{-1/2}\proj{0}\Cop{0}{G}[\lift{0}\psi]f = (2\pi)^{-1/2}\Cop{0}{0}[\lift{0}\psi]f$.  The harmonic form follows from \eqref{eq:harmonic-ss-lifted} with $s = s' = 0$, divided by $\sqrt{2\pi}$: $(2\pi)^{-1/2}\sqrt{8\pi^2/(2\ell+1)} = \sqrt{4\pi/(2\ell+1)}$.  The spin case is identical with $s' = 0$: the kernel $\lift{s}\Phi$ with $\sw{s}{\Phi}_{\ell m} = \sw{s}{\Phi}_{\ell 0}\delta_{m0}$ satisfies (ii) with target spin $0$, and \eqref{eq:harmonic-ss-lifted} gives \eqref{eq:spin-axisymmetric-harmonic}.  If $\psi$ is not axisymmetric the constraint fails and the section evaluation is the standard correlation, treated in Proposition~\ref{prop:section-evaluation}.
\end{proof}

\begin{proof}[Proof of Proposition~\ref{prop:driscoll-healy}]
With $\Psi_h(g) = h^{*}(g^{-1}\eta)$ and $k = g^{-1}g'$, $\Psi^{*}_h(g^{-1}g') = h(k^{-1}\eta) = h(g'^{-1}g\eta)$, so that by Definition~\ref{def:group-conv}
\begin{equation}
  (\Psi_h\conv\lift{0}f)(g) = \frac{1}{\sqrt{2\pi}}\int_{\SOthree}h(g'^{-1}g\eta)\,f(g'\eta)\dmu(g') .
\end{equation}
Applying $\proj{0}$ at the point $\omega = (\beta,\alpha)$ integrates over $\gamma$ with $g = R(\alpha,\beta,\gamma)$, and since $g\eta = \omega$ does not depend on $\gamma$ the integrand is constant, whence $(\Cop{0}{0}[\Psi_h]f)(\omega) = \frac{1}{2\pi}\cdot 2\pi\int h(g'^{-1}\omega)f(g'\eta)\dmu(g') = (f\star_{\mathrm{L}}h)(\omega)$, exactly.  For the constant, only the entry $(\Psi_h)^{\ell}_{00}$ is read by \eqref{eq:harmonic-ss}.  Writing $g = R(\alpha,\beta,\gamma)$, the point $g^{-1}\eta = R_z(-\gamma)R_y(-\beta)\eta$ has colatitude $\beta$ and longitude $\pi - \gamma$, and $D^{\ell}_{00}(g) = d^{\ell}_{00}(\beta) = P_\ell(\cos\beta) = \sqrt{4\pi/(2\ell+1)}\,Y_{\ell 0}(\beta,\cdot)$, so that
\begin{align}
  (\Psi_h)^{\ell}_{00} &= \int_{\SOthree}h^{*}(\beta,\pi-\gamma)\,\sqrt{\tfrac{4\pi}{2\ell+1}}\,Y_{\ell 0}(\beta)\,\sin\beta\dd\alpha\dd\beta\dd\gamma \notag\\
  &= 2\pi\sqrt{\tfrac{4\pi}{2\ell+1}}\int_{\sphere}h^{*}(\omega')\,Y_{\ell 0}(\omega')\dmu(\omega')
  = 2\pi\sqrt{\tfrac{4\pi}{2\ell+1}}\;h^{*}_{\ell 0} ,
\end{align}
where the integral over $\alpha$ gives $2\pi$ and the substitution $\varphi' = \pi - \gamma$ runs over a full period.  Hence \eqref{eq:harmonic-ss} with $s = s' = 0$ gives $(f\star_{\mathrm{L}}h)_{\ell m} = f_{\ell m}\,(\Psi_h)^{\ell *}_{00} = 2\pi\sqrt{4\pi/(2\ell+1)}\,f_{\ell m}h_{\ell 0}$, which is \eqref{eq:driscoll-healy-harmonic}, with no conjugate on $h$ since the pairing \eqref{eq:driscoll-healy} is bilinear.  Finally, the marginalisation over $\gamma$ of the bilinear directional convolution $\int f(\omega)(\Rot_\rho h)(\omega)\dmu(\omega)$, which is a function on $\SOthree$, is by definition $\sqrt{2\pi}$ times its projection $\proj{0}$.
\end{proof}

\begin{proof}[Proof of Proposition~\ref{prop:band-limited}]
By \eqref{eq:harmonic-sG} the Wigner coefficients of $\Cop{s}{G}[\Psi]f$ vanish for $\lvert n\rvert \ge N$ when the column $m' = -s$ of $\Psi$ does.  Corollary~\ref{cor:fibre-decomposition} applied to the output gives $\Cop{s}{G}[\Psi]f = \sum_{\lvert s'\rvert<N}\lift{s'}\proj{s'}\Cop{s}{G}[\Psi]f$, and $\proj{s'}\Cop{s}{G}[\Psi] = \Cop{s}{s'}[\Psi]$ by Definition~\ref{def:lifted-conv}.  Conversely, given at most $2N-1$ bounded multipliers $c^{(s')}_\ell$, the canonical column $\Psi^{\ell}_{-s',-s}=(-1)^{s+s'}c^{(s')*}_{\ell}$ defines a distribution and assembles the tuple by the same identity.  The norm bound is $\sup_\ell(\sum_{s'}\lvert c^{(s')}_\ell\rvert^2)^{1/2}<\infty$.  By Parseval, the kernel belongs to $L^2$ exactly when $\sum_{\ell,s'}(2\ell+1)\lvert c^{(s')}_\ell\rvert^2<\infty$, the sum of the component Hilbert--Schmidt norms squared.  This condition holds automatically at finite band-limit.
\end{proof}

\begin{proof}[Proof of Proposition~\ref{prop:physics}(c)]
Wallis \etal\ \cite{wallis2022mapping} write the shear as
\begin{equation}
  \sw{2}{\gamma}(\omega) = \int_{\sphere}(\Rot_{\omega'}\sw{2}{K})(\omega)\,\kappa(\omega')\dmu(\omega'),
\end{equation}
with an axisymmetric spin-$2$ kernel, $\sw{2}{K}_{\ell m} = \sw{2}{K}_{\ell 0}\delta_{m0}$.  By \eqref{eq:rotation-coeffs} and \eqref{eq:D-sY},
\begin{equation}
  (\Rot_{\omega'}\sw{2}{K})_{\ell m} = D^{\ell}_{m0}(\varphi',\theta',0)\,\sw{2}{K}_{\ell 0} = \sqrt{\tfrac{4\pi}{2\ell+1}}\,Y^{*}_{\ell m}(\omega')\,\sw{2}{K}_{\ell 0} ,
\end{equation}
and integrating against $\kappa$ gives $\sw{2}{\gamma}_{\ell m} = \sqrt{4\pi/(2\ell+1)}\,\sw{2}{K}_{\ell 0}\,\kappa_{\ell m}$, a multiplier from spin $0$ to spin $2$; it is $\Cop{0}{2}[\Psi_T]$ with $c_\ell = \sqrt{4\pi/(2\ell+1)}\,\sw{2}{K}_{\ell 0}$ by Theorem~\ref{thm:classification}, and its kernel is \eqref{eq:nogo-kernel} with $(s,s') = (0,2)$: the integral is the adjoint synthesis of the spin-$2$ to scalar axisymmetric analysis of Proposition~\ref{prop:axisymmetric}, not that forward analysis.  The choice $\sw{2}{K}_{\ell 0} = \sqrt{(2\ell+1)/4\pi}\,\mathcal{D}_\ell$ gives $c_\ell = \mathcal{D}_\ell$, which is the spherical Kaiser--Squires relation.
\end{proof}

\subsection{Beyond the framework}
\label{app:proofs:beyond}

\begin{proof}[Proof of Proposition~\ref{prop:section-evaluation}]
Let $F=\Psi\conv\lift{s}f$.  It is continuous, and $\lVert F\rVert_\infty\le\lVert\Psi\rVert_2\lVert f\rVert_2$ by Cauchy--Schwarz and the isometry of the lift, so section evaluation is a bounded map into the coordinate $L^2$ space.  The explicit support hypothesis $\Psi^\ell_{n,-s}=0$ for $\lvert n\rvert\ge N$ makes its fibre decomposition a finite sum.  Evaluating each term at $\sigma(x)=s(x)R_z(\chi_\sigma(x))$ gives \eqref{eq:section-evaluation}, since $\proj{q}F=\Cop{s}{q}[\Psi]f$.

Suppose $E_\sigma$ is equivariant into spin $t$ with the fixed coordinate-frame action.  By Theorem~\ref{thm:classification} it acts on degree $\ell$ by a scalar $\lambda_\ell$, or annihilates that degree if $\ell<\lvert t\rvert$.  Put $a_{\ell q}=(-1)^{s+q}\Psi^{\ell *}_{-q,-s}/\sqrt{2\pi}$.  Testing on every $\sw{s}{Y}_{\ell m}$ gives, almost everywhere,
\begin{equation}
  \sum_{\substack{\lvert q\rvert<N\\\lvert q\rvert\le\ell}}
  a_{\ell q}e^{-\ii q\chi_\sigma(x)}\sw{q}{Y}_{\ell m}(x)
  =\lambda_\ell\sw{t}{Y}_{\ell m}(x),
  \label{eq:section-evaluation-test}
\end{equation}
with the right-hand side interpreted as zero when $\ell<\lvert t\rvert$.  For a fixed $x$, the vectors $(\sw{q}{Y}_{\ell m}(x))_{m=-\ell}^{\ell}$ for distinct admissible $q$ are nonzero multiples of distinct columns of a unitary Wigner matrix, by \eqref{eq:app-sY-d}.  They are therefore independent.  Since there are countably many tests, a common full-measure set may be used.  Comparing columns in \eqref{eq:section-evaluation-test} forces $a_{\ell q}=0$ for $q\ne t$ and all components to vanish when the target degree is absent.  If some $a_{\ell t}\ne0$, its column gives $e^{-\ii t\chi_\sigma(x)}=\lambda_\ell/a_{\ell t}$ almost everywhere.  Thus the phase is constant.  If every coefficient vanishes, $E_\sigma=0$, which is equivariant into every target spin independently of the section.  Conversely, one active output spin and a constant phase give a scalar multiple of $\Cop{s}{t}$ and hence equivariance.

For polar rotations, $R_z(a)s(x)=s(R_z(a)x)$ away from the poles, so all spins act by coordinate rotation.  The phase for $\sigma_0$ is constant.  For $\chi_\sigma=\pi-\varphi$, one has $\chi_\sigma(R_z(-a)x)=\chi_\sigma(x)+a$ modulo $2\pi$.  Testing harmonics and comparing the same independent columns shows that commutation is equivalent to $a_{\ell q}(1-e^{-\ii qa})=0$ for every $\ell,q$.  This is $e^{\ii qa}=1$ for every active spin $q$, including the vacuous condition for the zero map; it makes no assertion about non-polar symmetries.
\end{proof}

\begin{proof}[Proof of Proposition~\ref{prop:diagonal}]
For a rotation $R_z(a)$ about the polar axis the twist vanishes, so that $(\pis{0}(R_z(a))f)_{\ell m} = e^{-\ii ma}f_{\ell m}$ by \eqref{eq:rotation-coeffs} and \eqref{eq:wignerD}; a diagonal multiplier commutes with this action for every $g_{\ell m}$.  Equivariance under all of $\SOthree$ is, by Theorem~\ref{thm:classification} with $s = s' = 0$, the condition that $T$ act by a scalar on each degree, \ie\ that $g_{\ell m}$ be independent of $m$, in which case $T$ is the axisymmetric convolution $\Cop{0}{0}[\Psi_T]$ with $c_\ell = g_{\ell 0}$, understood in the distributional multiplier sense.  Moreover, $D^\ell(R_y(\pi))$ exchanges $m$ and $-m$ up to the nonzero sign in \eqref{eq:app-d-poles}, so it commutes with the diagonal multiplier when $g_{\ell m}=g_{\ell,-m}$.  Together with the polar rotations this gives the embedded $O(2)$ normaliser.
\end{proof}

\subsection{Coordinate-section standard correlation}
\label{app:proofs:standard-correlation}

The factorisation-of-rotations method \cite{wandelt2001fast,mcewen2007fast,mcewen2013directional} gives an $\mathcal{O}(L^3)$ evaluation of standard correlation on an equiangular coordinate grid.  Set $\Delta^\ell_{km}=d^\ell_{km}(\pi/2)$.  In the conventions of \eqref{eq:wignerD},
\begin{equation}
  d^\ell_{mn}(\theta)=\ii^{m-n}\sum_{k=-\ell}^{\ell}
  \Delta^\ell_{km}\Delta^\ell_{kn}e^{-\ii k\theta} .
  \label{eq:standard-factorisation}
\end{equation}
Substituting this identity in \eqref{eq:directional-harmonic} at $\gamma=0$ and contracting over $n$ before summing over $\ell$ gives
\begin{align}
  G^\ell_k &= \sum_{n=-\ell}^{\ell}\ii^{-n}\Delta^\ell_{kn}\psi^{*}_{\ell n}, \notag\\
  H_{mk} &= \sum_{\ell=\max(\lvert m\rvert,\lvert k\rvert)}^{L-1}
  \ii^m\Delta^\ell_{km}f_{\ell m}G^\ell_k,
  \label{eq:standard-contractions}\\
  \langle f,\Rot_{(\varphi,\theta,0)}\psi\rangle
  &=\sum_{m,k=-(L-1)}^{L-1}H_{mk}e^{\ii m\varphi-\ii k\theta} .
  \label{eq:standard-grid}
\end{align}
Each contraction costs $\mathcal{O}(L^3)$, and a two-dimensional FFT evaluates the last sum on a grid with $\mathcal{O}(L^2)$ points in $\mathcal{O}(L^2\log L)$.  The $\Delta$ blocks may be generated and discarded degree by degree, leaving $\mathcal{O}(L^2)$ working storage; retaining every block instead costs $\mathcal{O}(L^3)$.  These are values of a finite bivariate Fourier sum on the coordinate section, not scalar spherical harmonic coefficients of band-limit $L$.  Indeed $D^1_{01}(\varphi,\theta,0)\propto\sin\theta$ has an infinite scalar harmonic expansion, since an axisymmetric finite harmonic sum is a polynomial in $\cos\theta$.

\bibliography{refs}

\end{document}